\setlist{noitemsep,leftmargin=\parindent,topsep=2pt}
\newcommand*{\inlineequation}[2][]{%
  \begingroup
    \refstepcounter{equation}%
    \ifx\\#1\\%
    \else
      \label{#1}%
    \fi
    \relpenalty=10000 %
    \binoppenalty=10000 %
    \ensuremath{%
      #2%
    }%
    ~\@eqnnum
  \endgroup
}
\newcommand{\kibitz}[2]{\ifnum\Comments=1{\color{#1}{#2}}\fi}
\newcommand{\E}{\mathbb{E}}
\DeclareMathOperator*{\argmax}{arg\,max}
\DeclareMathOperator*{\argsup}{arg\,sup}
\newcommand{\1}{\mathbbm{1}}
\newcommand{\eps}{\epsilon}
\newcommand{\B}{\mathcal{B}}
\newcommand{\mbf}{\mathbf}
\newcommand{\calN}{\mathcal{N}}
\newcommand{\winexp}{\textsc{WIN-EXP}}
\newcommand{\winexpG}{\textsc{WIN-EXP-G}}
\renewcommand{\Pr}{\ensuremath{\mathrm{Pr}}}
\newcommand{\opt}{\ensuremath{\textsc{OPT}}}
\theoremstyle{plain}
\newtheorem{theorem}{Theorem}[section]
\newtheorem{corollary}{Corollary}[theorem]
\newtheorem{lemma}[theorem]{Lemma}
\newtheorem{example}[theorem]{Example}
\newtheorem{comment}[theorem]{Comment}
\newtheorem{definition}[theorem]{Definition}
\begin{document}

\title{Learning to Bid Without Knowing your Value}

\author{
	Zhe Feng\\
	Harvard University\\
	\texttt{zhe\_feng@g.harvard.edu}\\
	\and
	Chara Podimata\\
	Harvard University\\
	\texttt{podimata@g.harvard.edu}\\
	\and  
	Vasilis Syrgkanis\\
	Microsoft Research\\
	\texttt{vasy@microsoft.com}}

\date{\today}
\maketitle

\begin{abstract}
We address online learning in complex auction settings, such as sponsored search auctions, where the value of the bidder is unknown to her, evolving in an arbitrary manner and observed only if the bidder wins an allocation. We leverage the structure of the utility of the bidder and the partial feedback that bidders typically receive in auctions, in order to provide algorithms with regret rates against the best fixed bid in hindsight, that are \emph{exponentially faster} in convergence in terms of dependence on the action space, than what would have been derived by applying a generic bandit algorithm and almost equivalent to what would have been achieved in the full information setting. Our results are enabled by analyzing a new online learning setting with outcome-based feedback, which generalizes learning with feedback graphs. We provide an online learning algorithm for this setting, of independent interest, with regret that grows only logarithmically with the number of actions and linearly only in the number of potential outcomes (the latter being very small in most auction settings). Last but not least, we show that our algorithm outperforms the bandit approach experimentally\footnote{Our code is publicly available on \href{https://github.com/zfengharvard/bandit-sponsored-search}{github}.} and that this performance is robust to dropping some of our theoretical assumptions or introducing noise in the feedback that the bidder receives.
\end{abstract}

\maketitle

\section{Introduction}
A standard assumption in the majority of the literature on auction theory and mechanism design is that participants that arrive in the market have a clear assessment of their valuation for the goods at sale. This assumption might seem acceptable in small markets with infrequent auction occurrences and amplitude of time for participants to do market research on the goods. However, it is an assumption that is severely violated in the context of the digital economy. 

In settings like online advertisement auctions or eBay auctions, bidders participate very frequently in auctions that they have very little knowledge about the good at sale, e.g. the value produced by a user clicking on an ad. It is unreasonable, therefore, to believe that the participant has a clear picture of this value. However, the inability to pre-assess the value of the good before arriving to the market is alleviated by the fact that due to the large volume of auctions in the digital economy, participants can employ \emph{learning-by-doing} approaches.

In this paper we address exactly the question of \emph{how would you learn to bid approximately optimally in a repeated auction setting where you do not know your value for the good at sale and where that value could potentially be changing over time}. The setting of learning in auctions with an unknown value poses an interesting interplay between exploration and exploitation that is not standard in the online learning literature: in order for the bidder to get feedback on her value she has to bid high enough to win the good with higher probability %
and hence, receive some information about that underlying value. However, the latter requires paying a higher price. Thus, there is an inherent trade-off between value-learning and cost. The main point of this paper is to address the problem of learning how to bid in such unknown valuation settings with partial \emph{win-only feedback}, so as to minimize the regret with respect to the best fixed bid in hindsight.

On one extreme, one can treat the problem as a Multi-Armed Bandit (MAB) problem, where each possible bid that the bidder could submit (e.g. any multiple of a cent between $0$ and some upper bound on her value) is treated as an arm. Then, standard MAB algorithms (see e.g. \citep{BCB2012}) can achieve regret rates that scale \emph{linearly} with the number of such discrete bids. The latter can be very slow and does not leverage the structure of utilities and the form of partial feedback that arises in online auction markets. Recently, the authors in \citep{WRP16} addressed learning with such type of partial feedback in the context of repeated single-item second-price auctions. However, their approach does not address more complex auctions and is tailored to the second-price auction.

\medskip
\noindent
{\bf Our Contributions.} %
Our first main contribution is to introduce a novel online learning setting with partial feedback, which we denote \emph{learning with outcome-based feedback} and which could be of independent interest\cpcomment{since this is the common push back that we get, should we somehow stress that the "partiality" of the feedback stems from the probabilistic outcome and not by the info that we get once we choose an action?"}. We show that our setting captures online learning in many repeated auction scenarios including all types of single-item auctions, value-per-click sponsored search auctions, value-per-impression sponsored search auctions and multi-item auctions.

Our setting generalizes the setting of learning with feedback graphs \cite{MS11,ACGM13}, in a way that is crucial for applying it to the auction settings of interest. At a high level, the setting is defined as follows: The learner chooses an action $b\in B$ (e.g. a bid in an auction). The adversary chooses an \emph{allocation function} $x_t$, that maps an action to a distribution over a set of potential outcomes $O$ (e.g. the probability of getting a click) and a \emph{reward function} $r_t$ that maps an action-outcome pair to a reward (utility conditional on getting a click with a bid of $b$). Then, an outcome $o_t$ is chosen based on distribution $x_t(b)$ and a reward $r_t(b,o_t)$ is observed. The learner also gets to observe the function $x_t$ and the reward function $r_t(\cdot, o_t)$ for the realized outcome $o_t$ (i.e. in our auction setting: she learns the probability of a click, the expected payment as a function of her bid and, \emph{if she gets clicked}, her value). 

Our second main contribution is an algorithm which we call $\winexp$, which achieves regret $O\left(\sqrt{T|O|\log(|B|)}\right)$. The latter is inherently better than the generic multi-armed bandit regret of $O\left(\sqrt{T|B|}\right)$, since in most of our applications $|O|$ will be a small constant (e.g. $|O|=2$ in sponsored search) and takes advantage of the particular feedback structure. Our algorithm is a variant of the EXP3 algorithm \cite{ACBFS02}, with a carefully crafted unbiased estimate of the utility of each action, which has lower variance than the unbiased estimate used in the standard EXP3 algorithm. This result could also be of independent interest and applicable beyond learning in auction settings. Our approach is similar to the importance weighted sampling approach used in EXP3 so as to construct unbiased estimates of the utility of each possible action. Our main technical insight is how to incorporate the allocation function feedback that the bidder receives to construct unbiased estimates with small variance, leading to dependence only in the number of outcomes and not the number of actions. As we discuss in the related work, despite the several similarities, our setting has differences with existing partial feedback online learning settings, such as learning with experts \cite{ACBFS02}, learning with feedback graphs \cite{MS11,ACGM13} and contextual bandits \cite{Agarwal14}.

This setting engulfs learning in many auctions of interest where bidders learn their value for a good only when they win the good and where the good which is allocated to the bidder is determined by some randomized allocation function. For instance, when applied to the case of single-item first-price, second-price or all-pay auctions, our setting corresponds to the case where the bidders observe their value for the item auctioned at each iteration only when they win the item. Moreover, after every iteration, they observe the critical bid they would have needed to submit to win (for instance, by observing the bids of others or the clearing price). The latter is typically the case in most government auctions or in auction settings similar to eBay.

Our flagship application is that of \emph{value-per-click sponsored search auctions}. These are auctions were bidders repeatedly bid in an auction for a slot in a keyword impression on a search engine. The complexity of the sponsored search ecosystem and the large volume of repeated auctions has given rise to a plethora of automated bidding tools (see e.g. \cite{PPCTool}) and has made sponsored search an interesting arena for automated learning agents. Our framework captures the fact that in this setting the bidders observe their value for a click only when they get clicked. Moreover, it assumes that the bidders also observe the average probability of click and the average cost per click for any bid they could have submitted. The latter is exactly the type of feedback that the automated bidding tools can receive via the use of \emph{bid simulators} offered by both major search engines \cite{bidsim1,bidsim2,bidsim3,bidsim4}. In Figure \ref{fig:curves} we portray example interfaces from these tools, where we see that the bidders can observe exactly these allocation and payment curves assumed by our outcome-based-feedback formulation. Not using this information seems unreasonable and a waste of available information. Our work shows how one can utilize this partial feedback given by the auction systems to provide improved learning guarantees over what would have been achieved if one took a fully bandit approach. In the experimental section, we also show that our approach outperforms that of the bandit one even if the allocation and payment curves provided by the system have some error that could stem from errors in the machine learning models used in the calculation of these curves by the search engines. Hence, even when these curves are not fully reliable, our approach can offer improvements in the learning rate.
\begin{figure}[htpb]
\begin{subfigure}{.45\textwidth}
\includegraphics[scale=.25]{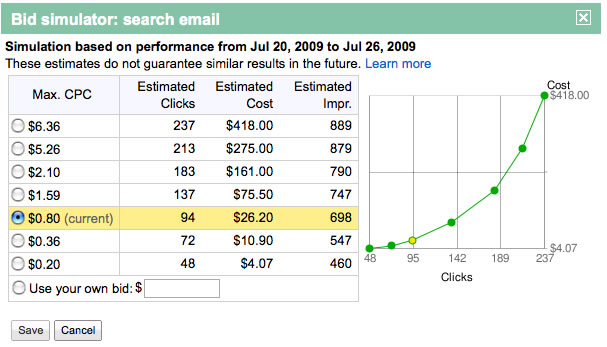}
\end{subfigure}
~~
\begin{subfigure}{.45\textwidth}
\includegraphics[scale=.41]{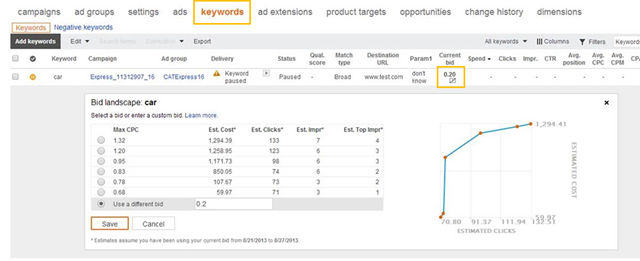}
\end{subfigure}
\captionsetup{justification=centering}
\caption{Example interfaces of bid simulators of two major search engines, Google Adwords (left) and BingAds (right), that enables learning the allocation and the payment function. (sources \cite{GoogleImage,MicrosoftImage})}\label{fig:curves}
\vspace{-8pt}
\end{figure}

We also extend our results to cases where the space of actions is a continuum (e.g. all bids in an interval $[0,1]$). We show that in many auction settings, under appropriate assumptions on the utility functions, a regret of $O\left(\sqrt{T\log(T)}\right)$ can be achieved by simply discretizing the action space to a sufficiently small uniform grid and running our $\winexp$ algorithm. This result encompasses the results of \cite{WRP16} for second price auctions, learning in first-price and all-pay auctions, as well as learning in sponsored search with smoothness assumptions on the utility function. We also show how smoothness of the utility can easily arise due to the inherent randomness that exists in the mechanism run in sponsored search.

Finally, we provide two further extensions: \emph{switching regret} and \emph{feedback-graphs over outcomes}. The former adapts our algorithm to achieve good regret against a sequence of bids rather than a fixed bid, which has implications on the faster convergence to approximate efficiency of the outcome (price of anarchy). Feedback graphs address the idea that in many cases the learner could be receiving information about other items other than the item he won (through correlations in the values for these items). This essentially corresponds to adding a feedback graph over outcomes and when outcome $o_t$ is chosen, then the learner learns the reward function $r_t(\cdot, o)$ for all neighboring outcomes $o$ in the feedback graph. We provide improved results that mainly depend on the dependence number of the graph rather than the number of possible outcomes.

\medskip
\noindent
{\bf Related Work.} Our work lies on the intersection of two main areas: No regret learning in Game Theory and Mechanism Design and Contextual Bandits.

\vspace{5pt}
\noindent
\emph{No regret learning in Game Theory and Mechanism Design.} No regret learning has received a lot of attention in the Game Theory and Mechanism Design literature \citep{CHN14}. Most of the existing literature, however, focuses on the problem from the side of the auctioneer, who tries to maximize revenue through repeated rounds without knowing a priori the valuations of the bidders \citep{ACDKR15,ARS14,BKRW04,BMM15,CBGM13,CR15,DRY15,KN14,MM14,OS11,MV17,Feldman2016,Koren17a}.
These works are centered around different auction formats like the sponsored search ad auctions, the pricing of inventory and the single-item auctions. 
Our work is mostly related to \citet{WRP16}, who adopt the point of view of the bidders in repeated second-price auctions and who also analyze the case where the true valuation of the item is revealed to the bidders only when they win the item. Their setting falls into the family of settings for which our novel and generic $\winexp$ algorithm produces good regret bounds and as a result, we are able to fully retrieve the regret that their algorithms yield, up to a tiny increase in the constants. Hence, we give an easier way to recover their results. Closely related to our work are the works of \cite{DT13} and \citep{BG17}. \citet{DT13} analyzes a setting where bidders have to experiment in order to learn their valuations, and show that the seller can increase revenue by offering an initial credit to them, in order to give them incentives to experiment. \citet{BG17} introduce a family of dynamic bidding strategies in repeated second-price auctions, where advertisers adjust their bids throughout the campaign. They analyze both regret minimization and market stability. There are two key differences to our setting; first, Balseiro and Gur consider the case where the goal of the bidders is the expediture rate in a way that guarantees that the available campaign budget will be spent in an optimal \emph{pacing} way and second, because of their target being the expenditure rate at every timestep $t$, they assume that the bidders get information about the value of the slot being auctioned and based on this information they decide how to adjust their bid. Moreover, several works analyze the properties of auctions when bidders adopt a no-regret learning strategy \citep{BHLR08,CKKKLLT12,R09}. None of these works, however, addresses the question of learning more efficiently in the unknown valuation model and either invokes generic MAB algorithms or develops tailored full information algorithms when the bidder knows his value. Another line of research takes a Bayesian approach to learning in repeated auctions and makes large market assumptions, analyzing learning to bid with an unknown value under a Mean Field Equilibrium condition \citep{AJ13,IJS11,BBW15}\footnote{No-regret learning is complementary and orthogonal to the mean field approach, as it does not impose any stationarity assumption on the evolution of valuations of the bidder or the behavior of his opponents.}.

\vspace{5pt}
\noindent
\emph{Learning with partial feedback.} Our work is also related to the literature in \emph{learning with partial feedback} \cite{Agarwal14,BCB2012}. To establish this connection we observe that the \emph{policies} and the \emph{actions} in contextual bandit terminology translate into \emph{discrete bids} and \emph{groups of bids for which we learn the rewards} in our work. The difference between these two is the fact that for each \emph{action} in contextual bandits we get a single reward, whereas for our setting we observe a \emph{group} of rewards; one for each action in the group. Moreover, the fact that we allow for randomized outcomes adds extra complication, non existent in contextual bandits.
In addition, our work is closely related to the literature in \emph{online learning with feedback graphs} \citep{ACDK15,ACGM13,CHK16,MS11}. In fact, we propose a new setting in online learning, namely, \emph{learning with outcome-based feedback}, which is a generalization of learning with feedback graphs and is essential when applied to a variety of auctions which include sponsored search, single-item second-price, single-item first-price and single-item all-pay auctions. Moreover, the fact that the learner only learns the probability of each outcome and not the actual realization of the randomness, is similar in nature to a feedback graph setting, but where the bidder does not observe the whole graph. Rather, she observes a distribution over feedback graphs and for each bid she learns with what probability each feedback graph would arise. For concreteness, consider the case of sponsored search and suppose for now that the bidder gets even more information than what we assume and also observes the bids of her opponents. She still does not observe whether she would get a click if she falls on the slot below but only the probability with which she would get a click in the slot below. If she could observe whether she would still get a click in the slot below, then we could in principle construct a feedback graph that would say that for all bids were the bidder gets a slot her reward is revealed, and for every bid that she does not get a click, her reward is not revealed. However, this is not the structure that we have and essentially this corresponds to the case where the feedback graph is not revealed, as analyzed in \cite{CHK16} and for which no improvement over the full bandit feedback is possible. However, we show that this impossibility is amended by the fact that the learner observes the probability of a click and hence for each possible bid, she observes the probability with which each feedback graph would have happened. This is enough for a low variance unbiased estimate.

\section{Learning in Auctions without Knowing your Value}\label{sec:binary}\label{SEC:BINARY}
For simplicity of exposition, we start with a simple single-dimensional mechanism design setting, but our results extend to multi-dimensional (multi-item) mechanisms, as we will see in Section \ref{sec:2}. Let $n$ be the number of bidders. Each bidder has a value $v_i\in [0, 1]$ \emph{per-unit of a good} and submits a bid $b_i\in B$, where $B$ is a discrete set of bids (e.g. a uniform $\epsilon$-grid of $[0,1]$). Given the bid profile of all bidders, the auction allocates a unit of the good to the bidders. The allocation rule for bidder $i$ is given by  $X_i(b_i, b_{-i})$. Moreover, the mechanism defines a per-unit payment function $P_i(b_i, b_{-i})\in [0,1]$. The overall utility of the bidder is quasi-linear, i.e. $u_i(b_i, b_{-i}) = \left(v_i - P_i(b_i, b_{-i})\right) \cdot X_i(b_i, b_{-i})$.

\paragraph{Online Learning with Partial Feedback.} The bidders participate in this mechanism repeatedly. At each iteration, each bidder has some value $v_{it}$ and submits a bid $b_{it}$. The mechanism has some time-varying allocation function $X_{it}(\cdot, \cdot)$ and payment function $P_{it}(\cdot, \cdot)$. We assume that the bidder does \emph{not} know her value $v_{it}$, nor the bids of her opponents $b_{-i,t}$, nor the allocation and payment functions, \emph{before} submitting a bid. 

At the end of each iteration, she gets an item with probability $X_{it}(b_{it}, b_{-i,t})$ and observes her value $v_{it}$ for the item only when she gets one \footnote{E.g. in sponsored search, the good allocated is the probability of getting clicked, and you only observe your value if you get clicked.}. Moreover, we assume that she gets to observe her allocation and payment functions for that iteration, i.e. she gets to observe two functions $x_{it}(\cdot) = X_{it}(\cdot, b_{-i,t})$ and $p_{it}(\cdot) = P_{it}(\cdot, b_{-i,t})$. Finally, she receives utility $\left(v_{it} - p_{it}(b_{it})\right) \cdot  \1\{\text{item is allocated to her}\}$ or in other words expected utility $u_{it}(b_{it})= \left(v_{it} - p_{it}(b_{it})\right)\cdot x_{it}(b_{it})$. 
Given that we focus on learning from the perspective of a single bidder we will drop the index $i$ from all notation and instead write, $x_t(\cdot)$, $p_t(\cdot)$, $u_t(\cdot)$, $v_t$, etc. The goal of the bidder is to achieve small expected regret with respect to any fixed bid in hindsight: $R(T) = \sup_{b^*\in B} \E\left[\sum_{t=1}^T \left(u_t(b^*) - u_t(b_t)\right)\right] \leq o(T)$. %

\section{Abstraction: Learning with Win-Only Feedback}\label{sec:win-only}\label{SEC:WIN-ONLY}
Let us abstract the learner's problem to a setting that could be of interest beyond auction settings. 
\paragraph{Learning with Win-Only Feedback.} Every day a learner picks an action $b_t$ from a finite  set $B$. The adversary chooses a reward function $r_t: B\rightarrow [-1,1]$ and an allocation function $x_t: B \rightarrow [0,1]$. The learner wins a reward $r_t(b)$ with probability $x_t(b)$. Let $u_t(b)=r_t(b)x_t(b)$ be the learner's expected utility from action $b$. After each iteration, if she won the reward then she learns the whole reward function $r_t(\cdot)$, while she \emph{always} learns the allocation function $x_t(\cdot)$.
\begin{quote}{\em Can the learner achieve regret $O(\sqrt{T \log(|B|)})$ rather than bandit-feedback regret $O(\sqrt{T |B|})$?}
\end{quote}
In the case of the auction learning problem, the reward function $r_t(b)$ takes the parametric form $r_t(b) = v_t - p_t(b)$ and the learner needs to learn $v_t$ and $p_t(\cdot)$ at the end of each iteration, when she wins the item. This is in line with the feedback structure we described in the previous section.

We consider the following adaptation of the EXP3 algorithm with unbiased estimates based on the information received. It is notationally useful throughout the section to denote with $A_t$ the event of \emph{winning a reward at time $t$}. Then, we can write: $\Pr[A_t|b_t=b] = x_t(b)$ and $\Pr[A_t]=\sum_{b\in B}\pi_t(b) x_t(b)$, where with $\pi_t(\cdot)$ we denote the multinomial distribution from which bid $b$ is drawn. With this notation we define our $\winexp$ algorithm in Algorithm \ref{alg:winexp}. We note here that our generic family of the $\winexp$ algorithms can be parametrized by the step-size $\eta$, the estimate of the utility $\tilde{u}_t$ that the learner gets at each round and the feedback structure that she receives.

\vspace{-2pt}
\begin{algorithm}[h]
\begin{algorithmic}
\State Let $\pi_1(b) =\frac{1}{|B|}$ for all $b\in B$ (i.e. the uniform distribution over bids), $\eta = \sqrt{\frac{2\log\left(|B| \right)}{5T}}$
\For{each iteration t}
\State Draw a bid $b_t$ from the multinomial distribution based on $\pi_t(\cdot)$
\State Observe $x_t(\cdot)$ and if reward is won also observe $r_t(\cdot)$
\State Compute estimate of utility:

\State\indent If reward is won $\tilde{u}_t(b) = \frac{(r_t(b) - 1) \Pr[A_t | b_t=b]}{\Pr[A_t]}$; otherwise,  $\tilde{u}_t(b) = -\frac{\Pr[\neg A_t | b_t = b]}{\Pr[\neg A_t]}$.
\State Update $\pi_t(\cdot)$ as in Exponential Weights Update: $\forall b\in B: \pi_{t+1}(b) \propto \pi_{t}(b)\cdot \exp\left\{\eta \cdot \tilde{u}_t(b)\right\}$
\EndFor
\end{algorithmic}
\caption{$\winexp$ algorithm for learning with win-only feedback}\label{alg:winexp}
\end{algorithm}
\paragraph{Bounding the Regret.} 
We first bound the first and second moment of the unbiased estimates built at each iteration in the $\winexp$ algorithm.
\begin{lemma}\label{lem:moments}
At each iteration $t$, for any action $b\in B$, the random variable $\tilde{u}_t(b)$ is an unbiased estimate of the true expected utility $u_t(b)$, i.e.: $\forall b\in B: \E\left[\tilde{u}_t(b)\right] = u_t(b)-1$ and has expected second moment bounded by: $\forall b\in B: \E\left[\left(\tilde{u}_t(b)\right)^2\right]\leq \frac{4\Pr[A_t|b_t=b]}{\Pr[A_t]} + \frac{\Pr[\neg A_t|b_t=b]}{\Pr[\neg A_t]}$.
\end{lemma}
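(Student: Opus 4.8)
The plan is to treat $\tilde{u}_t(b)$, for each fixed action $b$, as a discrete random variable whose value is determined entirely by the \emph{global} win/loss event $A_t$, and then to compute its first two moments directly by conditioning on $A_t$ versus $\neg A_t$. The crucial structural observation is that both quantities appearing in the estimator, $\Pr[A_t\mid b_t=b]=x_t(b)$ and $\Pr[A_t]=\sum_{b'\in B}\pi_t(b')x_t(b')$, are deterministic given the information $\pi_t(\cdot)$ and $x_t(\cdot)$ available at the start of round $t$. Hence, once we fix the coordinate $b$ at which we evaluate the estimate, the only source of randomness in $\tilde{u}_t(b)$ is whether the learner won a reward, and marginally over $b_t\sim\pi_t$ this happens with probability exactly $\Pr[A_t]$. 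So $\tilde{u}_t(b)$ equals $\frac{(r_t(b)-1)\Pr[A_t\mid b_t=b]}{\Pr[A_t]}$ with probability $\Pr[A_t]$, and $-\frac{\Pr[\neg A_t\mid b_t=b]}{\Pr[\neg A_t]}$ with probability $\Pr[\neg A_t]$.

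For unbiasedness, I would take the expectation over this two-point distribution: the $\Pr[A_t]$ factor cancels in the win branch and the $\Pr[\neg A_t]$ factor cancels in the loss branch, leaving $(r_t(b)-1)\Pr[A_t\mid b_t=b]-\Pr[\neg A_t\mid b_t=b]$. Substituting $\Pr[A_t\mid b_t=b]=x_t(b)$ and $\Pr[\neg A_t\mid b_t=b]=1-x_t(b)$ and simplifying collapses this to $r_t(b)x_t(b)-1=u_t(b)-1$, as claimed. The additive $-1$ shift is exactly what the $(r_t(b)-1)$ offset was engineered to produce, and it is harmless since a uniform additive shift of the utilities does not affect the exponential-weights update.

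For the second moment, I would again use the two-point distribution: squaring each branch value and weighting by its probability gives $\frac{(r_t(b)-1)^2\,\Pr[A_t\mid b_t=b]^2}{\Pr[A_t]}+\frac{\Pr[\neg A_t\mid b_t=b]^2}{\Pr[\neg A_t]}$, where one power of $\Pr[A_t]$ (resp.\ $\Pr[\neg A_t]$) survives the cancellation. To reach the stated bound I would apply two crude but adequate estimates: since $r_t(b)\in[-1,1]$ we have $(r_t(b)-1)^2\le 4$, and since conditional probabilities lie in $[0,1]$ we have $\Pr[A_t\mid b_t=b]^2\le\Pr[A_t\mid b_t=b]$ and likewise for $\neg A_t$. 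Combining these yields $\E[(\tilde{u}_t(b))^2]\le \frac{4\Pr[A_t\mid b_t=b]}{\Pr[A_t]}+\frac{\Pr[\neg A_t\mid b_t=b]}{\Pr[\neg A_t]}$.

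I do not expect any serious obstacle here; the computation is elementary once the randomness is correctly identified. The only genuinely subtle point, and the place where a careless argument would go wrong, is recognizing that the branch of the estimator is selected by the global win indicator with marginal probability $\Pr[A_t]$, rather than by anything tied to the fixed evaluation point $b$. The estimator at coordinate $b$ deliberately reuses the known functions $x_t$ and $\pi_t$ instead of the realized bid $b_t$, which is precisely what makes $\Pr[A_t]$ (rather than $\Pr[A_t\mid b_t=b]$) appear in the denominator and drives the variance reduction that lets the final regret scale with $|O|$ rather than $|B|$.
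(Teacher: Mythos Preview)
Your proposal is correct and matches the paper's proof essentially line for line: both compute the expectation and second moment by conditioning on the global event $A_t$ versus $\neg A_t$, cancel one factor of $\Pr[A_t]$ (resp.\ $\Pr[\neg A_t]$) in each branch, and then bound $(r_t(b)-1)^2\le 4$ together with $\Pr[\cdot\mid\cdot]^2\le \Pr[\cdot\mid\cdot]$. Your additional remark that the branch is selected by the marginal probability $\Pr[A_t]$ rather than by anything tied to the evaluation point $b$ is the right way to articulate why the computation goes through, and the paper leaves this implicit.
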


\begin{proof}
Let $A_t$ denote the event that the reward was won. We have:
\begin{align*}
\E\left[\tilde{u}_t(b)\right] &= \E\left[\frac{\left(r_t(b)-1\right)\cdot \Pr[A_t|b_t=b]}{\Pr[A_t]}\1\{A_t\} - \frac{\Pr[\neg A_t|b_t=b]}{\Pr[\neg A_t]}\1\{\neg A_t\}\right]\\
& =(r_t(b) - 1) \Pr[A_t | b_t=b] - \Pr[\neg A_t|b_t=b]\\
& = r_t(b) \Pr[A_t | b_t=b] - 1 = u_t(b) - 1
\end{align*}
Similarly for the second moment:
\begin{align*}
\E\left[\tilde{u}_t(b)^2\right] &= \E\left[\frac{\left(r_t(b)-1\right)^2\cdot \Pr[A_t|b_t=b]^2}{\Pr[A_t]^2}\1\{A_t\} +  \frac{\Pr[\neg A_t|b_t=b]^2}{\Pr[\neg A_t]^2}\1\{\neg A_t\} \right]\\
&= \frac{\left(r_t(b)-1\right)^2\cdot \Pr[A_t|b_t=b]^2}{\Pr[A_t]} +  \frac{\Pr[\neg A_t|b_t=b]^2}{\Pr[\neg A_t]} \leq \frac{4 \Pr[A_t|b_t=b]}{\Pr[A_t]} + \frac{\Pr[\neg A_t|b_t=b]}{\Pr[\neg A_t]}
\end{align*}
where the last inequality holds since $r_t ( \cdot ) \in [-1,1]$ and $x_t(\cdot)\in [0,1]$.
\end{proof}

We are now ready to prove our main theorem:
\begin{theorem}[Regret of $\winexp$]\label{thm:win-only-regr}
The regret of the $\winexp$ algorithm with the aforementioned unbiased estimates and step size $\sqrt{\frac{2\log(|B|)}{5T}}$ is: $4\sqrt{T\log(|B|)}$.
\end{theorem}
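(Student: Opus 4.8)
The plan is to run the standard exponential-weights (potential-function) analysis, but to exploit the specific shape of the estimates $\tilde u_t$ so as to avoid any assumption bounding their magnitude. The crucial structural observation is that \emph{both} branches of the estimate are non-positive: when the reward is won, $r_t(b)-1\le 0$ (since $r_t(\cdot)\in[-1,1]$) and the remaining factors are non-negative, while when the reward is not won the estimate is manifestly negative. Hence $\eta\,\tilde u_t(b)\le 0$ for every $b$ and every $t$, no matter how large $|\tilde u_t(b)|$ is. This is precisely what the $-1$ shift in the definition of $\tilde u_t$ buys us, and it is what licenses applying the pointwise inequality $e^x\le 1+x+\tfrac{x^2}{2}$ (valid for all $x\le 0$) to every term of the weight update.

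First I would track the normalization ratio $W_{t+1}/W_t=\sum_b \pi_t(b)\exp(\eta\,\tilde u_t(b))$ of the (unnormalized) exponential weights. Applying $e^x\le 1+x+x^2/2$ termwise and then $\log(1+z)\le z$, summing over $t$, and lower-bounding $\log(W_{T+1}/W_1)$ by the single term corresponding to a fixed comparator $b^*$, yields the familiar inequality
\[
\sum_{t=1}^T \tilde u_t(b^*) - \sum_{t=1}^T\sum_{b}\pi_t(b)\,\tilde u_t(b) \;\le\; \frac{\log|B|}{\eta} + \frac{\eta}{2}\sum_{t=1}^T\sum_b \pi_t(b)\,\tilde u_t(b)^2 .
\]

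Next I would take expectations and invoke Lemma~\ref{lem:moments}. Conditioning on the history through round $t$, unbiasedness gives $\E[\tilde u_t(b^*)]=u_t(b^*)-1$ and $\E[\sum_b\pi_t(b)\tilde u_t(b)]=\sum_b\pi_t(b)u_t(b)-1=\E[u_t(b_t)]-1$; the two $-1$ shifts cancel, so the left-hand side is in expectation \emph{exactly} the regret $R(T)=\sum_t u_t(b^*)-\E[\sum_t u_t(b_t)]$ (uniformly in $b^*$, so the supremum poses no difficulty). For the quadratic term, the second-moment bound from Lemma~\ref{lem:moments} combined with the total-probability identities $\sum_b\pi_t(b)\Pr[A_t\mid b_t=b]=\Pr[A_t]$ and $\sum_b\pi_t(b)\Pr[\neg A_t\mid b_t=b]=\Pr[\neg A_t]$ collapses $\E[\sum_b\pi_t(b)\tilde u_t(b)^2]$ to the constant $4+1=5$ at every round, hence $\E[\sum_t\sum_b\pi_t(b)\tilde u_t(b)^2]\le 5T$.

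Putting these together gives $R(T)\le \frac{\log|B|}{\eta}+\frac{5\eta T}{2}$; substituting $\eta=\sqrt{2\log|B|/(5T)}$ equalizes the two terms and yields $R(T)\le\sqrt{10\,T\log|B|}\le 4\sqrt{T\log|B|}$, as claimed. The only genuinely delicate step is the very first one: one must verify the non-positivity of $\tilde u_t$ and thereby justify the quadratic upper bound on $e^{\eta\tilde u_t(b)}$ uniformly, because—unlike standard EXP3—the estimates are \emph{not} bounded in magnitude (the denominators $\Pr[A_t]$ and $\Pr[\neg A_t]$ can be arbitrarily small). Everything after that, namely the cancellation of the shift and the collapse of the weighted second moments to the constant $5$, is routine bookkeeping.
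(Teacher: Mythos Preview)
Your proposal is correct and follows essentially the same approach as the paper: the paper's proof invokes a generic exponential-weights regret bound (proved in its Appendix via exactly the $e^x\le 1+x+x^2/2$ argument for $x\le 0$ that you spell out), then plugs in Lemma~\ref{lem:moments} and collapses the weighted second moment to $5$ per round, obtaining $R(T)\le \tfrac{5}{2}\eta T+\tfrac{1}{\eta}\log|B|=\sqrt{10\,T\log|B|}$. Your emphasis on the non-positivity of $\tilde u_t$ as the enabling structural fact is exactly the point the paper leans on when citing the generic bound.
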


\begin{proof}
Observe that regret with respect to utilities $u_t(\cdot)$ is equal to regret with respect to the translated utilities $u_t(\cdot)-1$. We use the fact that the exponential weights update with an unbiased estimate $\tilde{u}_t(\cdot) \leq 0$ of the true utilities, achieves expected regret of the form\footnote{A detailed proof of this claim can be found in Appendix~\ref{appendix:a}.}:
\begin{align*}
R(T) \leq~& \frac{\eta}{2} \sum_{t=1}^T \sum_{b\in B} \pi_t(b) \cdot \E\left[\left(\tilde{u}_t(b)\right)^2\right] + \frac{1}{\eta} \log(|B|)
\end{align*}
Invoking the bound on the second moment by Lemma \ref{lem:moments}, we get:
\begin{align*}
R(T) \leq~& \frac{\eta}{2} \sum_{t=1}^T \sum_{b\in B} \pi_t(b) \cdot \left(\frac{4\Pr[A_t|b_t=b]}{\Pr[A_t]} + \frac{\Pr[\neg A_t|b_t=b]}{\Pr[\neg A_t]}\right) + \frac{1}{\eta} \log(|B|) \leq \frac{5}{2}\eta T + \frac{1}{\eta} \log(|B|)
\end{align*}
Picking $\eta = \sqrt{\frac{2\log(|B|)}{5T}}$, we get the theorem.
\end{proof}

\section{Beyond Binary Outcomes: Outcome-Based Feedback}\label{sec:outcome-based}\label{SEC:OUTCOME-BASED}
\label{sec:2}
In the win-only feedback framework there were two possible outcomes that could happen: either you win the reward or not. We now consider a more general problem, where there are more than two outcomes and you learn your reward function for the outcome you won. Moreover, the outcome that you won is also a probabilistic function of your action. The proofs for the results presented in this section can be found in Appendix~\ref{appendix:outcome}. 

\paragraph{Learning with Outcome-Based Feedback.} Every day a learner picks an action $b_t$ from a finite  set $B$. There is a set of payoff-relevant outcomes $O$. The adversary chooses a reward function $r_t: B\times O\rightarrow [-1,1]$, which maps an action and an outcome to a reward and he also chooses an allocation function $x_t: B \rightarrow \Delta(O)$, which maps an action to a distribution over the outcomes. Let $x_t(b, o)$ be the probability of outcome $o$ under action $b$. An outcome $o_t\in O$ is chosen based on distribution $x_t(b_t)$. The learner wins reward $r_t(b_t, o_t)$ and observes the whole outcome-specific reward function $r_t(\cdot, o_t)$. She \emph{always} learns the allocation function $x_t(\cdot)$ after the iteration. Let $u_t(b)=\sum_{o\in O} r_t(b, o)\cdot x_t(b, o)$ be the expected utility from action $b$. 

We consider the following adaptation of the EXP3 algorithm with unbiased estimates based on the information received. It is notationally useful throughout the section to consider $o_t$ as the random variable of the outcome chosen at time $t$. Then, we can write: $\Pr_t[o_t|b] = x_t(b, o_t)$ and $\Pr_t[o_t]=\sum_{b\in B}\pi_t(b) \Pr_t[o_t|b]=\sum_{b\in B}\pi_t(b)\cdot x_t(b, o_t)$. With this notation and based on the feedback structure, we define our $\winexp$ algorithm for learning with outcome-based feedback in %
Algorithm \ref{alg:winexp2}. %

\begin{theorem}[Regret of $\winexp$ with outcome-based feedback]\label{thm:outcome-based-main}
The regret of Algorithm \ref{alg:winexp2} with $\tilde{u}_t(b) = \frac{(r_t(b, o_t) - 1) \Pr_t[o_t | b]}{\Pr_t[o_t]}$ and step size $\sqrt{\frac{\log(|B|)}{2T|O|}}$ is: $2\sqrt{2T|O|\log(|B|)}$.
\end{theorem}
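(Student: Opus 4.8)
The plan is to mirror the structure of the proof of Theorem~\ref{thm:win-only-regr}, generalizing from the binary win/lose outcome to the $|O|$-outcome setting. The backbone is the standard exponential-weights regret bound, which holds verbatim for any unbiased estimate $\tilde{u}_t(\cdot)$ that is pointwise nonpositive: after translating utilities by $-1$ (so that regret is preserved), we have
\begin{align*}
R(T) \leq \frac{\eta}{2} \sum_{t=1}^T \sum_{b\in B} \pi_t(b)\cdot \E\left[\left(\tilde{u}_t(b)\right)^2\right] + \frac{1}{\eta}\log(|B|).
\end{align*}
So the entire task reduces to proving the outcome-based analogue of Lemma~\ref{lem:moments}: that $\tilde{u}_t(b) = \frac{(r_t(b,o_t)-1)\Pr_t[o_t\mid b]}{\Pr_t[o_t]}$ is unbiased for $u_t(b)-1$, is nonpositive, and has a controllable second moment.

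First I would verify unbiasedness and nonpositivity. Taking expectation over the realized outcome $o_t \sim x_t(b_t)$, the marginal probability of seeing outcome $o$ is $\Pr_t[o]=\sum_{b'}\pi_t(b')x_t(b',o)$, so
\begin{align*}
\E\left[\tilde{u}_t(b)\right] = \sum_{o\in O} \Pr_t[o]\cdot \frac{(r_t(b,o)-1)\,x_t(b,o)}{\Pr_t[o]} = \sum_{o\in O}(r_t(b,o)-1)\,x_t(b,o) = u_t(b)-1,
\end{align*}
using $\sum_o x_t(b,o)=1$. Nonpositivity is immediate since $r_t(b,o)\le 1$, which is exactly what the exponential-weights bound requires. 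Next I would bound the second moment. Conditioning on $o_t=o$ and summing against $\Pr_t[o]$ gives $\E[(\tilde u_t(b))^2] = \sum_{o} \frac{(r_t(b,o)-1)^2 x_t(b,o)^2}{\Pr_t[o]}$; bounding $(r_t(b,o)-1)^2\le 4$ yields the key estimate $\E[(\tilde u_t(b))^2]\le \sum_{o\in O} \frac{4\,x_t(b,o)^2}{\Pr_t[o]} \le \sum_{o\in O}\frac{4\,x_t(b,o)}{\Pr_t[o]}$.

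The main point where care is needed is the final summation $\sum_{b}\pi_t(b)\E[(\tilde u_t(b))^2]$, which is where the $|O|$ factor must emerge cleanly. Plugging in the bound and swapping the order of summation,
\begin{align*}
\sum_{b\in B}\pi_t(b)\sum_{o\in O}\frac{4\,x_t(b,o)}{\Pr_t[o]} = 4\sum_{o\in O}\frac{1}{\Pr_t[o]}\sum_{b\in B}\pi_t(b)\,x_t(b,o) = 4\sum_{o\in O}\frac{\Pr_t[o]}{\Pr_t[o]} = 4|O|,
\end{align*}
where the inner sum collapses exactly to $\Pr_t[o]$ by definition. This is the one spot I expect to be the crux: the telescoping of $\sum_b \pi_t(b)x_t(b,o)$ into $\Pr_t[o]$ is what cancels the denominator and replaces the would-be $|B|$ dependence with $|O|$, so I would make sure the estimator is defined with precisely the right ratio for this cancellation to occur. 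Substituting back, each round contributes at most $2\eta|O|$ to the first term, giving $R(T)\le 2\eta T|O| + \frac{1}{\eta}\log(|B|)$. Finally I would optimize over $\eta$: setting $\eta=\sqrt{\frac{\log(|B|)}{2T|O|}}$ balances the two terms and yields $R(T)\le 2\sqrt{2T|O|\log(|B|)}$, matching the theorem statement.
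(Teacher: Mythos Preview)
Your proposal is correct and follows essentially the same approach as the paper: establish the outcome-based analogue of Lemma~\ref{lem:moments} (unbiasedness for $u_t(b)-1$, nonpositivity, and the second-moment bound $\E[\tilde u_t(b)^2]\le 4\sum_{o}\Pr_t[o\mid b]/\Pr_t[o]$), then plug into the exponential-weights regret inequality and collapse $\sum_b \pi_t(b)\Pr_t[o\mid b]=\Pr_t[o]$ to extract the $|O|$ factor before optimizing $\eta$. The structure, the key cancellation, and the final step-size choice all match the paper's proof.
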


\begin{algorithm}[H]
	\begin{algorithmic}
		\State Let $\pi_1(b) =\frac{1}{|B|}$ for all $b\in B$ (i.e. the uniform distribution over bids), $\eta = \sqrt{\frac{\log\left(|B| \right)}{2T|O|}}$
		\For{each iteration t}
		\State Draw an action $b_t$ from the multinomial distribution based on $\pi_t(\cdot)$
		\State Observe $x_t(\cdot)$, observe chosen outcome $o_t$ and associated reward function $r_t(\cdot, o_t)$
		\State Compute estimate of utility: 
		\begin{equation}
		\tilde{u}_t(b) = \frac{(r_t(b, o_t) - 1) \Pr_t[o_t | b]}{\Pr_t[o_t]}
		\end{equation}
		\State Update $\pi_t(\cdot)$ based on the Exponential Weights Update: 
		\begin{equation}
		\forall b\in B: \pi_{t+1}(b) \propto \pi_{t}(b)\cdot \exp\left\{\eta \cdot \tilde{u}_t(b)\right\}
		\end{equation}
		\EndFor
	\end{algorithmic}
	\caption{$\winexp$ algorithm for learning with outcome-based feedback}\label{alg:winexp2}
\end{algorithm}

\paragraph{Applications to Learning in Auctions.} We now present a series of applications of the main result of this section to several learning in auction settings, even beyond single-item or single-dimensional ones.

\begin{example}[Second-price auction]\label{ex:weed}
Suppose that the mechanism ran at each iteration is just the second price auction. Then, we know that the allocation function $X_i(b_i, b_{-i})$ is simply of the form: $\1\{b_i\geq \max_{j\neq i} b_j\}$ and the payment function is simply the second highest bid. In this case, observing the allocation and payment functions at the end of the auction boils down to observing the highest other bid. %
In fact, in this case we have a trivial setting where the bidder gets an allocation of either $0$ or $1$ and if we let $B_t=\max_{j\neq i}b_{jt}$, then the unbiased estimate of the utility takes the simpler form (assuming the bidder always loses in case of ties) of: if $b_t > B_t$ : $\tilde{u}_t(b)= \frac{(v_{it} - B_{t}-1)\1\{b> B_{t}\}}{\sum_{b'>B_{t}} \pi_t(b')}$ and $\tilde{u}_t(b)=\frac{\1\{b\leq B_{t}\}}{\sum_{b'\leq B_{t}} \pi_t(b')}$ in any other case.
Our main theorem gives regret $4\sqrt{T\log(|B|)}$. We note that this theorem recovers exactly the results of \citet{WRP16}, by using as $B$ a uniform $1/\Delta^o$ discretization of the bidding space, for an appropriately defined constant $\Delta^o$ (see Appendix~\ref{sec:app-weed} for an exact comparison of the results).
\end{example}

\begin{example}[Value-per-click auctions]
This is a variant of the binary outcome case analyzed in Section \ref{sec:win-only}, where $O=\{A, \neg A\}$, i.e. get clicked or not. Hence, $|O|=2$, and $r_t(b, A) = v_t - p_t(b)$, while $r_t(b, \neg A)=0$. Our main theorem gives regret $4\sqrt{T\log(|B|)}$.
\end{example}

\begin{example}[Unit-demand multi-item auctions]
Consider the case of $K$ items at an auction where the bidder has value $v_{k}$ for only one item $k$. Given a bid $b$, the mechanism defines a probability distribution over the items that the bidder will be allocated and also defines a payment function, which depends on the bid of the bidder and the item allocated. When a bidder gets allocated an item $k$ she gets to observe her value $v_{kt}$ for that item. Thus, the set of outcomes is equal to $O=\{1,\ldots,K+1\}$, with outcome $K+1$ associated with not getting any item. The rewards are also of the form: $r_t(b, k) = v_{kt} - p_{t}(b, k)$ for some payment function $p_t(b,k)$ dependent on the auction format. Our main theorem then gives regret $2\sqrt{2 (K+1) T\log(|B|)}$.
\end{example}

\subsection{Batch Rewards Per-Iteration and Sponsored Search Auctions}\label{sec:batch}\label{SEC:BATCH}

We now consider the case of sponsored search auctions, where the learner participates in multiple auctions per-iteration. At each of these auctions she has a chance to win and get feedback on her value. To this end, we abstract the \emph{learning with win-only feedback} setting to a setting where multiple rewards are awarded per-iteration. The allocation function remains the same throughout the iteration but the reward functions can change. 

\paragraph{Outcome-Based Feedback with Batch Rewards.} Every iteration $t$ is associated with a set of \emph{reward contests} $I_t$. The learner picks an action $b_t$, which is used at \emph{all} reward contests. For each $\tau\in I_t$ the adversary picks an outcome specific reward function $r_\tau: B\times O \rightarrow [-1,1]$. Moreover, the adversary chooses an allocation function $x_t: B\rightarrow \Delta(O)$, which is not $\tau$-dependent. At each $\tau$, an outcome $o_\tau$ is chosen based on distribution $x_t(b_t)$ and independently. The learner receives reward $r_\tau(b_t, o_\tau)$ from that contest. The overall realized utility from that iteration is the average reward: $\frac{1}{|I_t|}\sum_{\tau\in I_t} r_\tau(b_t, o_{\tau})$, 
while the expected utility from any bid $b$ is: $u_t(b) = \frac{1}{|I_t|}\sum_{\tau\in I_t} \sum_{o\in O} r_\tau(b, o) \cdot x_t(b, o)$. We assume that at the end of each iteration the learner receives as feedback the average reward function conditional on each realized outcome, i.e. if we let $I_{to} = \{\tau \in I_t: o_\tau = o\}$, then the learner learns the function: $Q_t(b, o) = \frac{1}{|I_{to}|}\sum_{\tau\in I_{to}} r_\tau(b, o)$
(with the convention that $Q_t(b, o)=0$ if $|I_{to}|=0$) as well as the realized frequencies $f_t(o) = \frac{|I_{to}|}{|I_t|}$ for all outcomes $o$. 

With this at hand we can define the \emph{batch-analogue} of our unbiased estimates of the previous section. To avoid any confusion we define: $\Pr_t[o|b] = x_t(b, o)$ and $\Pr_t[o]=\sum_{b\in B}\pi_t(b) \Pr_t[o|b]$, to denote that these probabilities only depend on $t$ and not on $\tau$. The estimate of the utility will be:  
\begin{equation}\label{eqn:batch-unbiased-2}
\tilde{u}_t(b) = \sum_{o \in O} \frac{\Pr_t \left[o|b \right]}{\Pr_t[o]} f_t(o) \left(Q_t(b,o)-1 \right)
\end{equation}

We show the full algorithm with outcome-based batch-reward feedback in Algorithm \ref{alg:winexp4}.

\begin{algorithm}[H]
	\begin{algorithmic}
		\State Let $\pi_1(b) =\frac{1}{|B|}$ for all $b\in B$ (i.e. the uniform distribution over bids), $\eta = \sqrt{\frac{\log\left(|B| \right)}{2T|O|}}$
		\For{each iteration t}
		\State Draw an action $b_t$ from the multinomial distribution based on $\pi_t(\cdot)$
		\State Observe $x_t(\cdot)$, chosen outcomes $o_\tau, \forall \tau \in I_t$, average reward function conditional on each realized outcome $Q_t(b,o)$ and the realized frequencies for each outcome $f_t(o) = \frac{|I_{to}|}{|I_t|}$.
		\State Compute estimate of utility: 
		\begin{equation}
		\tilde{u}_t(b) = \sum_{o \in O} \frac{\Pr_t \left[o|b \right]}{\Pr_t[o]} f_t(o) \left(Q_t(b,o)-1 \right)
		\end{equation}
		\State Update $\pi_t(\cdot)$ based on the Exponential Weights Update: 
		\begin{equation}
		\forall b\in B: \pi_{t+1}(b) \propto \pi_{t}(b)\cdot \exp\left\{\eta \cdot \tilde{u}_t(b)\right\}
		\end{equation}
		\EndFor
	\end{algorithmic}
	\caption{$\winexp$ algorithm for learning with outcome-based batch-reward feedback}\label{alg:winexp4}
\end{algorithm}

\begin{corollary}\label{corol:batch-rewards} The $\winexp$ algorithm with the latter unbiased utility estimates and step size $\sqrt{\frac{\log(|B|)}{2T|O|}}$, achieves regret in the outcome-based feedback with batch rewards setting at most: $2\sqrt{2T|O|\log(|B|)}$.
\end{corollary}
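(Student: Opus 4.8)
The plan is to mirror the two-step analysis behind Theorem~\ref{thm:outcome-based-main}: establish that the batch estimate in \eqref{eqn:batch-unbiased-2} is unbiased for the translated utility $u_t(\cdot)-1$, control its second moment aggregated over the bid distribution $\pi_t$, and then feed both into the exponential-weights regret inequality invoked in the proof of Theorem~\ref{thm:win-only-regr}. The crucial first observation is that the batch estimate is literally an average of per-contest estimates of single-outcome form. Since $f_t(o)\bigl(Q_t(b,o)-1\bigr) = \frac{1}{|I_t|}\sum_{\tau\in I_{to}}\bigl(r_\tau(b,o)-1\bigr)$ and the buckets $\{I_{to}\}_{o\in O}$ partition $I_t$ (with empty buckets, where $Q_t(b,o)=0$, contributing nothing), regrouping the double sum over outcomes and contests yields
\[
\tilde{u}_t(b) = \frac{1}{|I_t|}\sum_{\tau\in I_t}\tilde{u}_{t,\tau}(b), \qquad \tilde{u}_{t,\tau}(b) := \frac{\bigl(r_\tau(b,o_\tau)-1\bigr)\,\Pr_t[o_\tau|b]}{\Pr_t[o_\tau]},
\]
where each $\tilde{u}_{t,\tau}(b)$ is exactly the single-outcome estimate of Theorem~\ref{thm:outcome-based-main} with reward function $r_\tau$, and $o_\tau$ is drawn from $x_t(b_t)$. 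The same conditioning computation that underlies Theorem~\ref{thm:outcome-based-main} (cf. Lemma~\ref{lem:moments}) then gives $\E[\tilde{u}_{t,\tau}(b)]=u_{t,\tau}(b)-1$ with $u_{t,\tau}(b)=\sum_{o}r_\tau(b,o)x_t(b,o)$, and averaging over $\tau\in I_t$ gives $\E[\tilde{u}_t(b)]=u_t(b)-1$.

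For the second moment I would sidestep any cross-covariance between distinct contests (they are only conditionally independent given the shared draw $b_t$) by convexity of $x\mapsto x^2$: Jensen's inequality applied to the average gives $\tilde{u}_t(b)^2 \le \frac{1}{|I_t|}\sum_{\tau\in I_t}\tilde{u}_{t,\tau}(b)^2$. Taking expectations termwise reduces everything to a single-contest second moment, which by the computation in Theorem~\ref{thm:outcome-based-main} satisfies $\E[\tilde{u}_{t,\tau}(b)^2] \le 4\sum_{o}\frac{x_t(b,o)}{\Pr_t[o]}$ after using $(r_\tau(b,o)-1)^2\le 4$ and $x_t(b,o)\le 1$. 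The aggregation step swaps summation order,
\[
\sum_{b\in B}\pi_t(b)\sum_{o}\frac{x_t(b,o)}{\Pr_t[o]} = \sum_{o:\,\Pr_t[o]>0}\frac{1}{\Pr_t[o]}\sum_{b\in B}\pi_t(b)\,x_t(b,o) \le |O|,
\]
so $\sum_{b}\pi_t(b)\E[\tilde{u}_{t,\tau}(b)^2]\le 4|O|$ for every $\tau$, and hence $\sum_{b}\pi_t(b)\E[\tilde{u}_t(b)^2]\le 4|O|$ --- identical to the per-iteration second-moment bound in the non-batch setting.

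With these two facts the remainder is routine. Because each $Q_t(b,o)\le 1$ the estimate satisfies $\tilde{u}_t(b)\le 0$, so the inequality $R(T)\le \frac{\eta}{2}\sum_{t=1}^T\sum_{b\in B}\pi_t(b)\E[\tilde{u}_t(b)^2]+\frac{1}{\eta}\log(|B|)$ applies verbatim. Substituting the bound $4|O|$ gives $R(T)\le 2\eta T|O|+\frac{1}{\eta}\log(|B|)$, and the choice $\eta=\sqrt{\frac{\log(|B|)}{2T|O|}}$ delivers $2\sqrt{2T|O|\log(|B|)}$, proving the corollary.

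The main obstacle, and essentially the only nonroutine point, is the within-iteration correlation among the outcomes $\{o_\tau\}_{\tau\in I_t}$: they share the single action draw $b_t$, so a direct expansion of $\tilde{u}_t(b)^2$ would generate cross terms coupled through $b_t$. The Jensen/convexity bound is precisely what avoids these cross terms and lets the batched second moment inherit the exact $4|O|$ bound of the single-reward case, which is the reason batching incurs no loss in the regret rate. I would take care only to confirm that the empty-bucket convention $Q_t(b,o)=0$ when $I_{to}=\emptyset$ is consistent with the regrouping (it is, since such outcomes contribute no contests to the rewritten sum) and that division by $\Pr_t[o]$ is confined to outcomes that can actually occur.
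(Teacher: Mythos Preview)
Your proof is correct and follows essentially the same route as the paper's (Lemma~\ref{lem:moments4}): both rewrite the batch estimate as the per-contest average $\frac{1}{|I_t|}\sum_{\tau}\tilde u_{t,\tau}(b)$ to get unbiasedness, and both use Jensen's inequality to bound the second moment by $4\sum_{o}\Pr_t[o\mid b]/\Pr_t[o]$, after which the regret calculation is identical to Theorem~\ref{thm:outcome-based-main}. The only cosmetic difference is that you apply Jensen to the uniform average over contests $\tau$, whereas the paper applies it to the $f_t(o)$-weighted sum over outcomes $o$; the two are equivalent through the regrouping identity and give the same bound.
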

It is also interesting to note that the same result holds if instead of using $f_t(o)$ in the expected utility (Equation (10)%
), we used its \emph{mean value}, which is $x_t(o,b_t)=\Pr_t[o|b_t]$. This would not change any of the derivations above. The nice property of this alternative is that the learner does not need to learn the realized fraction of each outcome, but only the expected fraction of each outcome. This is already contained in the function $x_t(\cdot, \cdot)$, which we assumed was given to the learner at the end of each iteration. Thus, with these new estimates, the learner does not need to observe $f_t(o)$. In Appendix~\ref{appendix:notes} we also discuss the case where different periods can have different number of rewards and how to extend our estimate to that case.
The batch rewards setting finds an interesting application in the case of learning in sponsored search, as we describe below.
\begin{example}[Sponsored Search]\label{ex:ss}
In the case of sponsored search auctions, the latter boils down to learning the average value $\hat{v} = \frac{1}{\#clicks}\sum_{clicks} v_{click}$ for the clicks that were generated, as well as the cost-per-click function $p_t(b)$, which is assumed to be constant throughout the period $t$. Given these quantities, the learner can compute: $Q(b, A) = \hat{v} - p_t(b)$ and $Q(b, \neg A) =0$. An advertiser can keep track of the traffic generated by a search engine ad and hence, can keep track of the number of clicks from the search engine and the value generated by each of these clicks (conversion). Thus, she can estimate $\hat{v}$. Moreover, she can elicit the probability of click (aka click-through-rate or CTR) curves $x_t(\cdot)$ and the cost-per-click (CPC) curves $p_t(\cdot)$ over relatively small periods of time of about a few days. See for instance the Adwords bid simulator tools offered by Google \cite{bidsim1,bidsim2,bidsim3,bidsim4}\footnote{One could argue that the CTRs that the bidder gets in this case are not accurate enough. Nevertheless, even if they have random perturbations, we show in our experimental results that for reasonable noise assumptions, $\winexp$ is preferrable compared to EXP3.}. 
Thus, with these at hand we can apply our batch reward outcome based feedback algorithm and get regret that does not grow linearly with $|B|$, but only as $4\sqrt{T\log\left(|B|\right)}$. Our main assumption is that the expected CTR and CPC curves during this relatively small period of a few days remains approximately constant. The latter holds if the distribution of click-through-rates does not change within these days and if the bids of opponent bidders also do not significantly change. This is a reasonable assumption when feedback can be elicited relatively frequently, which is the case in practice.
\end{example}

\section{Continuous Actions with Piecewise-Lipschitz Rewards}\label{sec:continuous}\label{SEC:CONTINUOUS}
Until now we only considered discrete action spaces. In this section, we extend our discussion to continuous ones; that is, we allow the action of each bidder to lie in a continuous action space $\B$ %
(e.g. a uniform interval in $[0,1]$). %
To assist us in our analysis, we are going to use the discretization result in \cite{K05} \footnote{\citet{K05} discuss the uniform discretization of continuum-armed bandits and \citet{KSU08} extend the results for the case of Lipschitz-armed bandits.}. For what follows in this section, let $R(T, \B) = \sup_{b^*\in \B} \E\left[\sum_{t=1}^T \left(u_t(b^*) - u_t(b_t)\right)\right]$
be the regret of the bidder, after $T$ rounds with respect to an action space $\B$. Moreover, for any pairs of action spaces $B$ and $\B$ we let: $DE(B, \B) = \sup_{b \in \B} \sum_{t=1}^T u_t(b) - \sup_{b' \in B} \sum_{t=1}^T u_t(b')$,
denote the discretization error incurred by optimizing over $B$ instead of $\B$. The proofs of this section can be found in Appendix~\ref{appendix:cont}.

\begin{lemma}(\cite{K05,KSU08})\label{lem:regret-de}
Let $\B$ be a continuous action space and $B$ a discretization of $\B$. Then:
\begin{equation*}
R(T, \B) \leq R(T, B) + DE(B, \B)
\end{equation*}
\end{lemma}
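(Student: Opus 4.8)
The plan is to observe that the only quantity differing between $R(T,\B)$ and $R(T,B)$ is the comparator class, and then to bridge the two regrets by adding and subtracting the discrete-grid optimum $\sup_{b'\in B}\sum_{t=1}^{T} u_t(b')$. The key structural fact is that the algorithm achieving $R(T,B)$ (our $\winexp$ run on the discretization) draws \emph{every} action $b_t$ from the discrete set $B$; hence the realized sequence of plays, and in particular the algorithm's cumulative expected utility $\E\left[\sum_{t=1}^{T} u_t(b_t)\right]$, is literally the same object whether we benchmark against $\B$ or against $B$. Only the supremum defining the benchmark changes.

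First I would rewrite $R(T,\B)$ by pulling the (comparator-independent) algorithm term out of the supremum, using that $b^*$ ranges over fixed actions while $\sum_t u_t(b_t)$ does not depend on $b^*$:
\begin{equation*}
R(T,\B) = \sup_{b^*\in\B}\sum_{t=1}^{T} u_t(b^*) \;-\; \E\left[\sum_{t=1}^{T} u_t(b_t)\right].
\end{equation*}
Then I would insert $\pm\, \sup_{b'\in B}\sum_{t=1}^{T} u_t(b')$ and regroup the four resulting terms into two pairs: the difference of the two suprema is exactly $DE(B,\B)$ by definition, and the remaining two terms $\sup_{b'\in B}\sum_{t=1}^{T} u_t(b') - \E\left[\sum_{t=1}^{T} u_t(b_t)\right]$ are exactly $R(T,B)$. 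This yields $R(T,\B) = DE(B,\B) + R(T,B)$, which in particular gives the claimed inequality.

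The one place requiring care is the interaction between the supremum and the expectation, which is tied to whether the adversary is oblivious or adaptive. Since $DE(B,\B)$ is defined as a difference of deterministic cumulative sums $\sum_t u_t(b)$ with no expectation, the clean identity above requires that the comparator's cumulative utility be non-random, i.e. that $\sup_{b^*\in\B}\E\left[\sum_t u_t(b^*)\right] = \sup_{b^*\in\B}\sum_t u_t(b^*)$; this holds for an oblivious adversary whose utility functions are fixed in advance, and it is the reading under which $DE$ is well defined. Under this reading the relation is in fact an \emph{equality}, and the lemma phrases it as an inequality for robustness (and to cover any slack introduced by sup/expectation interchange). I expect this expectation-versus-adversary bookkeeping to be the only subtle point; the decomposition itself is a one-line add-and-subtract identity requiring no estimates, which is why this reduction can simply be imported from \cite{K05,KSU08}.
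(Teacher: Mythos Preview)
The paper does not supply its own proof of this lemma; it is simply quoted as a known reduction from \cite{K05,KSU08} and then used. Your add-and-subtract decomposition is exactly the standard argument behind that citation, and it is correct: once the algorithm plays only in $B$, the two regrets share the same $\E[\sum_t u_t(b_t)]$ term and differ only in the benchmark, so inserting $\pm\sup_{b'\in B}\sum_t u_t(b')$ yields $R(T,\B)=R(T,B)+DE(B,\B)$ under an oblivious adversary. Your remark that the oblivious-adversary reading is what makes $DE(B,\B)$ a deterministic quantity (and hence the identity clean) is the right caveat; nothing further is needed.
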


Observe now that in the setting of \citet{WRP16} the discretization error was: $DE(B, \B)=0$ if $\eps < \Delta^o$, as we discussed in Section \ref{sec:2} and that was \emph{the key} that allowed us to recover this result, without adding an extra $\eps T$ in the regret of the bidder. To achieve that, we construct the following general class of utility functions:

\begin{definition}[$\Delta^o$-Piecewise Lipschitz Average Utilities]\label{defn:piecewise}
A learning setting with action space $\B=[0,1]^d$, is said to have $\Delta^o$-Piecewise Lipschitz Cumulative Utilities if the average utility function $\frac{1}{T}\sum_{t=1}^T u_t(b)$ satisfies the following conditions: the bidding space $[0,1]^d$ is divided into $d$-dimensional cubes with edge length at least $\Delta^o$ and within each cube the utility is $L$-Lipschitz with respect to the $\ell_{\infty}$ norm. Moreover, for any boundary point there exists a sequence of non-boundary points whose limit cumulative utility is at least as large as the cumulative utility of the boundary point.
\end{definition}

\begin{lemma}[Discretization Error for Piecewise Lipschitz]\label{lem:de-piece}
Let $\B = [0,1]^d$ be a continuous action space and $B$ a uniform $\eps$-grid of $[0,1]^d$, such that $\eps < \Delta^o$ (i.e. $B$ consists of all the points whose coordinates are multiples of a given $\eps$). %
Assume that the average utility function is $\Delta^o$-Piecewise $L$-Lipschitz. Then, the discretization error of $B$ is bounded as: $DE(B, \B) \leq \eps L T$.
\end{lemma}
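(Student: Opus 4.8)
The plan is to locate a near-optimal point of the continuous space $\B$, push it (if necessary) into the interior of a single Lipschitz cube using the boundary clause of the definition, and then round it to a nearby grid point lying in the \emph{same} cube, so that the piecewise-Lipschitz property controls the utility loss by $\eps L$ per round.

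First I would fix an arbitrary $\delta>0$ and pick $b^*\in\B$ with $\sum_{t=1}^T u_t(b^*)\geq \sup_{b\in\B}\sum_{t=1}^T u_t(b)-\delta$; passing to a near-maximizer is needed because the utilities may jump across cube boundaries, so the supremum need not be attained. If $b^*$ lies on the boundary of the cube decomposition, I would invoke the boundary clause of Definition~\ref{defn:piecewise}: there is a sequence of non-boundary points whose limiting cumulative utility is at least $\sum_t u_t(b^*)$, so I may replace $b^*$ by a non-boundary point $\hat b$, interior to some cube $C$ of edge length $\geq\Delta^o$, with $\sum_t u_t(\hat b)\geq \sum_t u_t(b^*)-\delta$. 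After this reduction the near-maximizer lies strictly inside a single cube $C$.

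The key geometric step is to exhibit a grid point $b'\in B\cap C$ with $\|b'-\hat b\|_\infty\leq \eps$. I would argue coordinate by coordinate: in coordinate $i$ the cube spans an interval of length $\geq\Delta^o>\eps$, so this interval contains a multiple of $\eps$, and because its length exceeds $\eps$ one can always select a multiple of $\eps$ inside it within distance $\eps$ of $\hat b_i$ (round $\hat b_i$ down to the $\eps$-grid; if this falls below the cube, take the next multiple up, which then necessarily lies inside the cube and still within $\eps$ of $\hat b_i$). Choosing such a coordinate in each dimension yields the desired $b'$ in the same cube and within $\ell_\infty$-distance $\eps$. This is exactly where the hypothesis $\eps<\Delta^o$ enters, and it is the main obstacle: without it there might be no grid point in the same Lipschitz piece, and the Lipschitz bound would fail to apply across the cube boundary.

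Finally, since $\hat b$ and $b'$ lie in the same cube, the $L$-Lipschitz property of the average utility $\frac1T\sum_t u_t$ with respect to $\ell_\infty$ gives
\begin{equation*}
\frac1T\sum_{t=1}^T u_t(\hat b)-\frac1T\sum_{t=1}^T u_t(b')\leq L\,\|\hat b-b'\|_\infty\leq L\eps,
\end{equation*}
hence $\sum_t u_t(b')\geq \sum_t u_t(\hat b)-L\eps T\geq \sup_{b\in\B}\sum_t u_t(b)-2\delta-L\eps T$. Since $b'\in B$, this lower-bounds $\sup_{b'\in B}\sum_t u_t(b')$, and letting $\delta\to 0$ yields $DE(B,\B)=\sup_{\B}\sum_t u_t-\sup_{B}\sum_t u_t\leq \eps L T$, as claimed.
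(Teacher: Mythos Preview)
Your proof is correct and follows essentially the same approach as the paper's: pass to a near-optimal interior point of one of the $\Delta^o$-cubes via the boundary clause, locate a grid point in the same cube within $\ell_\infty$-distance $\eps$, and apply the Lipschitz bound. The only difference is that you spell out the coordinate-wise rounding argument showing that such a grid point exists (and lies within $\eps$), whereas the paper simply asserts that ``each cube contains at least one point in the discretized space $B$'' and takes the closest one; your version is the more careful of the two, but the idea is identical.
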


If we know the Lipschitzness constant $L$ %
mentioned above, the time horizon $T$ and $\Delta^o$, then our $\winexp$ algorithm for Outcome-Based Feedback with Batch Rewards yields regret as defined by the following theorem. In Appendix~\ref{sec:app-doubling-trick}, we also show how to deal with unknown parameters $L$, $T$ and $\Delta^o$ by applying a standard doubling trick.

\begin{theorem}\label{thm:continuous-lipschitz-known}
Let $\B=[0,1]^d$ be the action space as defined in our model and let $B$ be a uniform $\eps$-grid of $\B$. The $\winexp$ algorithm with unbiased estimates given by \inlineequation[eqn:batch-unbiased]{\tilde{u}_t(b) = \sum_{o\in O}\frac{\Pr_t[o | b]}{\Pr_t[o]} f_t(o) \left(Q_t(b, o) - 1\right)} on $B$ with $\eta = \sqrt{\frac{\log(|B|)}{2T|O|}}$,  %
$\eps=\min \left\{\frac{1}{L T},\Delta^o \right\}$
achieves expected regret at most %
$2\sqrt{2T|O|d\log \left(\max\left\{\frac{1}{\Delta^o},LT\right\} \right)} + 1$ in the outcome-based feedback with batch rewards and $\Delta^o$-Piecewise $L-$Lipschitz average utilities \footnote{Interestingly, the above regret bound can help to retrieve two familiar expressions for the regret. First, when $L=0$ (i.e. when the function is constant within each cube), in is the case for the second price auction analyzed in \cite{WRP16}, $R(T) = 2\sqrt{2dT|O|\log\left(\frac{1}{\Delta^o}\right)} + 1$. Hence, we recover the bounds from the prior sections up to a tiny increase. Second, when $\Delta^o \to \infty$, then we have functions that are $L$-Lipschitz in the whole space $\B$ and the regret bound that we retrieve is: $R(T) = 2\sqrt{2dT|O|\log\left(LT\right)} +1$, which is of the type achieved in continuous lipschitz bandit settings.}.
\end{theorem}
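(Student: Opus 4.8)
The plan is to combine the three ingredients already assembled in this section: the regret/discretization decomposition of Lemma~\ref{lem:regret-de}, the discretization-error bound of Lemma~\ref{lem:de-piece}, and the discrete-grid regret guarantee of Corollary~\ref{corol:batch-rewards}. First I would apply Lemma~\ref{lem:regret-de} to write $R(T, \B) \le R(T, B) + DE(B, \B)$, so that it suffices to bound each term on the right under the stated choice $\eps = \min\{\frac{1}{LT}, \Delta^o\}$.

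For the discretization-error term, I would note that this choice guarantees $\eps \le \Delta^o$, so $B$ is a sufficiently fine uniform $\eps$-grid for the hypothesis of Lemma~\ref{lem:de-piece} to apply under the $\Delta^o$-Piecewise $L$-Lipschitz assumption of Definition~\ref{defn:piecewise}. That lemma then yields $DE(B, \B) \le \eps L T$, and since $\eps \le \frac{1}{LT}$ we obtain $DE(B, \B) \le 1$. This is exactly where the additive ``$+1$'' in the bound originates, and coordinating the choice of $\eps$ so that the Lipschitz error contributes at most a constant is the crux of the argument.

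For the grid-regret term, I would invoke Corollary~\ref{corol:batch-rewards}, which gives $R(T, B) \le 2\sqrt{2T|O|\log(|B|)}$ for the batch-reward outcome-based $\winexp$ algorithm with the stated estimates and step size $\eta = \sqrt{\frac{\log(|B|)}{2T|O|}}$. It then remains to compute the cardinality of the grid: a uniform $\eps$-grid of $[0,1]^d$ has $(1/\eps)^d$ points (absorbing the rounding of $1/\eps$ to an integer), so $\log(|B|) = d\log(1/\eps)$. Substituting $1/\eps = \max\{\frac{1}{\Delta^o}, LT\}$ gives $\log(|B|) = d\log(\max\{\frac{1}{\Delta^o}, LT\})$ and hence $R(T, B) \le 2\sqrt{2T|O|d\log(\max\{\frac{1}{\Delta^o}, LT\})}$. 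Adding the two bounds produces the claimed regret.

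The one genuinely delicate point is the interplay of the $\eps$ choice across the two terms: $\eps$ must be small enough (at most $1/(LT)$) to keep the Lipschitz discretization error below the constant, yet large enough that the grid does not blow up and inflate the $\log(1/\eps)$ factor; the minimum in the definition of $\eps$ is precisely what balances these two pressures, while the capping at $\Delta^o$ is what guarantees the piecewise-Lipschitz hypothesis of Lemma~\ref{lem:de-piece} is actually met. Everything else is routine substitution.
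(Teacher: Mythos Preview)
Your proposal is correct and follows essentially the same approach as the paper: decompose via Lemma~\ref{lem:regret-de}, bound the discretization error with Lemma~\ref{lem:de-piece} (using $\eps\le \Delta^o$ and $\eps\le 1/(LT)$ to get the additive $+1$), and bound the grid regret with Corollary~\ref{corol:batch-rewards} together with $\log|B|=d\log(1/\eps)$. The paper's proof is exactly this chain of substitutions, so your plan matches it step for step.
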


\begin{example}[First Price and All-Pay Auctions]
Consider the case of learning in first price or all-pay auctions. In the former, the highest bidder wins and pays her bid, while in the latter the highest bidder wins and every bidder pays her bid whether she wins or loses. Let $B_t$ be the highest other bid at time $t$. Then the average hindsight utility of the bidder in each auction is \footnote{For simplicity, we assume the bidder loses in case of ties, though we can handle arbitrary random tie-breaking rules.}:
\begin{align}
\textstyle{\frac{1}{T} \sum_{t=1}^T u_t(b)} =~& \textstyle{\frac{1}{T}\sum_{t=1}^T v_t\cdot \1\{b > B_t\}  - b \cdot \frac{1}{T}\sum_{t=1}^T \1\{b > B_t\}} \tag{first price}\\
\textstyle{\frac{1}{T} \sum_{t=1}^T u_t(b)} =~& \textstyle{\frac{1}{T}\sum_{t=1}^T v_t \cdot \1\{b > B_t\} - b} \tag{all-pay}
\end{align}
Let $\Delta^o$ be the \emph{smallest} difference between the highest other bid at any two iterations $t$ and $t'$ \footnote{This is an analogue of the $\Delta^o$ used by \cite{WRP16} in second price auctions.}. Then observe that the average utilities in this setting are $\Delta^o$-Piecewise $1$-Lipschitz: Between any two highest other bids, the average allocation, $\frac{1}{T}\sum_{t=1}^T v_t\cdot \1\{b > B_t\}$, of the bidder remains constant and the only thing that changes is his payment which grows linearly. Hence, the derivative at any bid between any two such highest other bids is upper bounded by $1$. Hence, by applying Theorem \ref{thm:continuous-lipschitz-known}, our $\winexp$ algorithm with a uniform discretization on a $\epsilon$-grid, for $\epsilon = \min\left\{\Delta^o, \frac{1}{T}\right\}$, achieves regret $4\sqrt{T\log\left(\max\left\{\frac{1}{\Delta^o},T\right\} \right)}) +1$, where we used that $|O|=2$ and $d=1$ for any of these auctions.

\end{example}

\subsection{Sponsored Search with Lipschitz Utilities}\label{sec:sponsored-lipschitz}
In this subsection, we extend our analysis of learning in the sponsored search auction model (Example \ref{ex:ss}) to the continuous bid space case, i.e., each bidder can submit a bid $b\in [0,1]$. As a reminder, the utility function is: $u_t(b) = x_t(b)(\hat{v}_t - p_t(b))$, 
where $b\in [0,1]$,  $\hat{v}_t \in [0,1]$ is the average value for the clicks at iteration $t$, $x_t(\cdot)$ is the CTR curve and $p_t(\cdot)$ is the CPC curve. These curves are implicitly formed by running some form of a Generalized Second Price auction (GSP) at each iteration to determine the allocation and payment rules. As we show in this subsection, the form of the GSP ran in reality gives rise to Lipschitz utilities, under some minimal assumptions. Therefore, we can apply the results in Section \ref{sec:continuous} to get regret bounds even with respect to the continuous bid space $\B=[0,1]$ \footnote{The aforementioned Lipschitzness is also reinforced by real world data sets from Microsoft's sponsored search auction system.}. We begin by providing a brief description of the type of Generalized Second Price auction ran in practice.

\begin{definition}[Weighted-GSP]\label{weighted-gsp}
Each bidder $i$ is assigned a \emph{quality score} $s_i\in [0,1]$. Bidders are ranked according to their score-weighted bid $s_i\cdot b_i$, typically called the \emph{rank-score}. Every bidder whose rank-score does not pass a reserve $r$ is discarded. Bidders are allocated slots in decreasing order of \emph{rank-score}. Each bidder is charged per-click the lowest bid she could have submitted and maintained the same slot. Hence, if a bidder $i$ is allocated a slot $k$ and $\rho_{k+1}$ is the rank-score of the bidder in slot $k+1$, then she is charged $\rho_{k+1}/s_i$ per-click. We denote with $U_i(\mbf{b}, \mbf{s}, r)$, the utility of bidder $i$ under a bid profile $\mbf{b}$ and score profile $\mbf{s}$.
\end{definition}
The quality scores are typically highly random, dependent on the features of the ad and the user that is currently viewing the page. Hence, a reasonable modeling assumption is that the scores $s_i$ at each auction are drawn i.i.d. from some distribution with CDF $F_i$. We now show that if the CDF $F_i$ is Lipschitz (i.e. admits a bounded density), then the utilities of the bidders are also Lipschitz.

\begin{theorem}[Lipschitzness of the utility of Weighted GSP]\label{thm:lipschitz-weighted-gsp}
Suppose that the score $s_i$ of each bidder $i$ in a weighted GSP is drawn independently from a distribution with an $L-$Lipschitz CDF $F_i$. Then, the expected utility $u_i(b_i, \mbf{b}_{-i}, r) = \E_{\mbf{s}}\left[U_i(b_i, \mbf{b}_{-i}, \mbf{s}, r)\right]$ is $\frac{2nL}{r}-$Lipschitz wrt $b_i$.
\end{theorem}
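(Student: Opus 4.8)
The plan is to fix a bidder $i$ and bound how much the expected utility $u_i(b_i,\mathbf{b}_{-i},r)$ can change when $b_i$ moves to $b_i'$. Since the quality scores are drawn i.i.d.\ from the CDFs $F_j$, the expected utility is an average over score profiles of the realized utility $U_i(b_i,\mathbf{b}_{-i},\mathbf{s},r)$ in a weighted GSP. The realized utility is a product of an indicator-type allocation (whether $i$ is allocated a slot) and a per-click net value $(\hat v_i - \text{price})$, where the price is $\rho_{k+1}/s_i$ for the rank-score $\rho_{k+1}$ immediately below $i$. My strategy is to decompose the change $u_i(b_i') - u_i(b_i)$ into two sources: (i) the change in the \emph{payment} conditional on $i$ keeping the same slot, and (ii) the change in the \emph{allocation}, i.e.\ the event that $i$'s slot changes as $b_i$ varies. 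I would bound each contribution separately and show the total Lipschitz constant is $2nL/r$.

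For the payment term, I would note that conditioning on the slot $k$ that $i$ occupies, the per-click charge $\rho_{k+1}/s_i$ does not depend on $b_i$ at all (it depends only on the competitor below and on $s_i$), so within a fixed slot the utility is constant in $b_i$; hence \emph{all} sensitivity to $b_i$ comes from allocation changes, i.e.\ from source (ii). So the real work is in (ii): I would argue that the derivative (or finite difference) of $u_i$ with respect to $b_i$ is controlled by the probability density that $i$'s rank-score $s_i b_i$ lands exactly at a boundary where a slot changes hands. Concretely, $i$'s allocation is determined by comparisons of the rank-score $s_i b_i$ against the rank-scores $s_j b_j$ of the other bidders and against the reserve $r$. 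Moving $b_i$ to $b_i'$ changes the allocation only on the event that some competitor's (or the reserve's) rank-score falls between $s_i b_i$ and $s_i b_i'$, i.e.\ that $s_j \in \bigl(s_i b_i / b_j,\, s_i b_i' / b_j\bigr]$ (for each of the $n-1$ competitors) or the analogous reserve event. By the $L$-Lipschitzness of each $F_j$ (bounded density), the probability of each such event is at most $L$ times the width of the relevant $s_j$-interval.

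The key estimate is then to translate the interval width in $s_j$-space into $|b_i - b_i'|$ and to control the magnitude of the utility jump when such a crossing occurs. The crossing interval for competitor $j$ has width $\frac{s_i}{b_j}|b_i - b_i'| \le \frac{1}{b_j}|b_i-b_i'|$ since $s_i\le 1$; and because any allocated bidder must clear the reserve, $s_j b_j \ge r$, so $b_j \ge r/s_j \ge r$ cannot be assumed directly — instead I would use that at a boundary crossing the competitor is pivotal and hence has rank-score at least $r$, giving $s_j b_j \ge r$ and thus a $1/r$ factor. Summing over the at most $n$ relevant boundaries (the $n-1$ competitors plus the reserve) and using that each slot-change alters the bounded utility (values and prices lie in $[0,1]$, so a jump is at most $1$ in magnitude), I would obtain $|u_i(b_i') - u_i(b_i)| \le \frac{2nL}{r}|b_i - b_i'|$, where the factor $2$ absorbs the two-sided value/payment contribution and the $1/r$ comes from the reserve-induced lower bound on pivotal rank-scores.

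I expect the main obstacle to be making the allocation-change accounting fully rigorous: the allocation is a discrete, multi-slot object, so I must argue carefully that a small change in $b_i$ can shift $i$ across at most the competitor boundaries it actually passes, that the expected utility is differentiable (or at least Lipschitz) once we integrate out the continuous scores $s_j$ via the bounded densities, and that the $1/r$ factor is correctly produced by the fact that only bidders clearing the reserve are pivotal. The cleanest route is probably to write $u_i$ as an explicit integral over $s_i$ and the $s_{-i}$, differentiate under the integral sign with respect to $b_i$, and observe that each term in the resulting sum is a density $F_j'$ evaluated at a rank-score boundary, multiplied by a bounded ($\le 1$) utility difference and a Jacobian factor bounded by $1/r$; bounding $\sum_j F_j' \le nL$ then closes the argument.
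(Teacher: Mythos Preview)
Your approach is essentially the paper's: condition on $s_i$, note that the GSP per-click price (hence utility) is constant in $b_i$ once the slot is fixed, bound the probability of a slot change by a union bound over competitors $j$ whose rank-score $s_jb_j$ lands in $[s_ib_i,\,s_ib_i']$, and apply the $L$-Lipschitz CDFs. The $1/r$ step is simpler than you suggest: since scores lie in $[0,1]$, the event $s_jb_j\ge r$ already forces $b_j\ge r$, so the $s_j$-interval width $s_i|b_i-b_i'|/b_j\le |b_i-b_i'|/r$; the factor $2$ is just the diameter of the utility range $[-1,1]$, and the paper carries this out as a direct finite-difference bound with no differentiation under the integral needed.
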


Thus, we see that when the quality scores in sponsored search are drawn from $L$-Lipschitz CDFs $F_i, \forall i \in n$ and the reserve is lower bounded by $\delta>0$, then the utilities are $\frac{2nL}{\delta}$-Lipschitz and we can achieve good regret bounds by using the $\winexp$ algorithm with batch rewards, with action space $B$ being a uniform $\epsilon$-grid, $\epsilon = \frac{\delta}{2nLT}$ and unbiased estimates given by Equation~\eqref{eqn:batch-unbiased} or Equation \eqref{eqn:batch-unbiased-2}. In the case of sponsored search the second unbiased estimate takes the following simple form:
\begin{equation}
\textstyle{\tilde{u}_t(b) = \frac{x_t(b)\cdot x_t(b_t)}{\sum_{b'\in B} \pi_t(b') x_t(b')}  \left(\hat{v}_t - p_t(b)- 1\right) - \frac{(1-x_t(b))\cdot (1-x_t(b_t))}{\sum_{b'\in B} \pi_t(b') (1-x_t(b'))}}
\end{equation}
where $\hat{v}_t$ is the average value from the clicks that happened during iteration $t$, $x_t(\cdot)$ is the CTR curve, $b_t$ is the realized bid that the bidder submitted and $\pi_t(\cdot)$ is the distribution over discretized bids of the algorithm at that iteration. We can then apply Theorem \ref{thm:continuous-lipschitz-known} to get the following guarantee:
\begin{corollary} The $\winexp$ algorithm run on a uniform $\epsilon$-grid with $\epsilon = \frac{\delta}{2nLT}$, step size $\sqrt{\frac{\log(1/\epsilon)}{4T}}$ and unbiased estimates given by Equation~\eqref{eqn:batch-unbiased} or Equation \eqref{eqn:batch-unbiased-2}, when applied to the sponsored search auction setting with quality scores drawn independently from distributions with $L$-Lipschitz CDFs, achieves regret at most: %
$4\sqrt{T\log\left(\frac{2nLT}{\delta}\right)}+1$.

\end{corollary}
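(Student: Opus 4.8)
The plan is to derive the corollary directly from the two ingredients built in this section, namely the Lipschitzness bound of Theorem~\ref{thm:lipschitz-weighted-gsp} and the continuous-action regret guarantee of Theorem~\ref{thm:continuous-lipschitz-known}. The whole argument is essentially an instantiation of the latter theorem with the correct constants, so no genuinely new analysis is needed. First I would pin down the Lipschitz constant of the expected utility: by Theorem~\ref{thm:lipschitz-weighted-gsp}, the per-iteration expected utility $u_t(\cdot)$ of a bidder in a weighted GSP whose score CDFs are $L$-Lipschitz is $\frac{2nL}{r}$-Lipschitz in her bid. Since the reserve satisfies $r\geq\delta>0$, we have $\frac{2nL}{r}\leq\frac{2nL}{\delta}$, and averaging over $t$ preserves the bound, so the average utility $\frac{1}{T}\sum_{t=1}^T u_t(b)$ is $L'$-Lipschitz on all of $[0,1]$ with $L':=\frac{2nL}{\delta}$.

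Next I would match the sponsored-search setting to the hypotheses of Theorem~\ref{thm:continuous-lipschitz-known}. Here the action space $\B=[0,1]$ is one-dimensional, so $d=1$, and the outcome set is $O=\{A,\neg A\}$ (clicked or not), so $|O|=2$. Crucially, because the quality scores are drawn from continuous distributions, the expectation over $\mbf{s}$ smooths out the step discontinuities of the realized allocation, and the average utility is Lipschitz on the \emph{entire} interval rather than only within cubes of some minimum edge length. This is exactly the $\Delta^o\to\infty$ regime of Definition~\ref{defn:piecewise}, in which the term $\frac{1}{\Delta^o}$ vanishes and $\max\{\frac{1}{\Delta^o},L'T\}=L'T$, so that no residual discretization penalty $\eps L' T$ survives.

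Finally I would substitute into Theorem~\ref{thm:continuous-lipschitz-known}. With $\eps=\min\{\frac{1}{L'T},\Delta^o\}=\frac{1}{L'T}=\frac{\delta}{2nLT}$ and step size $\sqrt{\frac{\log(|B|)}{2T|O|}}=\sqrt{\frac{\log(1/\eps)}{4T}}$ (using $|B|\approx 1/\eps$ for a uniform $\eps$-grid of $[0,1]$ and $|O|=2$, which matches the step size stated in the corollary), the theorem guarantees regret at most $2\sqrt{2T|O|d\log(L'T)}+1$. Plugging in $|O|=2$ and $d=1$ collapses the constant to $2\sqrt{4T\log(L'T)}+1=4\sqrt{T\log(L'T)}+1$, and substituting $L'=\frac{2nL}{\delta}$ gives the claimed $4\sqrt{T\log\left(\frac{2nLT}{\delta}\right)}+1$.

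The one place demanding care is bookkeeping rather than mathematics: one must keep the CDF Lipschitz constant $L$ distinct from the utility Lipschitz constant $L'=\frac{2nL}{\delta}$ that is actually fed into Theorem~\ref{thm:continuous-lipschitz-known}, and one must justify the $\Delta^o\to\infty$ reduction so that the bound reduces to the fully-Lipschitz case (the second expression in the footnote of that theorem) with no extra discretization term beyond the $+1$ already absorbed into it. Once these two points are settled, the stated regret follows by pure substitution.
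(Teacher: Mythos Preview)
Your proposal is correct and follows precisely the same route as the paper: invoke Theorem~\ref{thm:lipschitz-weighted-gsp} to obtain the utility Lipschitz constant $L'=\tfrac{2nL}{\delta}$, then instantiate Theorem~\ref{thm:continuous-lipschitz-known} in the fully-Lipschitz regime with $d=1$ and $|O|=2$. The paper itself does not spell out the substitution in any more detail than you do; your added care in distinguishing $L$ from $L'$ and in justifying the $\Delta^o\to\infty$ reduction is appropriate and matches the discussion surrounding the corollary.
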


\section{Further Extensions}
In this section, we discuss an extension to switching regret and the implications on Price of Anarchy and one to the feedback graphs setting.
\subsection{Switching Regret and Implications for Price of Anarchy}\label{sec:switch-poa}\label{SEC:SWITCH-POA}
We show below that our results can be extended to capture the case where, instead of having just one optimal bid $b^*$, there is a sequence of $C \geq 1$ \emph{switches} in the optimal bids. Using the results presented in \cite{GLL12} and adapting them for our setting we get the following corollary (with proof in Appendix \ref{appendix:switch}).
\begin{corollary}\label{cor:switch} 
Let $C \geq 0$ be the number of times that the optimal bid $b^* \in \B$ switches in a horizon of $T$ rounds. Then, using Algorithm $2$ in \cite{GLL12} with $\mathcal{A} \equiv \winexp$ and any $\alpha \in (0,1)$ we can achieve expected \emph{switching regret} at most: $O \left( \sqrt{\left(C+1 \right)^2 \left( 2 + \frac{1}{\alpha} \right) 2d |O| T \log \left(\max \left\{ LT, \frac{1}{\Delta^o} \right\} \right)}\right)$ 
\end{corollary}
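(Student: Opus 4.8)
The plan is to treat this as a black-box reduction. Algorithm 2 of \cite{GLL12} is a meta-procedure that takes any base online learning algorithm $\mathcal{A}$, together with a mixing parameter $\alpha \in (0,1)$, and converts its fixed-comparator regret guarantee into a switching (tracking) regret guarantee against any comparator sequence that switches at most $C$ times. Concretely, if $R_{\mathcal{A}}$ denotes the worst-case fixed-bid regret of $\mathcal{A}$ over a segment, then the meta-algorithm achieves switching regret of order $(C+1)\sqrt{(2 + 1/\alpha)}\,R_{\mathcal{A}}$. The crucial observation that lets us instantiate $\mathcal{A} \equiv \winexp$ inside this reduction is that $\winexp$ does not merely produce feedback for the played action: at every iteration it constructs the unbiased estimate $\tilde{u}_t(b)$ simultaneously for \emph{every} $b \in B$, using the always-observed allocation function $x_t(\cdot)$ and the realized outcome-reward function $r_t(\cdot, o_t)$. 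Hence the meta-level aggregation over base instances, which in \cite{GLL12} requires evaluating each instance's action, can be run on these estimated utilities exactly as in the full-information setting, and unbiasedness carries the bound through in expectation.

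The steps, in order, are as follows. First I would invoke Theorem \ref{thm:continuous-lipschitz-known} to obtain the per-segment regret of $\winexp$ on the uniform $\epsilon$-grid $B$ with $|B| = \max\{1/\Delta^o, LT\}$, namely $R_{\mathcal{A}} = O\big(\sqrt{2dT|O|\log(\max\{1/\Delta^o, LT\})}\big)$. Second, I would instantiate the \cite{GLL12} reduction with $\mathcal{A} \equiv \winexp$ and the chosen $\alpha$, yielding switching regret $(C+1)\sqrt{(2+1/\alpha)}\,R_{\mathcal{A}}$. Third, squaring the base regret inside the root and absorbing the numerical constants into the $O(\cdot)$ gives exactly $O\big(\sqrt{(C+1)^2 (2 + 1/\alpha)\, 2d|O|T \log(\max\{LT, 1/\Delta^o\})}\big)$, which is the claimed bound; note that $R_{\mathcal{A}}^2$ is a constant multiple of $2d|O|T\log(\max\{LT,1/\Delta^o\})$, so this substitution is immediate.

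The main obstacle is verifying compatibility of the partial, outcome-based feedback with a tracking reduction designed for full information, and ensuring the base guarantee has the right per-interval form. For the first point, the resolution is precisely the all-actions unbiased estimate described above together with the second-moment bound of Lemma \ref{lem:moments} and its outcome-based analogue, which keeps the variance of each instance's estimated utility controlled and lets the expectation of the \cite{GLL12} bound reduce to the stated expression. For the second point, I would check that the exponential-weights regret decomposition used in Theorem \ref{thm:win-only-regr} and Theorem \ref{thm:continuous-lipschitz-known} applies verbatim over any contiguous sub-interval on which a base instance is active, so that $R_{\mathcal{A}}$ may legitimately serve as a segment-wise regret bound; this holds because the step size and the uniform second-moment bound do not depend on where the interval begins. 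With these two points established, the corollary follows by direct substitution.
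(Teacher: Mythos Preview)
Your high-level plan---instantiate Algorithm~2 of \cite{GLL12} with $\winexp$ as the base learner and substitute the bound from Theorem~\ref{thm:continuous-lipschitz-known}---is exactly the paper's approach. However, two points in your write-up are imprecise in ways that matter for the derivation.

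First, $\alpha$ is not a ``mixing parameter'' of Algorithm~2 in \cite{GLL12}. The actual tunable parameter there is an integer $g>0$ governing the geometric covering schedule; the paper obtains the factor $2+\tfrac{1}{\alpha}$ by \emph{choosing} $g+1=(T/(C+1))^\alpha$, which makes the covering-number term $L_{C,T}=\tfrac{\log(T/(C+1))}{\log(g+1)}+2$ equal to $2+\tfrac{1}{\alpha}$. Without this calculation the $\alpha$-dependence is unexplained.

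Second, your black-box statement that the meta-algorithm achieves switching regret ``of order $(C+1)\sqrt{2+1/\alpha}\,R_{\mathcal{A}}$'' is not the form of the \cite{GLL12} guarantee. Their bound is a sum of three terms (see the paper's equation for the expected switching regret), the dominant one being $(C+1)L_{C,T}\,\rho\!\big(T/((C+1)L_{C,T})\big)$ for a concave base-regret function $\rho$, plus a second-order $\sum_t \eta_t/8$ term and a $\log T$ term; each must be bounded separately, and the paper verifies the required concavity/monotonicity hypotheses on $\rho$. The paper also explicitly argues that Lemma~1 of \cite{GLL12}, stated for realized regret, transfers to \emph{expected} regret (needed here since $\winexp$'s guarantee is in expectation). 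Your observation that $\winexp$ produces unbiased estimates $\tilde{u}_t(b)$ for every $b$ is the right ingredient for this step, but the extension should be stated rather than assumed.

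With these details filled in, your argument coincides with the paper's.
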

\noindent
This result has implications on the price of anarchy (PoA) of auctions. In the case of sponsored search where bidders' valuations are changing over time adversarially but non-adaptively, our result shows that if the valuation does not change more than $C$ times, we can compete with any bid that is a function of the value of the bidder at each iteration, with regret rate given by the latter theorem. Therefore, by standard PoA arguments \cite{LST16}, this would imply convergence to an approximately efficient outcome at a faster rate than bandit regret rates.

\subsection{Feedback Graphs over Outcomes}\label{sec:graph}\label{SEC:GRAPH}
We now extend Section \ref{sec:continuous}, by assuming that there is a directed feedback graph $G=(O, E)$ over the outcomes. When outcome $o_t$ is chosen, the bidder observes not only the outcome specific reward function $r_t(\cdot, o_t)$, for that outcome, but also for any outcome $o$ in the out-neighborhood of $o_t$ in the feedback graph, which we denote with $N^{out}(o_t)$. Correspondingly, we denote with $N^{in}(o)$ the incoming neighborhood of $o$ in $G$. Both neighborhoods include self-loops. Let $G_{\epsilon}=(O_{\epsilon}, E_{\epsilon})$ be the sub-graph of $G$ that contains only outcomes for which $\Pr_t[o_t]\geq \epsilon$ and subsequently, let $N_{\epsilon}^{in}$ and $N_{\epsilon}^{out}$ be the in and out neighborhoods of this sub-graph. 

Based on this feedback graph we construct a $\winexp$ algorithm with step-size $\eta= \sqrt{\frac{\log(|B|)}{8T\alpha \ln\left(\frac{16|O|^2 T}{\alpha}\right)}}$, utility estimate $\tilde{u}_t(b) = \1\{o_t\in O_{\epsilon}\} \sum_{o \in N_{\epsilon}^{out}(o_t)}
\frac{(r_t(b, o) - 1) \Pr_t[o|b]}{\sum_{o'\in N_{\epsilon}^{in}(o)}\Pr_t[o']}$ and feedback structure as described in the previous paragraph. 
With these changes we can show that the regret grows as a function of the \emph{independence number of the feedback graph}, denoted with $\alpha$, rather than the \emph{number of outcomes}. The full Algorithm %
can be found in Appendix \ref{appendix:algos}.

\begin{theorem}[Regret of $\winexpG$]\label{thm:feedback-graph}
The regret of the $\winexpG$ algorithm with step size $\eta = \sqrt{\frac{\log(|B|)}{8T\alpha \ln\left(\frac{16|O|^2 T}{\alpha}\right)}}$ is bounded by: $R(T) \leq 2 \sqrt{8\alpha T \log(|B|) \ln\left(\frac{16|O|^2 T}{\alpha}\right)} + 1$.
\end{theorem}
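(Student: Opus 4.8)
The plan is to reuse the exponential-weights template already established for Theorems~\ref{thm:win-only-regr} and \ref{thm:outcome-based-main}: because we translate rewards so that $\tilde u_t(b)\le 0$, the update with an (almost) unbiased estimate satisfies $R(T)\le \frac{\eta}{2}\sum_{t}\sum_b \pi_t(b)\,\E[\tilde u_t(b)^2]+\frac{1}{\eta}\log(|B|)$ up to a bias correction. Everything then reduces to two tasks: (i) quantifying the bias introduced by the thresholding indicator $\1\{o_t\in O_\epsilon\}$, and (ii) bounding the per-round second moment $M_t=\sum_b \pi_t(b)\,\E[\tilde u_t(b)^2]$ by $O(\alpha\ln(\cdot))$ rather than $O(|O|)$.

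For (i), write $D_t(o)=\sum_{o'\in N_\epsilon^{in}(o)}\Pr_t[o']$ for the probability that outcome $o$ is observed. Taking the expectation over $o_t$ and swapping the order of summation (each $o$ is revealed exactly when $o_t\in N_\epsilon^{in}(o)$, whose mass is $D_t(o)$, cancelling the denominator) gives $\E[\tilde u_t(b)]=\sum_{o\in O_\epsilon}(r_t(b,o)-1)\,x_t(b,o)$. This is $u_t(b)-1$ minus only the contribution of the low-probability outcomes $o\notin O_\epsilon$. Since $|r_t-1|\le 2$ and $\sum_{o\notin O_\epsilon}\Pr_t[o]\le |O|\epsilon$, and since the corresponding bias at $b^*$ can only help, comparing the exponential-weights guarantee for this restricted utility against $b^*$ costs at most $2|O|\epsilon T$ in regret.

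The crux is (ii). Unlike standard feedback-graph bandits, where each arm's estimator is a single importance-weighted term, here $\tilde u_t(b)$ is a \emph{sum over all observed outcomes}, so its second moment is a square-of-a-sum with non-negative cross terms. The key step is Cauchy--Schwarz in the form $\frac{\Pr_t[o|b]}{D_t(o)}=\sqrt{\Pr_t[o|b]}\cdot\frac{\sqrt{\Pr_t[o|b]}}{D_t(o)}$, so that $\big(\sum_{o\in N_\epsilon^{out}(o_t)}\frac{|r_t-1|\Pr_t[o|b]}{D_t(o)}\big)^2\le 4\big(\sum_o \Pr_t[o|b]\big)\big(\sum_o \frac{\Pr_t[o|b]}{D_t(o)^2}\big)\le 4\sum_{o\in N_\epsilon^{out}(o_t)}\frac{\Pr_t[o|b]}{D_t(o)^2}$, using $\sum_o \Pr_t[o|b]\le 1$. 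Taking the expectation over $o_t$, swapping summations to cancel one factor of $D_t(o)$, and summing over $b$ with $\sum_b\pi_t(b)\Pr_t[o|b]=\Pr_t[o]$ collapses everything to the clean form $M_t\le 4\sum_{o\in O_\epsilon}\frac{\Pr_t[o]}{D_t(o)}$. The graph-theoretic lemma of the feedback-graphs literature \cite{ACGM13}, applied to the distribution $\{\Pr_t[o]\}$ over $O$, then bounds this by $4\alpha\ln(16|O|^2T/\alpha)$ (the threshold $\epsilon$ keeps the denominators bounded away from $0$, which is where both the logarithm and the $|O|^2$ arise), giving $M_t\le 16\alpha\ln(16|O|^2T/\alpha)$.

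Finally I would combine the pieces: choosing $\epsilon=\Theta(\alpha/(|O|^2T))$ makes the bias $2|O|\epsilon T\le 1$, so $R(T)\le \tfrac{\eta}{2}\,16\alpha\ln(16|O|^2T/\alpha)\,T+\tfrac{1}{\eta}\log(|B|)+1$; optimizing over $\eta$ recovers exactly the stated $\eta=\sqrt{\log(|B|)/(8T\alpha\ln(16|O|^2T/\alpha))}$ and the bound $2\sqrt{8\alpha T\log(|B|)\ln(16|O|^2T/\alpha)}+1$. I expect the main obstacle to be precisely step (ii): reducing the square-of-a-sum to the single-sum quantity $\sum_o \Pr_t[o]/D_t(o)$ without paying an extra factor of $|O|$, and invoking the correct thresholded version of the independence-number lemma with constants matching the $\ln(16|O|^2T/\alpha)$ term.
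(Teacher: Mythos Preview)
Your proposal is correct and follows essentially the same route as the paper: bound the bias introduced by the $O_\epsilon$ thresholding, reduce the per-round second moment via a square-of-sum inequality to $4\sum_{o\in O_\epsilon}\Pr_t[o]/D_t(o)$, invoke the independence-number lemma (the paper cites Lemma~5 of \cite{ACDK15}), then choose $\epsilon$ and optimize $\eta$. Two small remarks: where you apply Cauchy--Schwarz, the paper equivalently applies Jensen (viewing the inner sum as a sub-probability expectation over $o\sim\Pr_t[\cdot\mid b]$), and your stated threshold $\epsilon=\Theta(\alpha/(|O|^2T))$ does not reproduce the constant inside the logarithm --- the paper simply takes $\epsilon=1/(4|O|T)$, which makes the bias contribution equal to~$1$ and turns $4|O|/(\alpha\epsilon)$ into exactly $16|O|^2T/\alpha$.
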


\noindent
In the case of learning in auctions, the feedback graph over outcomes can encode the possibility that winning an item can help you uncover your value for other items. For instance, in a combinatorial auction for $m$ items, the reader should think of each node in the feedback graph as a bundle of items. Then the graph encodes the fact that winning bundle $o$ can teach you the value for all bundles $o'\in N^{out}(o)$. If the feedback graph has small dependence number then a much better regret is achieved than the dependence on $\sqrt{2^m}$, that would have been derived by our outcome-based feedback results of prior sections, if we treated each bundle of items separately as an outcome. 

\section{Experimental Results}\label{sec:experiments}
\label{sec:experiments}

In this section, we present our results from our comparative analysis between EXP3 and WIN-EXP on a simulated sponsored search system that we built and which is a close proxy of the actual sponsored search algorithms deployed in the industry. We implemented\footnote{Our code is publicly available on \href{https://github.com/zfengharvard/bandit-sponsored-search}{github}.} the weighted GSP auction as described in definition \ref{weighted-gsp}. The auctioneer draws i.i.d rank scores that are bidder and timestep specific; as is the case throughout our paper, here we have assumed a stochastic auctioneer with respect to the rank scores. After bidding, the bidder will always be able to observe the allocation function. Now, if the bidder gets allocated to a slot and she gets clicked, then, she is able observe the \emph{value} and the payment curve. Values are assumed to lie in $[0,1]$ and they are obliviously adversarial. Finally, the bidders choose bids from some $\epsilon$-discretized grid of $[0,1]$ (in all experiments, apart from the ones comparing the regrets for different discretizations, we use $\epsilon=0.01$) and update the probabilities of choosing each discrete bid according to EXP3 or $\winexp$. Regret is measured with respect to the best fixed discretized bid in hindsight.

We distinguish three cases of the bidding behavior of the rest of the bidders (apart from our learner): i) all of them are \emph{stochastic} adversaries drawing bids at random from some distribution, ii) there is a subset of them that are bidding \emph{adaptively}, by using an EXP3 online learning algorithm and iii) there is a subset of them that are bidding \emph{adaptively} but using a WINEXP online learning algorithm (self play). Validating our theoretical claims, in all three cases, $\winexp$ outperforms EXP3 in terms of regret. 
We generate the event of whether a bidder gets clicked or not as follows: we draw a timestep specific threshold value in $[0,1]$ and the learner gets a click in case the CTR of the slot she got allocated (if any) is greater than this threshold value. Note here that the choice of a timestep specific threshold imposes \emph{monotonicity}, i.e. if the learner did not get a click when allocated to a slot with CTR $x_t(b)$, she should not be able to get a click from slots with lower CTRs. %
We ran simulations with $3$ different distributions of generating CTRs, so as to understand what is the effect of different levels of click-through-rates on the variance of our regret: i) $x_t(b) \sim U[0.1, 1]$, ii) $x_t(b) \sim U[0.3,1]$ and iii) $x_t(b) \sim U[0.5,1]$. Finally, we address robustness of our results to errors in CTR estimation. For this, we add random noise to the CTRs of each slot and we report to the learners the allocation and payment functions that correspond to the erroneous CTRs. The noise was generated according to a normal distribution $\mathcal{N}(0, \frac{1}{m})$, where $m$ could be viewed as the number of training samples on which a machine learning algorithm was ran in order to output the CTR estimate ($m = 100, 1000, 10000$).  

For each of the following simulations, there are $N = 20$ bidders, $k = 3$ slots and we ran the experiment for each round for a total of $30$ times. For the simulations that correspond to adaptive adversaries we used $a = 4$ adversaries. Our results for the \emph{cumulative regret} are presented below. We measured ex-post regret with respect to the realized thresholds that determine whether a bidder gets clicked or not. Note that the solid plots correspond to the emprical mean of the regret, whereas the opaque bands correspond to the $10$-th and $90$-th percentile. 

\paragraph{Different discretizations.}
In Figure~\ref{fig:diff_eps} we present the comparative analysis of the estimated average regret of $\winexp$ vs EXP3 for different discretizations, $\eps$, of the bidding space when the learner faces adversaries that are stochastic, adaptive using EXP3 and adaptive using WINEXP. As it was expected from the theoretical analysis, the regret of $\winexp$, as the disretized space ($|B|$) increases exponentially, remains almost unchanged compared to the regret of EXP3. In summary, \emph{finer discretization of the bid space helps our $\winexp$ algorithm's performance, but hurts the performance of EXP3.}

\begin{figure}[htpb]
\centering
\begin{subfigure}{0.33\textwidth}
\centering
    \includegraphics[width=0.9\textwidth]{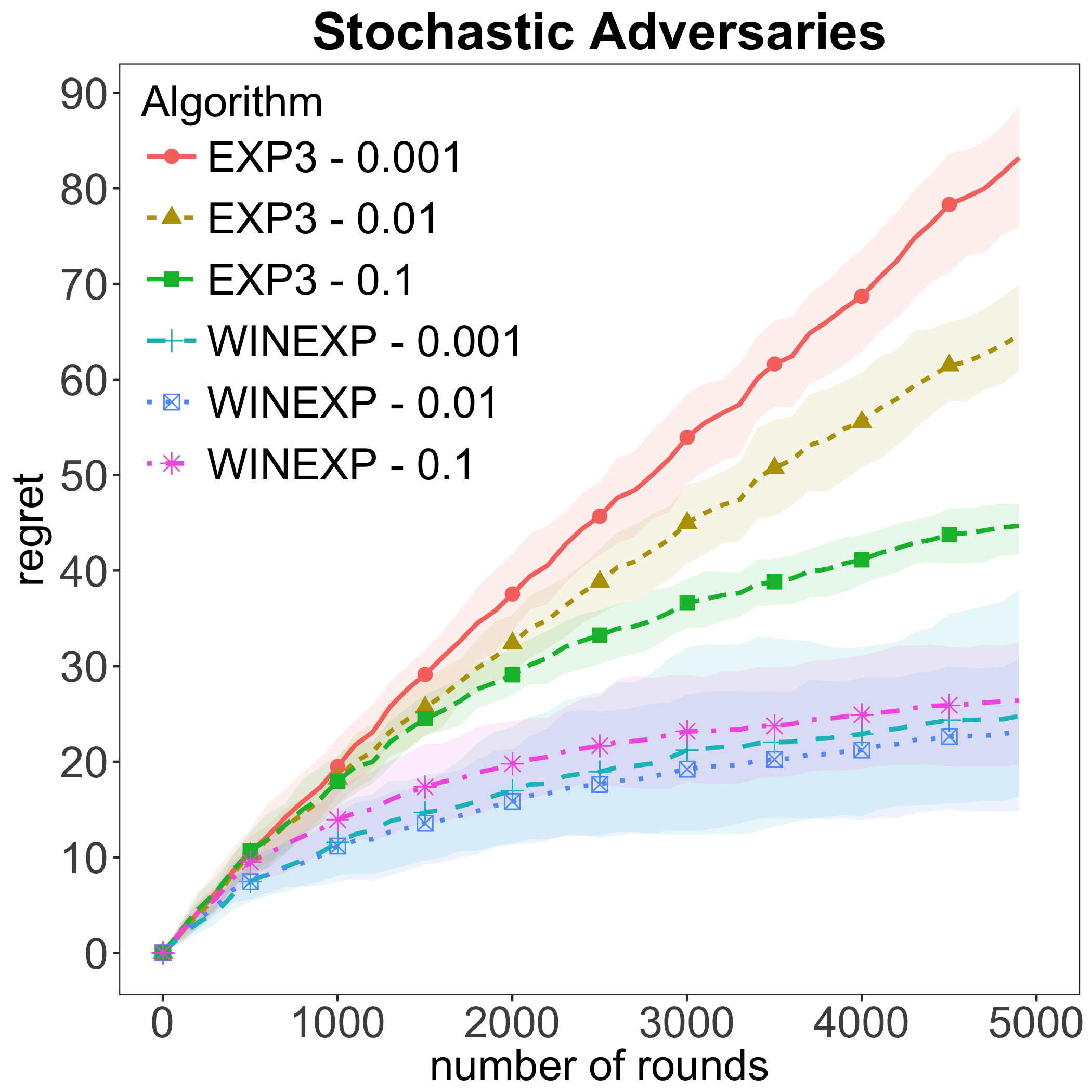}
    \label{fig:discr_ob}
\end{subfigure}%
\begin{subfigure}{0.33\textwidth}
\centering
    \includegraphics[width=0.9\textwidth]{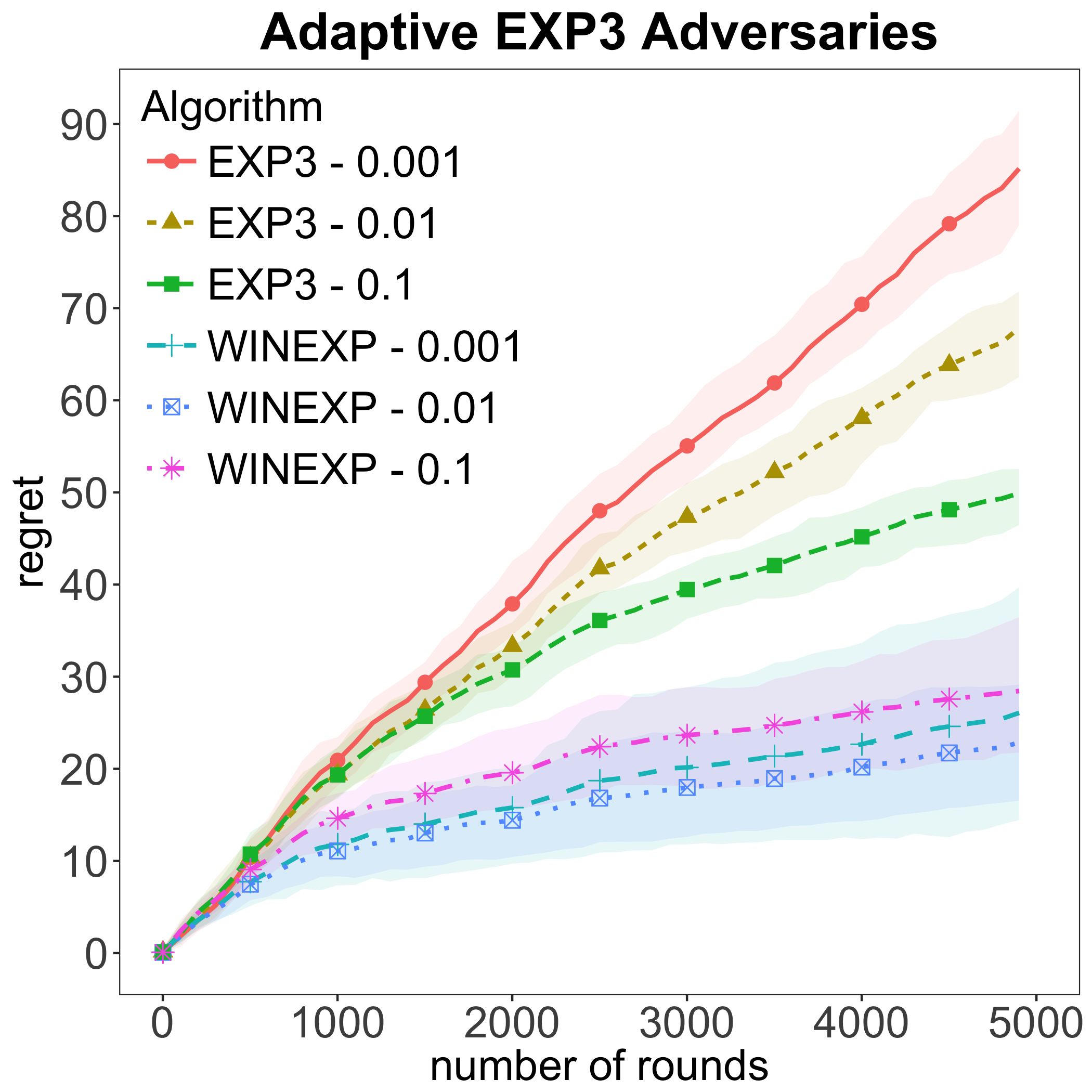}
    \label{fig:discr_sp}
\end{subfigure}%
\begin{subfigure}{0.33\textwidth}
\centering
    \includegraphics[width=0.9\textwidth]{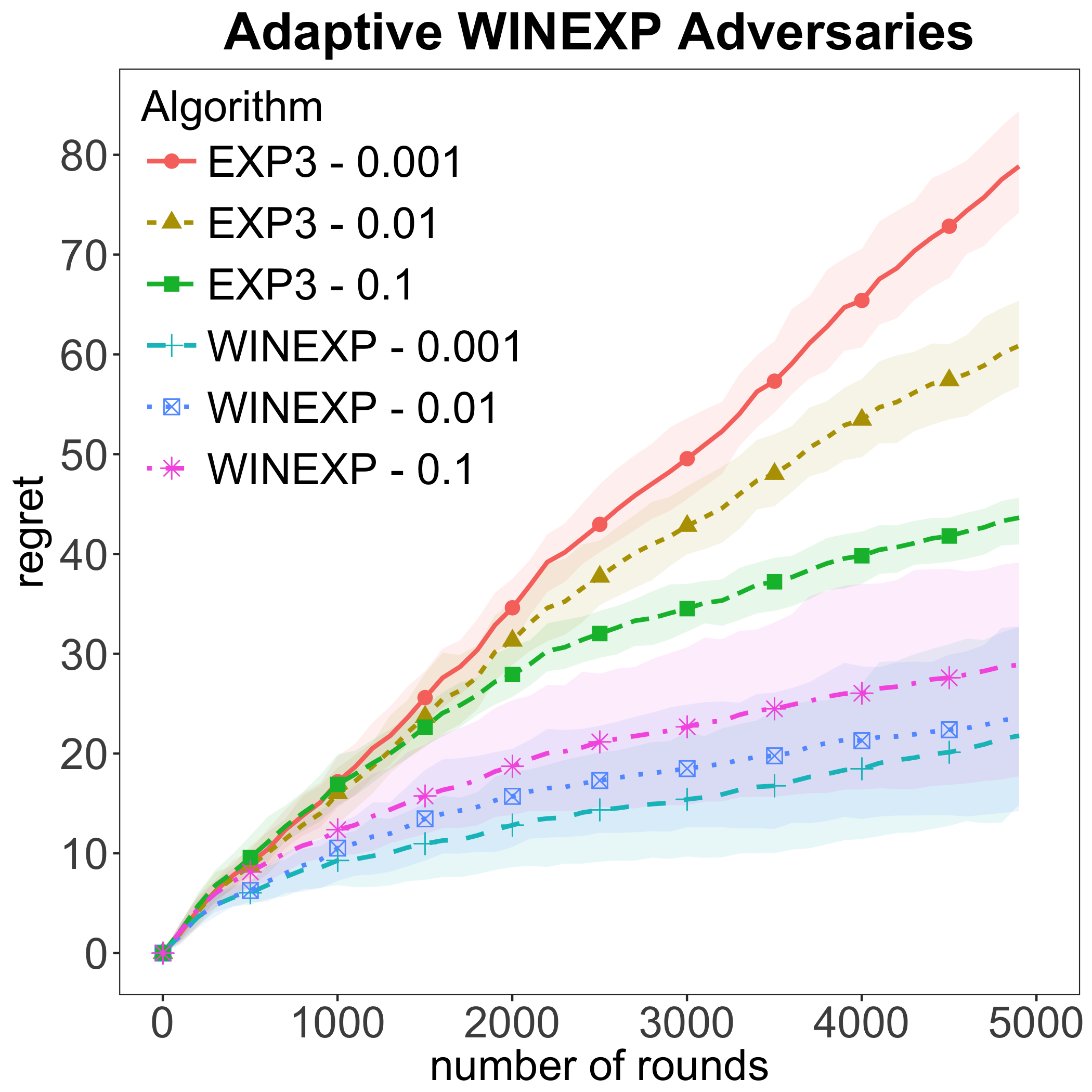}
    \label{fig:discr_winexp_adv}
\end{subfigure}
\caption{Regret of $\winexp$ vs EXP3 for different discretizations $\epsilon$ (CTR $\sim U[0.5,1]$).}
\label{fig:diff_eps}
\end{figure}

\paragraph{Different CTR Distributions.}
In Figures~\ref{fig:ob_diff_ctr}, \ref{fig:sp_diff_ctr} and \ref{fig:winexp_adv_diff_ctr} we present the results of the regret performance of $\winexp$ compared to EXP3, when the learner discretizes the bidding space with $\eps = 0.01$ and when she faces stochastic, adaptive adversaries using EXP3 and adaptive adversaries using WINEXP, respectively. For all three cases, the estimated average regret of $\winexp$ is less than the estimated average regret that EXP3 yields. 

\begin{figure}[htpb]
\centering
\begin{subfigure}{0.33\textwidth}
\centering
    \includegraphics[width=0.9\textwidth]{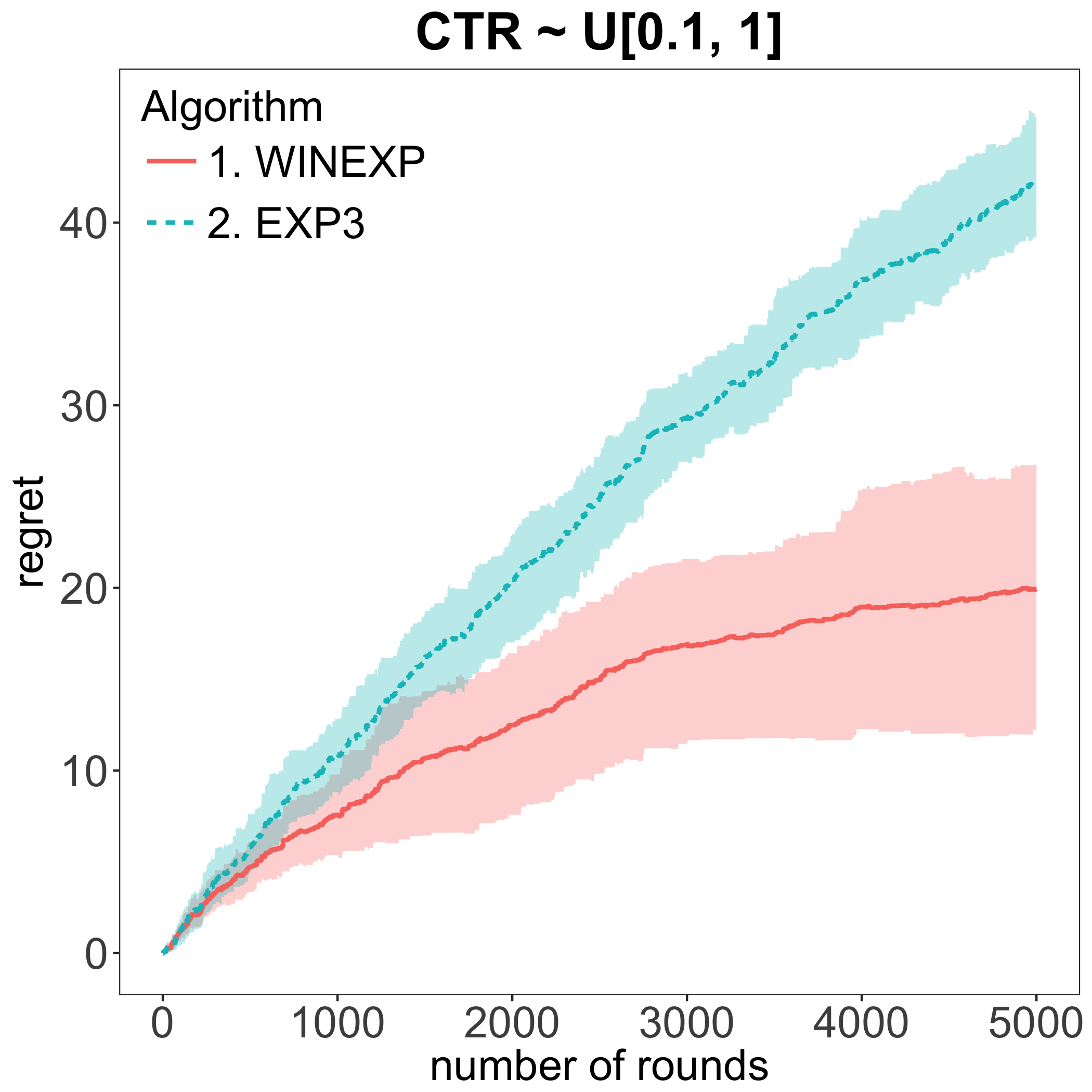}
    \label{fig:ob_cl_0.1}
\end{subfigure}%
\begin{subfigure}{0.33\textwidth}
\centering
    \includegraphics[width=0.9\textwidth]{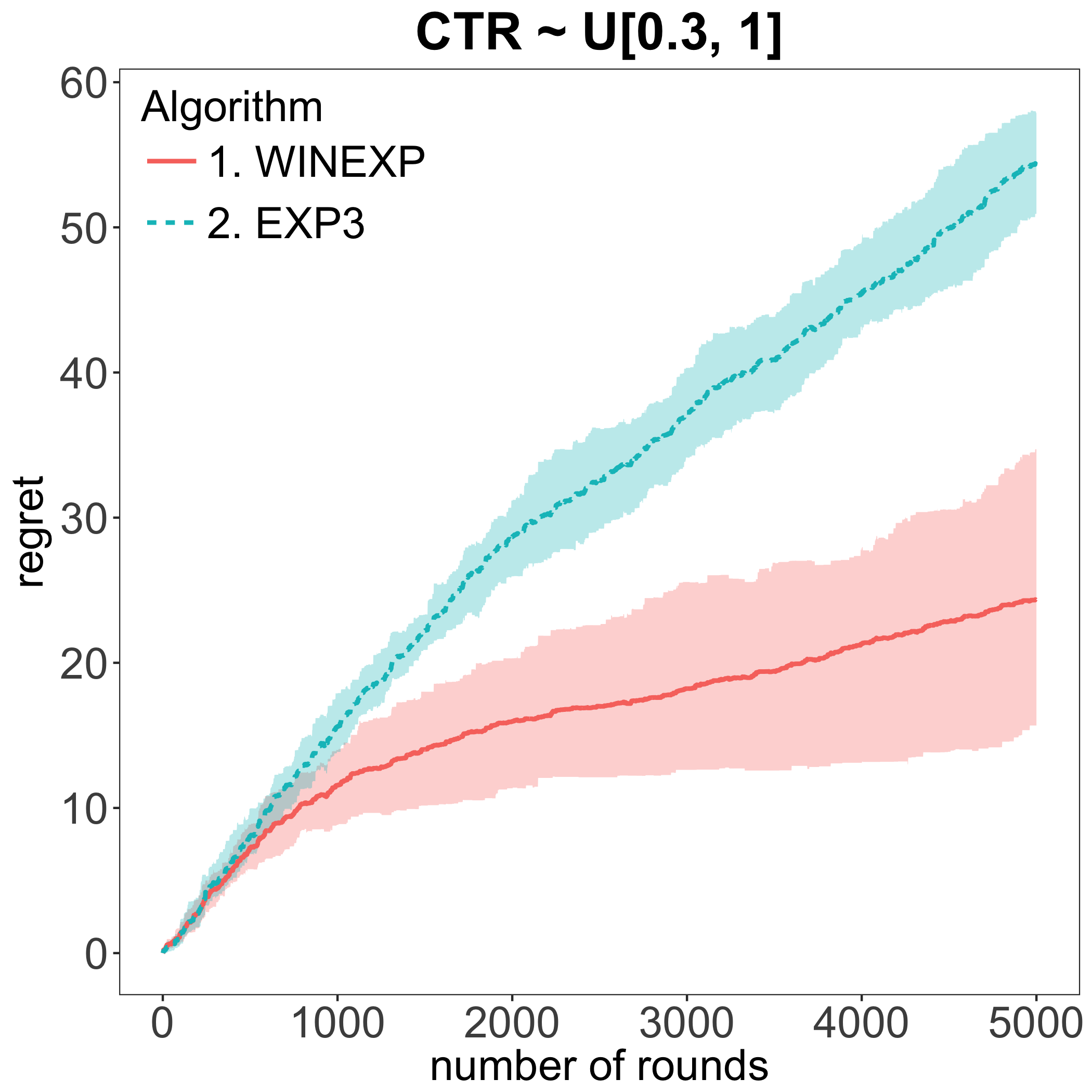}
    \label{fig:ob_cl_0.3}
\end{subfigure}%
\begin{subfigure}{0.33\textwidth}
\centering
    \includegraphics[width=0.9\textwidth]{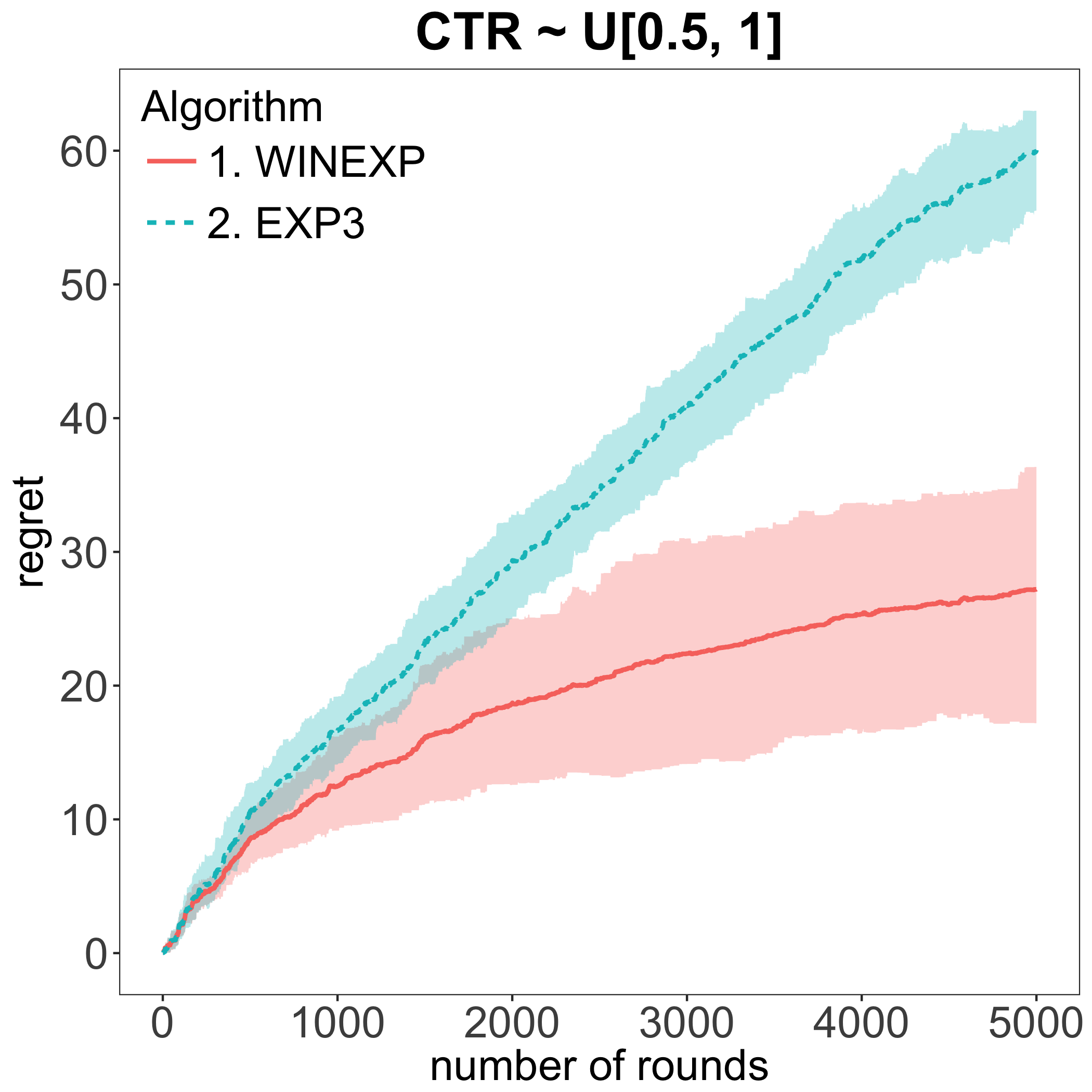}
    \label{fig:ob_cl_std}
\end{subfigure}%
\caption{Regret of $\winexp$ vs EXP3 for different CTR distributions and stochastic adversaries, $\eps = 0.01$.}
\label{fig:ob_diff_ctr}
\end{figure}

\begin{figure}[htpb]
\centering
\begin{subfigure}{0.33\textwidth}
\centering
    \includegraphics[width=0.9\textwidth]{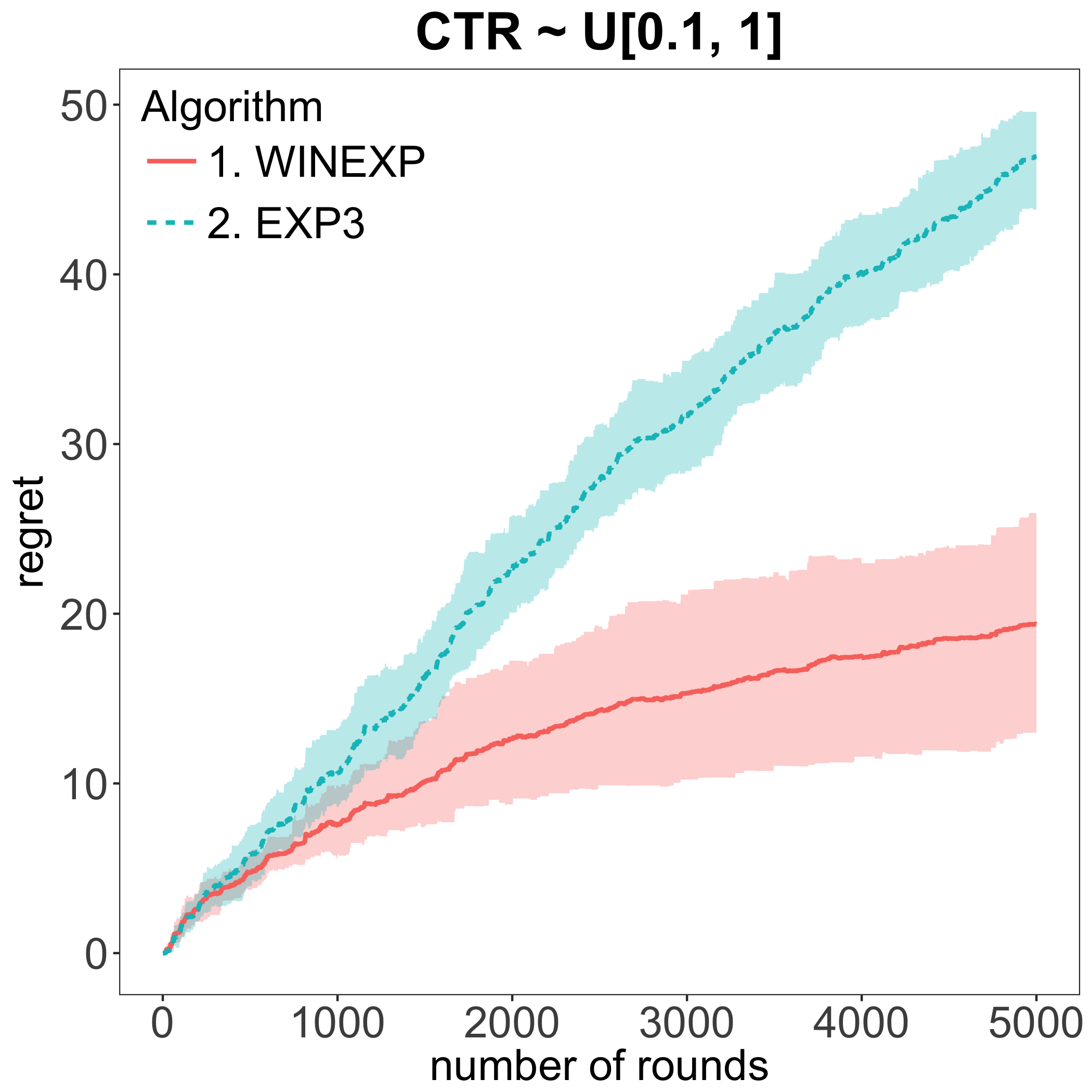}
    \label{fig:sp_cl_0.1}
\end{subfigure}%
\begin{subfigure}{0.33\textwidth}
\centering
    \includegraphics[width=0.9\textwidth]{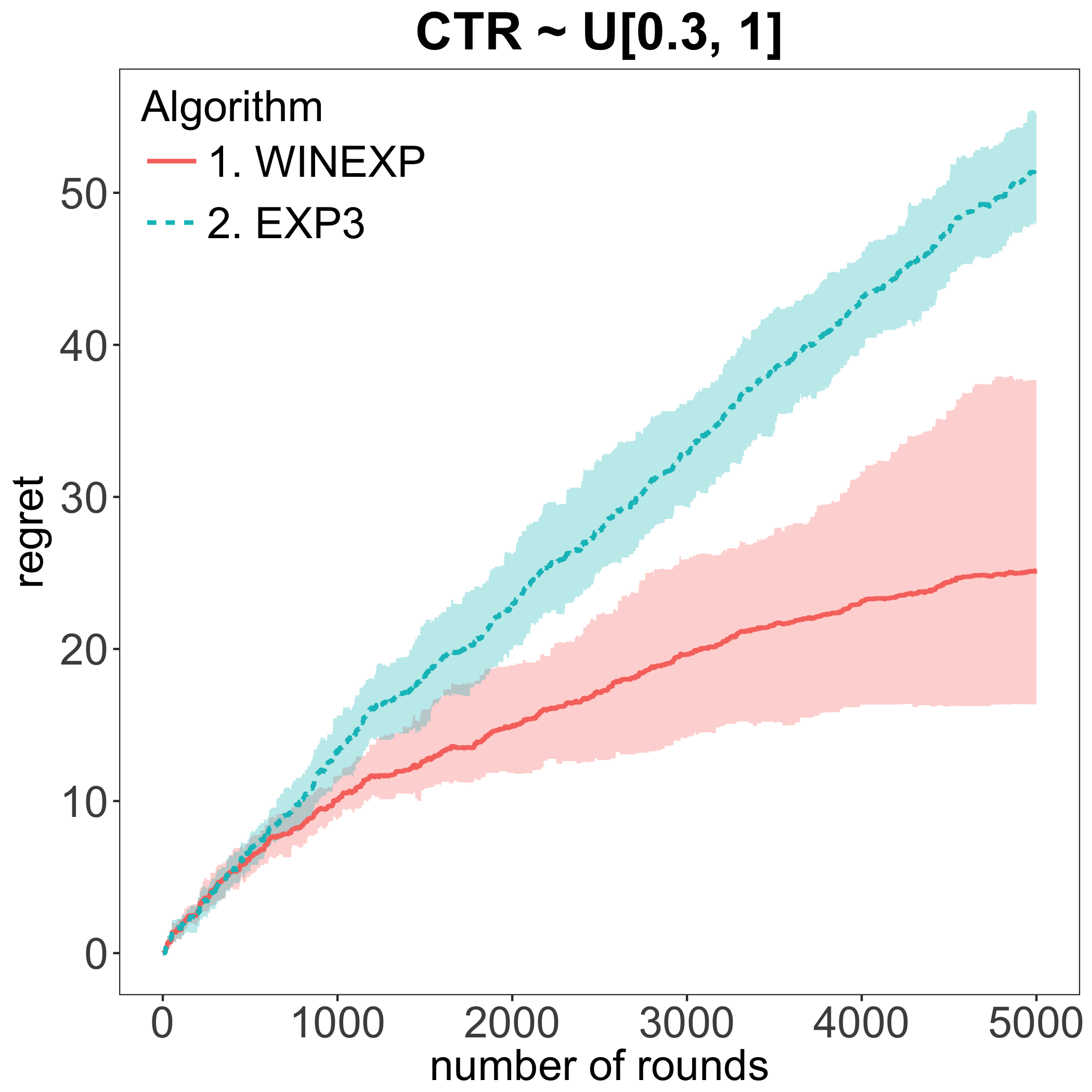}
    \label{fig:sp_cl_0.3}
\end{subfigure}%
\begin{subfigure}{0.33\textwidth}
\centering
    \includegraphics[width=0.9\textwidth]{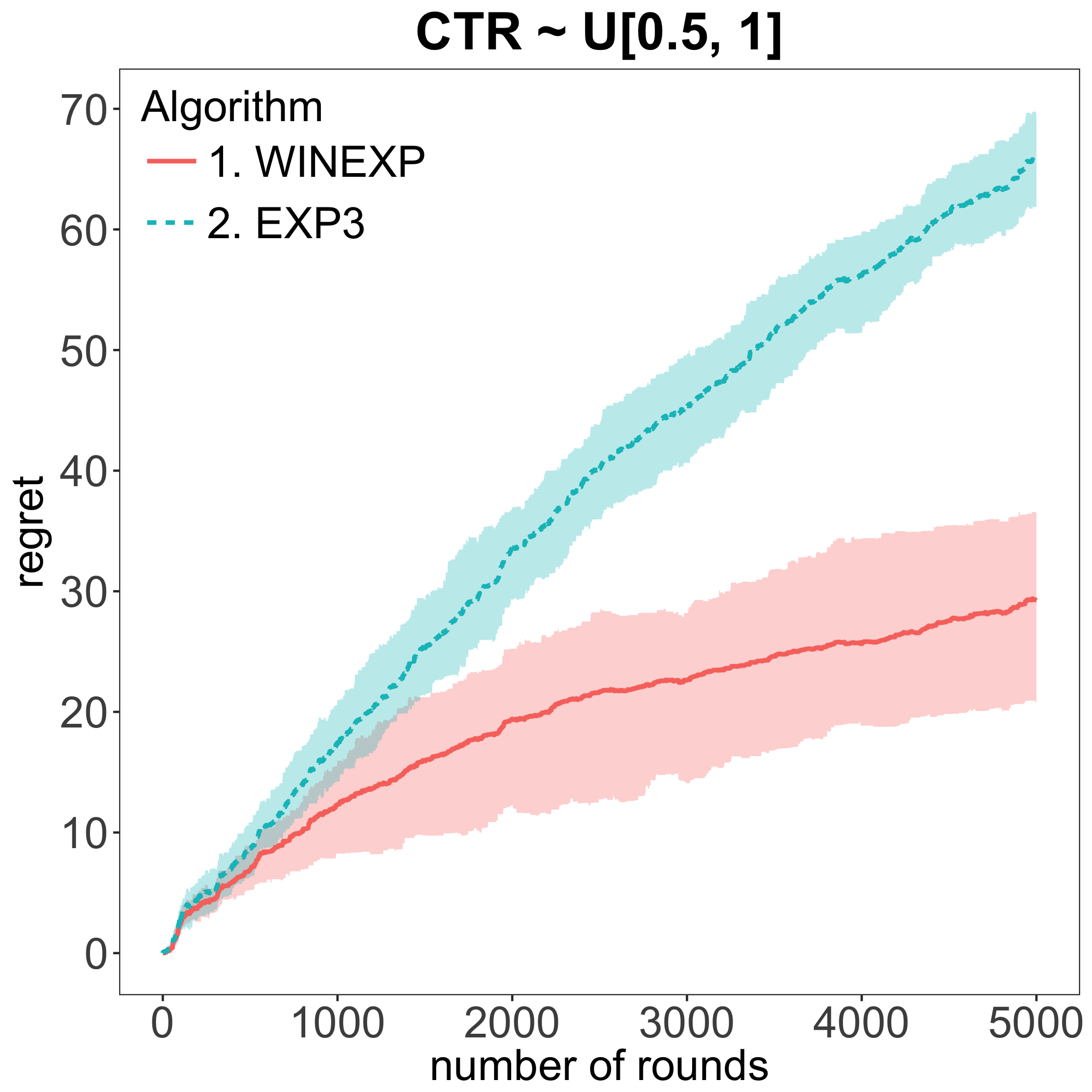}
    \label{fig:sp_cl_std}
\end{subfigure}%
\caption{Regret of $\winexp$ vs EXP3 for different CTR distributions and adaptive EXP3 adversaries, $\eps=0.01$.}
\label{fig:sp_diff_ctr}
\end{figure}

\begin{figure}[htpb]
\centering
\begin{subfigure}{0.33\textwidth}
\centering
    \includegraphics[width=0.9\textwidth]{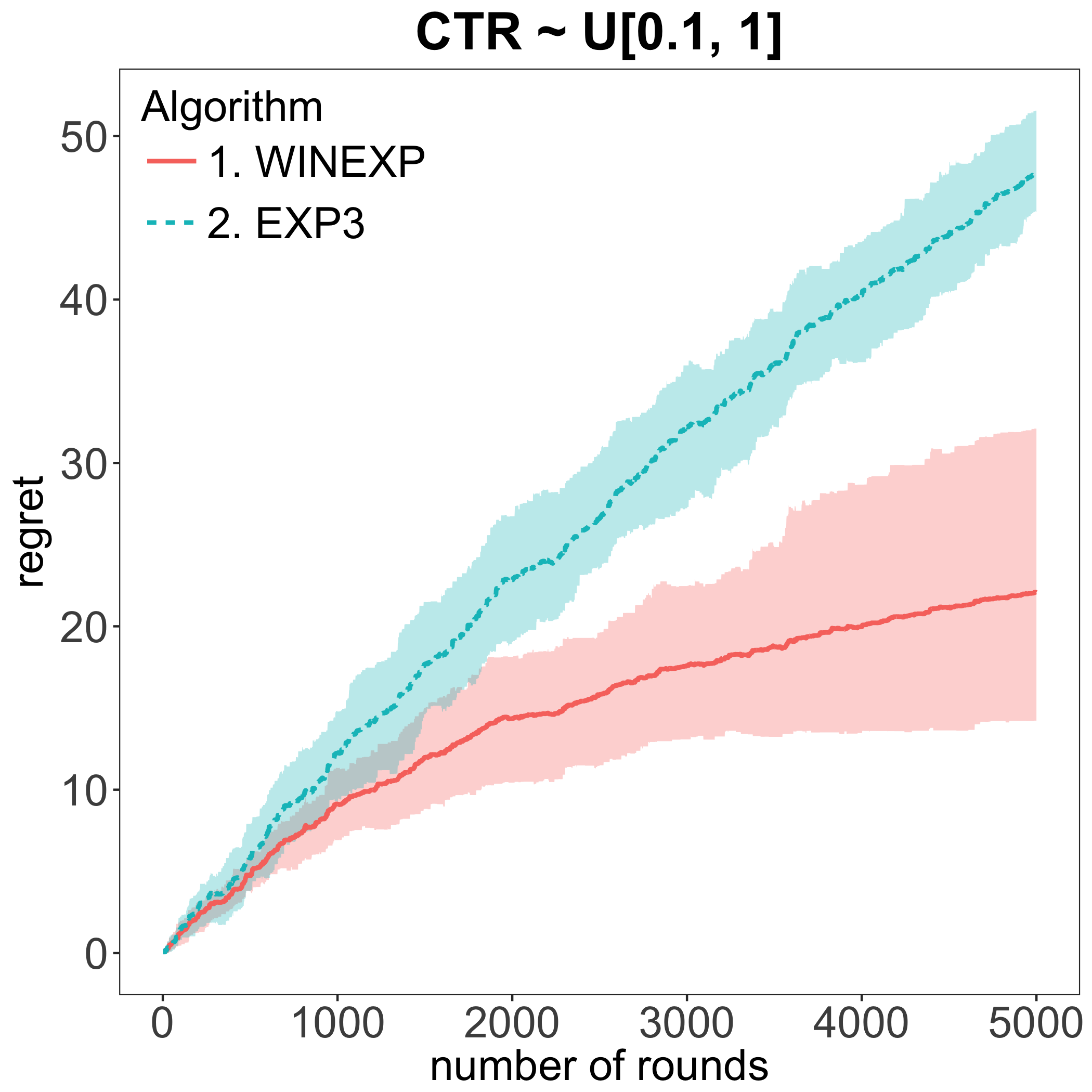}
    \label{fig:winexp_vs_winexp_cl_0.1}
\end{subfigure}%
\begin{subfigure}{0.33\textwidth}
\centering
    \includegraphics[width=0.9\textwidth]{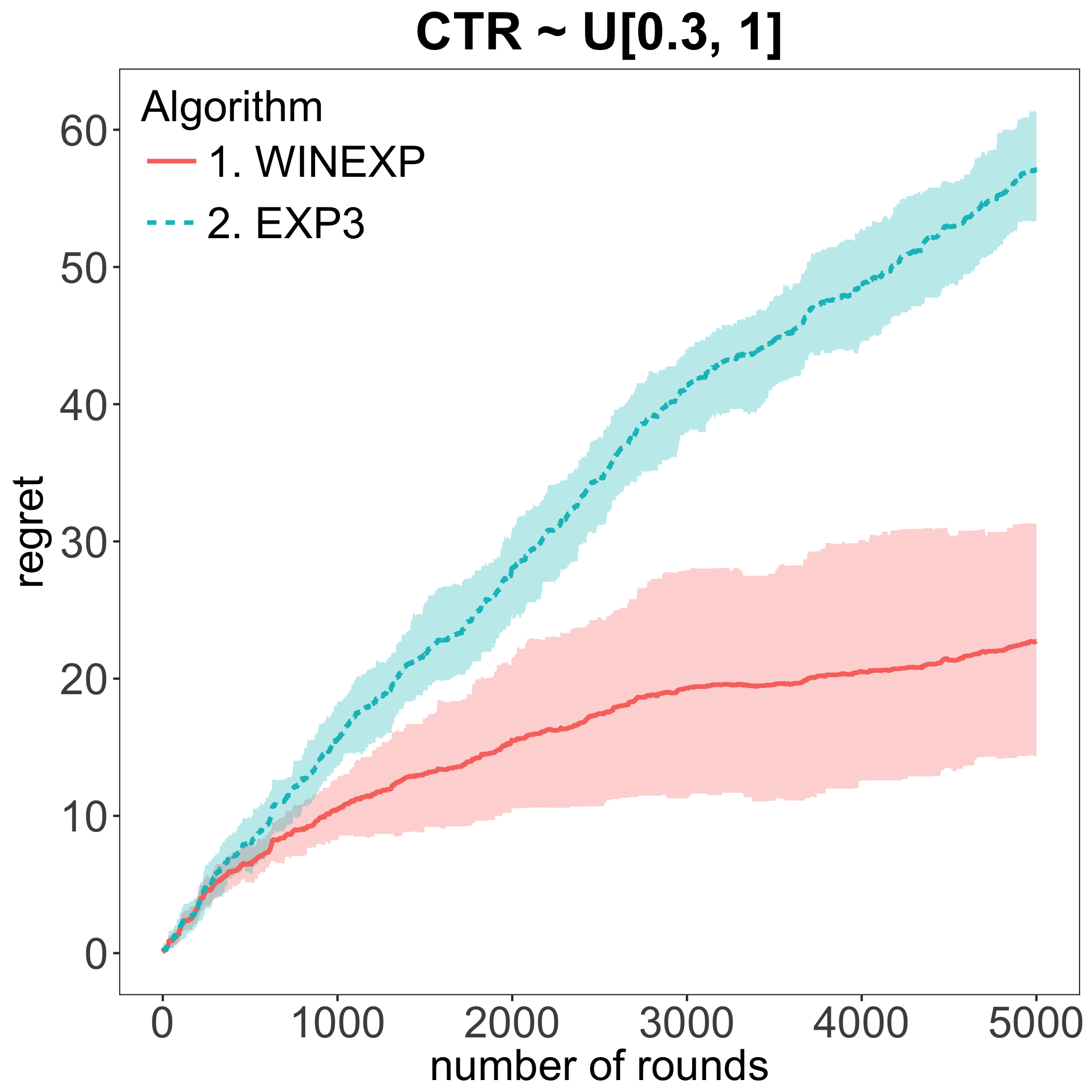}
    \label{fig:winexp_vs_winexp_0.3}
\end{subfigure}%
\begin{subfigure}{0.33\textwidth}
\centering
    \includegraphics[width=0.9\textwidth]{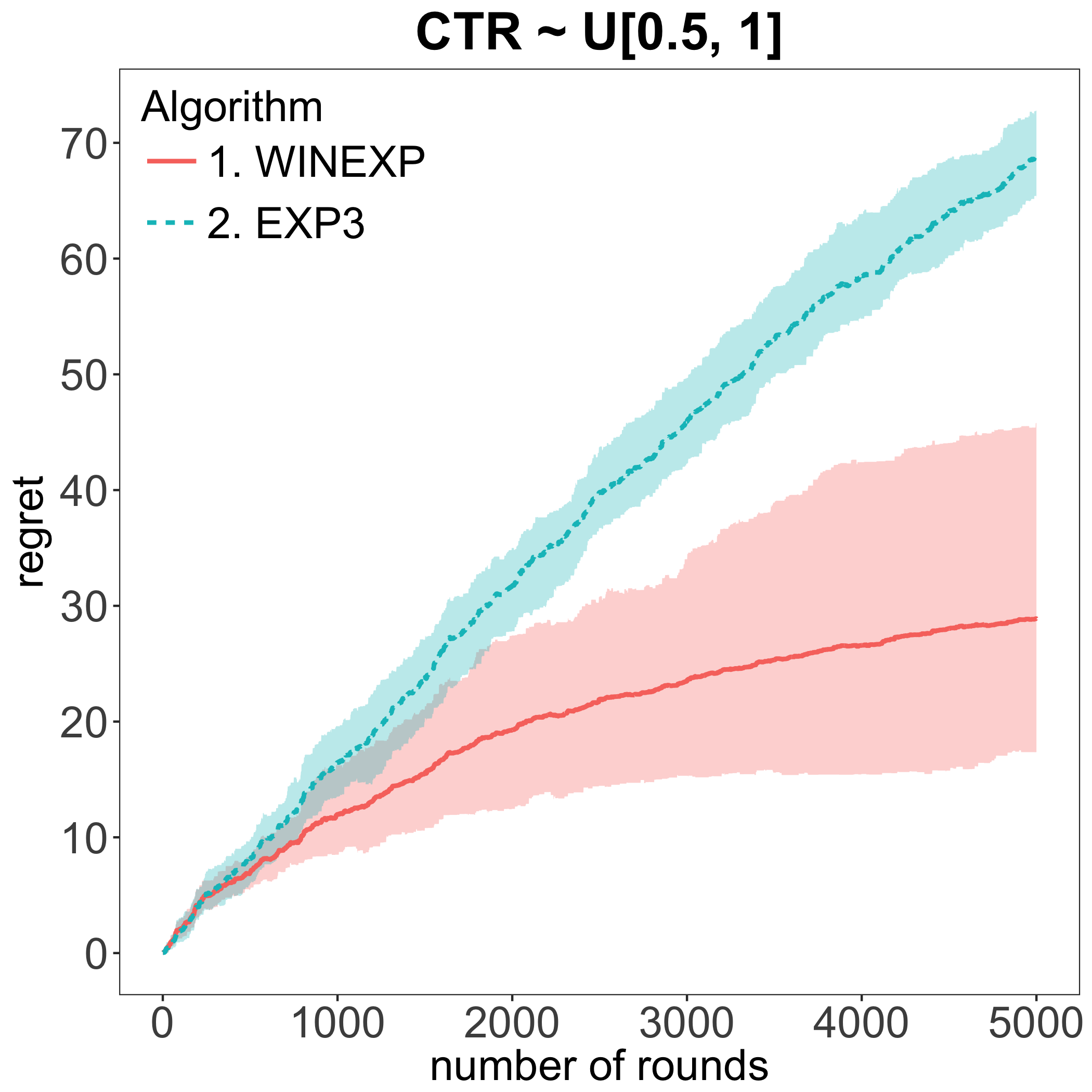}
    \label{fig:winexp_vs_winexp_05}
\end{subfigure}%
\caption{Regret of $\winexp$ vs EXP3 for different CTR distributions and adaptive WINEXP adversaries, $\eps=0.01$.}
\label{fig:winexp_adv_diff_ctr}
\end{figure}

\paragraph{Robustness to Noisy CTR Estimates.} In Figures \ref{fig:oblivious_noise}, \ref{fig:self_play_noise}, \ref{fig:winexp_adv_noise} we empirically tested the robustness of our algorithm to random perturbations of the allocation function that the auctioneer presents to the learner, for perturbations of the form $\mathcal{N}\left( 0, \frac{1}{m}\right)$, where $m$ could be viewed as the number of training examples used from the auctioneer in order to derive an approximation of the allocation curve. When the number of training samples is relatively small ($m = 100$) the empirical mean of WINEXP outperforms EXP3 in terms of regret, i.e., it is more robust to such perturbations. As the number of training samples increases, WINEXP clearly outperforms EXP3. \cpdelete{For very few $m=100$ samples $\winexp$ performs slightly worse. However, when the training samples are more than $m=1000$ $\winexp$ is more robust to such perburbation not just on average, but also at every timestep after some round $T$. }The latter validates one of our claims throughout the paper; namely, that even though the learner might not see the exact allocation curve, but a randomly perturbed proxy,\vsdelete{(as is the case when the CTR curve is given by a bid simulator for a horizon of almost a week),} $\winexp$ still performs better than the EXP3.

\begin{figure}[htpb]
\centering
\begin{subfigure}{0.33\textwidth}
\centering
    \includegraphics[width=0.9\textwidth]{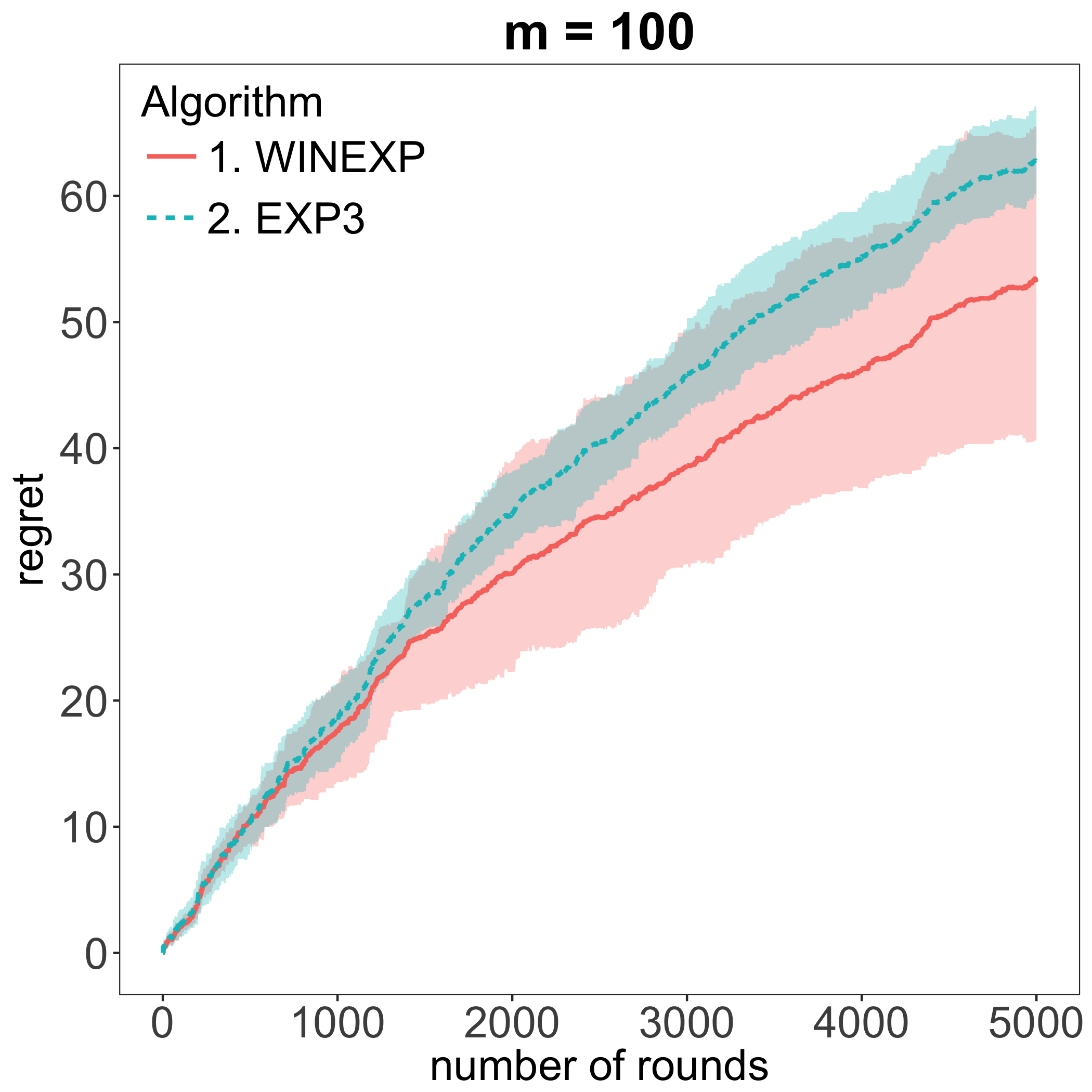}
    \label{fig:ob_no_100}
\end{subfigure}%
\begin{subfigure}{0.33\textwidth}
\centering
    \includegraphics[width=0.9\textwidth]{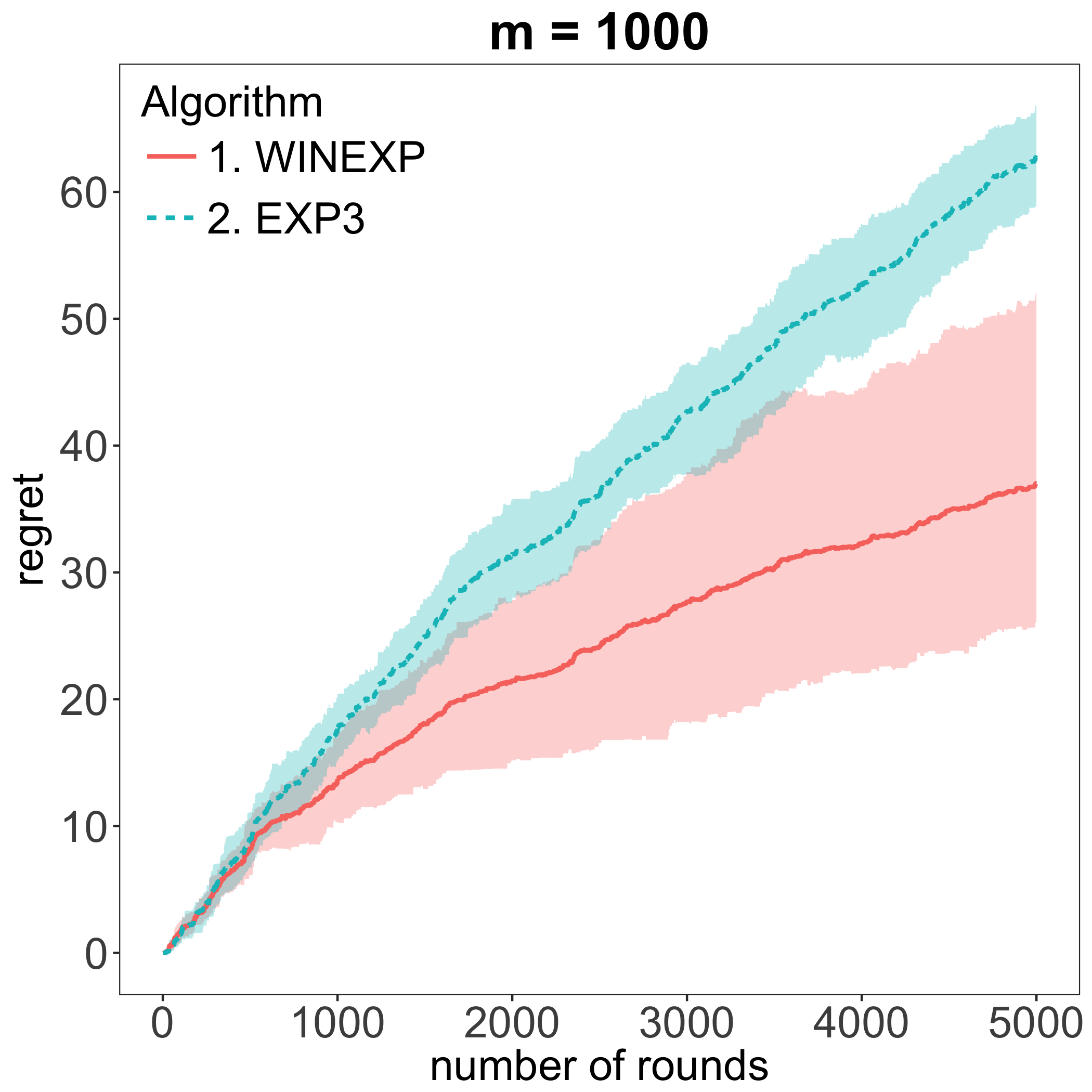}
    \label{fig:ob_no_1000}
\end{subfigure}%
\begin{subfigure}{0.33\textwidth}
\centering
    \includegraphics[width=0.9\textwidth]{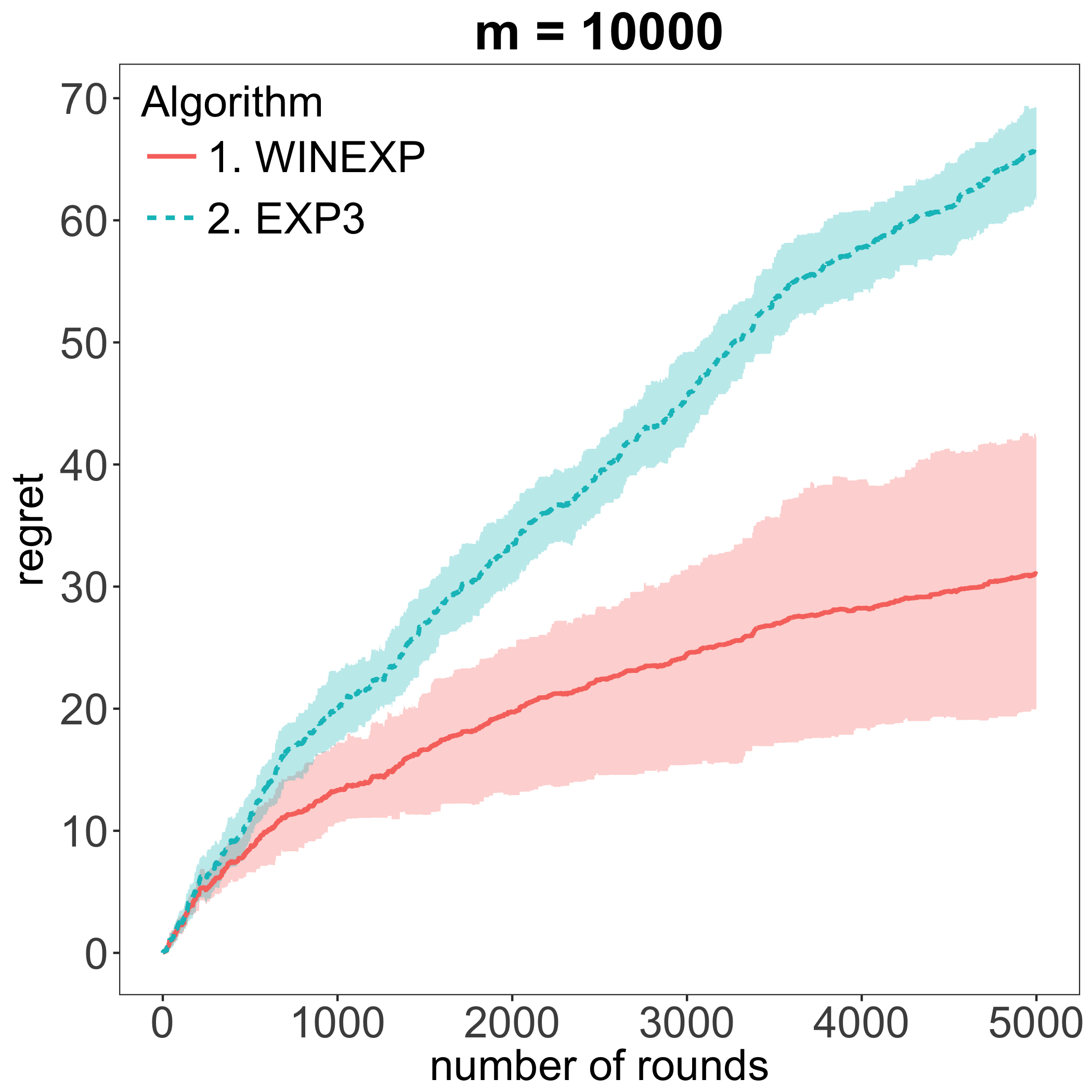}
    \label{fig:ob_no_10000}
\end{subfigure}
\caption{Regret of $\winexp$ vs EXP3 with noise $\sim \mathcal{N}\left(0,\frac{1}{m}\right)$ for stochastic adversaries, $\eps = 0.01$.}
\label{fig:oblivious_noise}
\end{figure}

\begin{figure}[htpb]
\centering
\begin{subfigure}{0.33\textwidth}
\centering
    \includegraphics[width=0.9\textwidth]{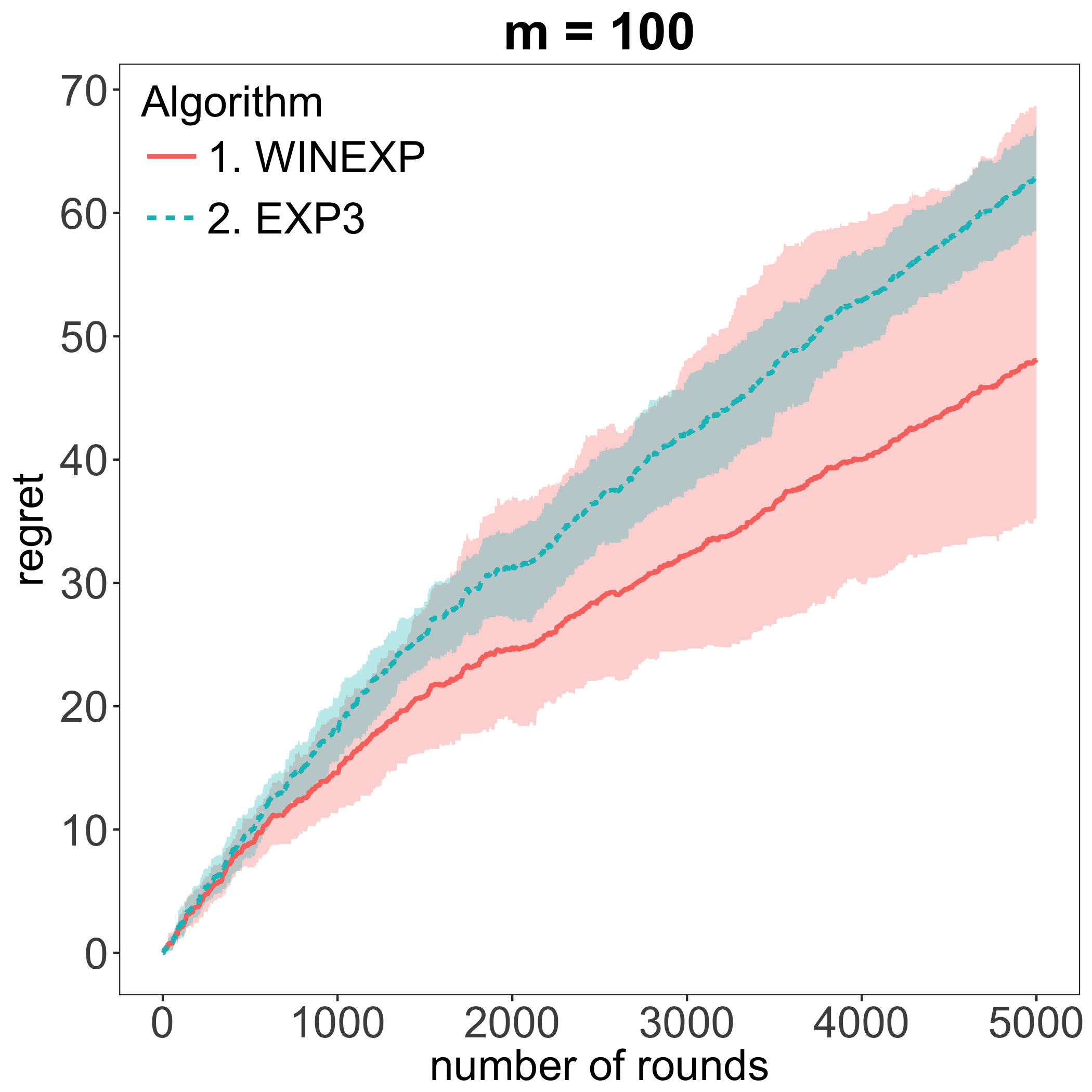}
    \label{fig:sp_no_100}
\end{subfigure}%
\begin{subfigure}{0.33\textwidth}
\centering
    \includegraphics[width=0.9\textwidth]{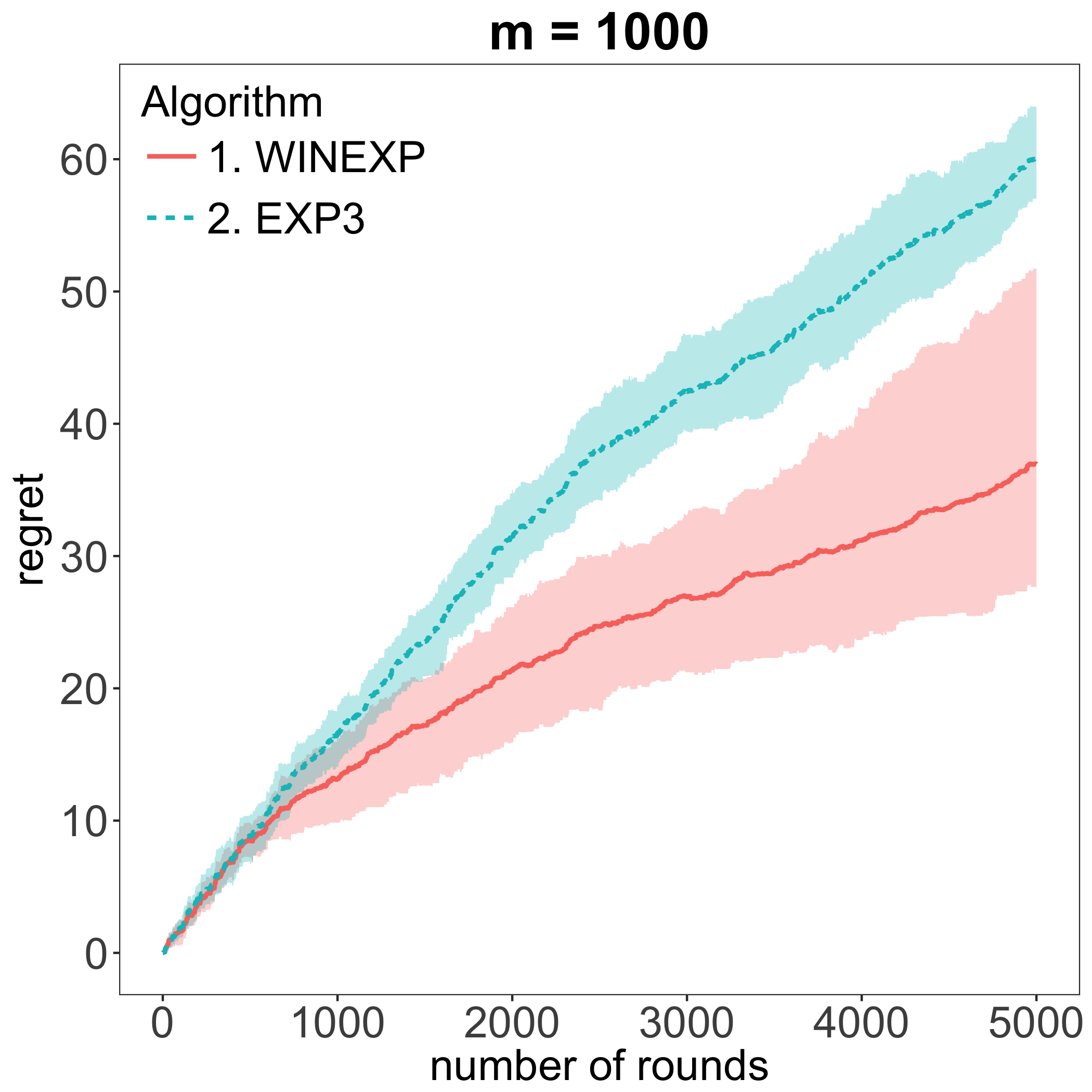}
    \label{fig:sp_no_1000}
\end{subfigure}%
\begin{subfigure}{0.33\textwidth}
\centering
    \includegraphics[width=0.9\textwidth]{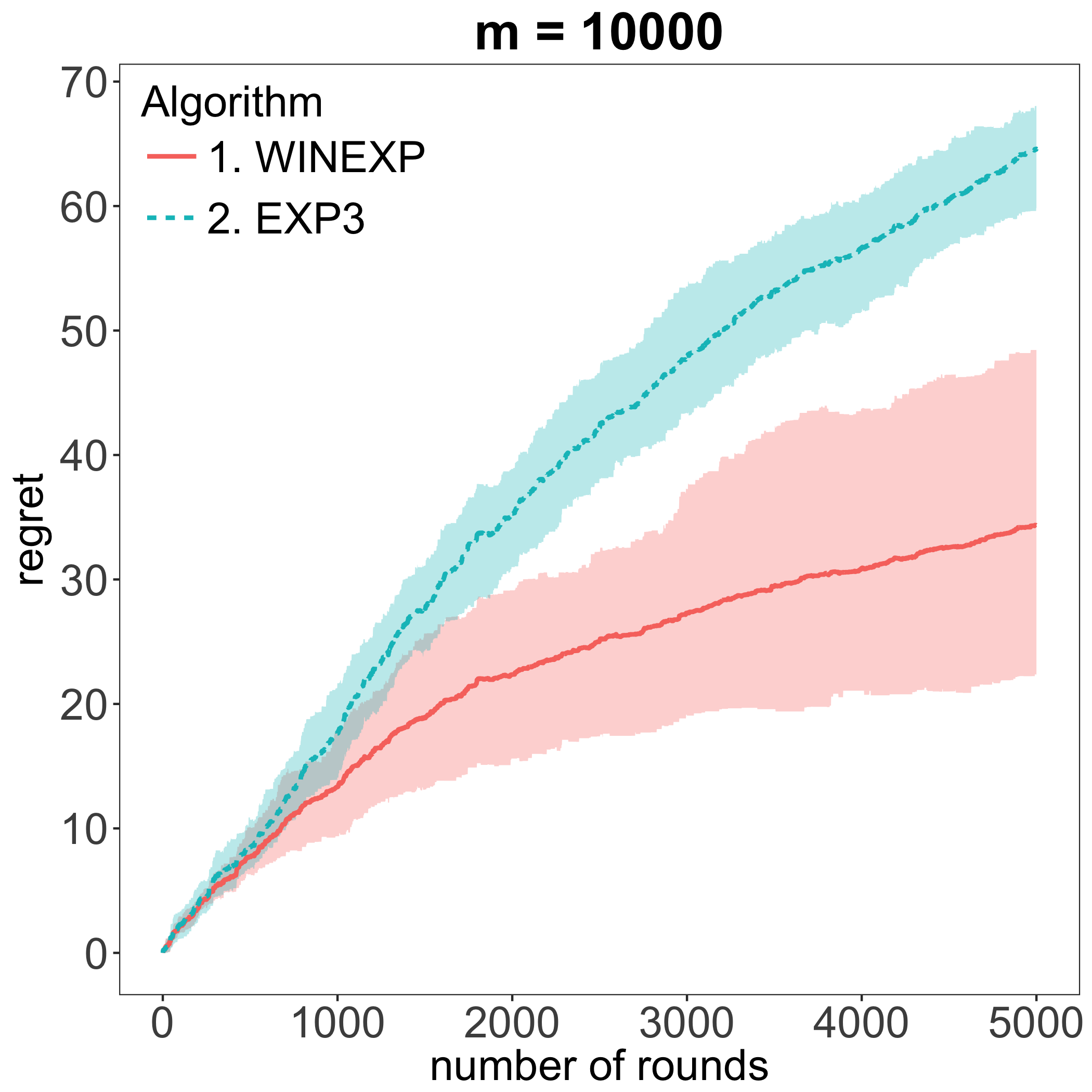}
    \label{fig:sp_no_10000}
\end{subfigure}
\caption{Regret of $\winexp$ vs EXP3 with noise $\sim \mathcal{N}\left(0,\frac{1}{m}\right)$ for adaptive EXP3 adversaries, $\eps=0.01$.}
\label{fig:self_play_noise}
\end{figure}

\begin{figure}[htpb]
\centering
\begin{subfigure}{0.33\textwidth}
\centering
    \includegraphics[width=0.9\textwidth]{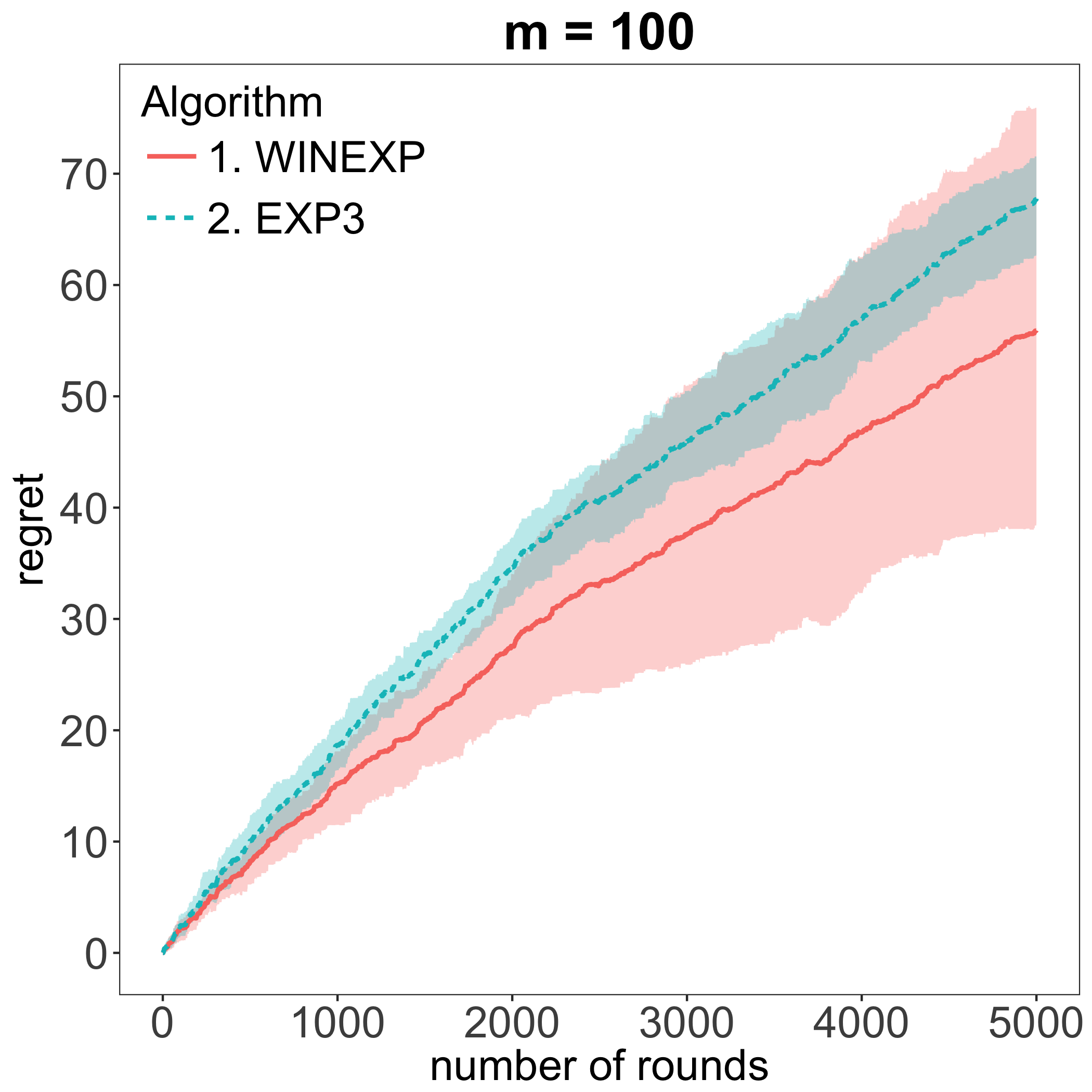}
    \label{fig:winexp_vs_winexp_no_100}
\end{subfigure}%
\begin{subfigure}{0.33\textwidth}
\centering
    \includegraphics[width=0.9\textwidth]{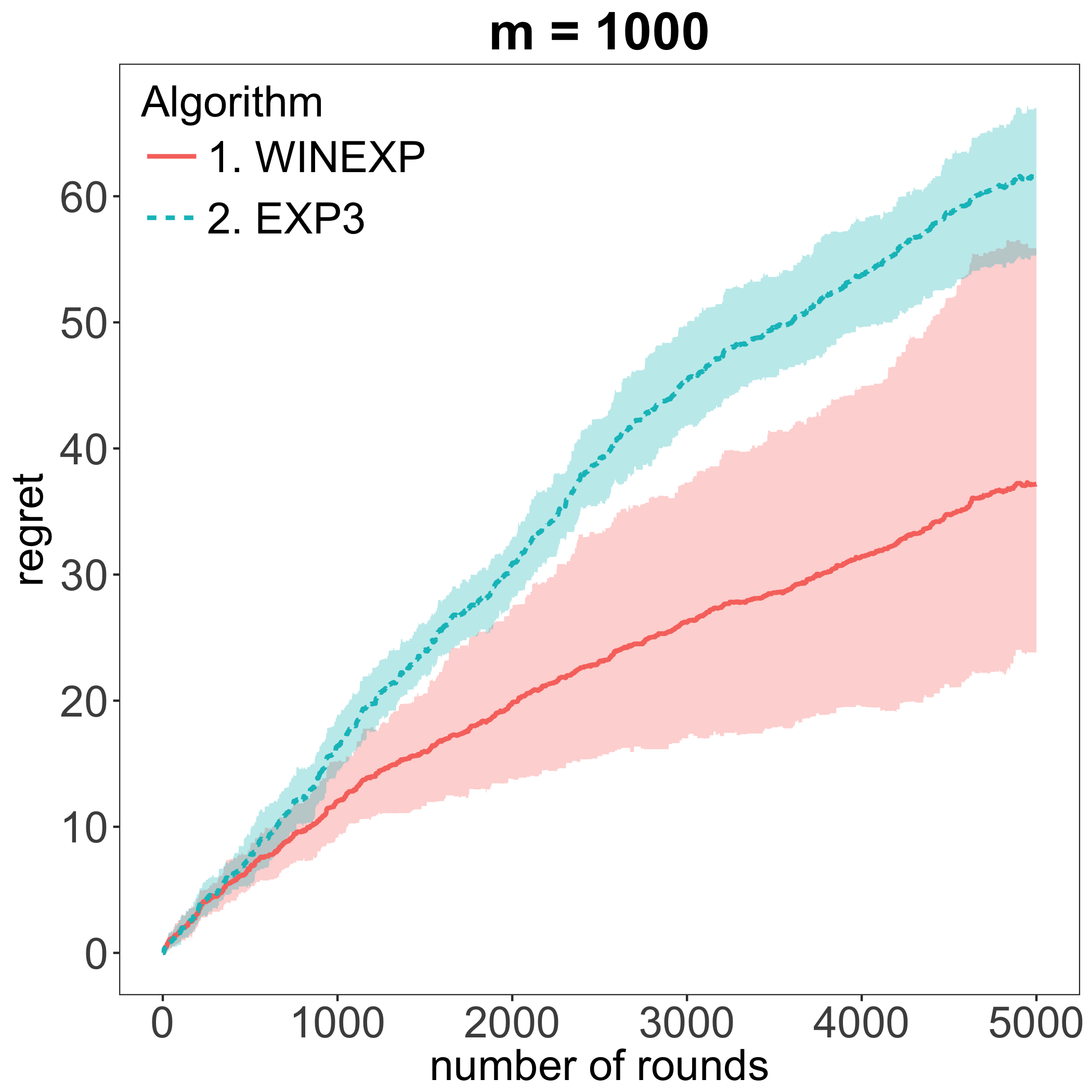}
    \label{fig:winexp_vs_winexp_no_1000}
\end{subfigure}%
\begin{subfigure}{0.33\textwidth}
\centering
    \includegraphics[width=0.9\textwidth]{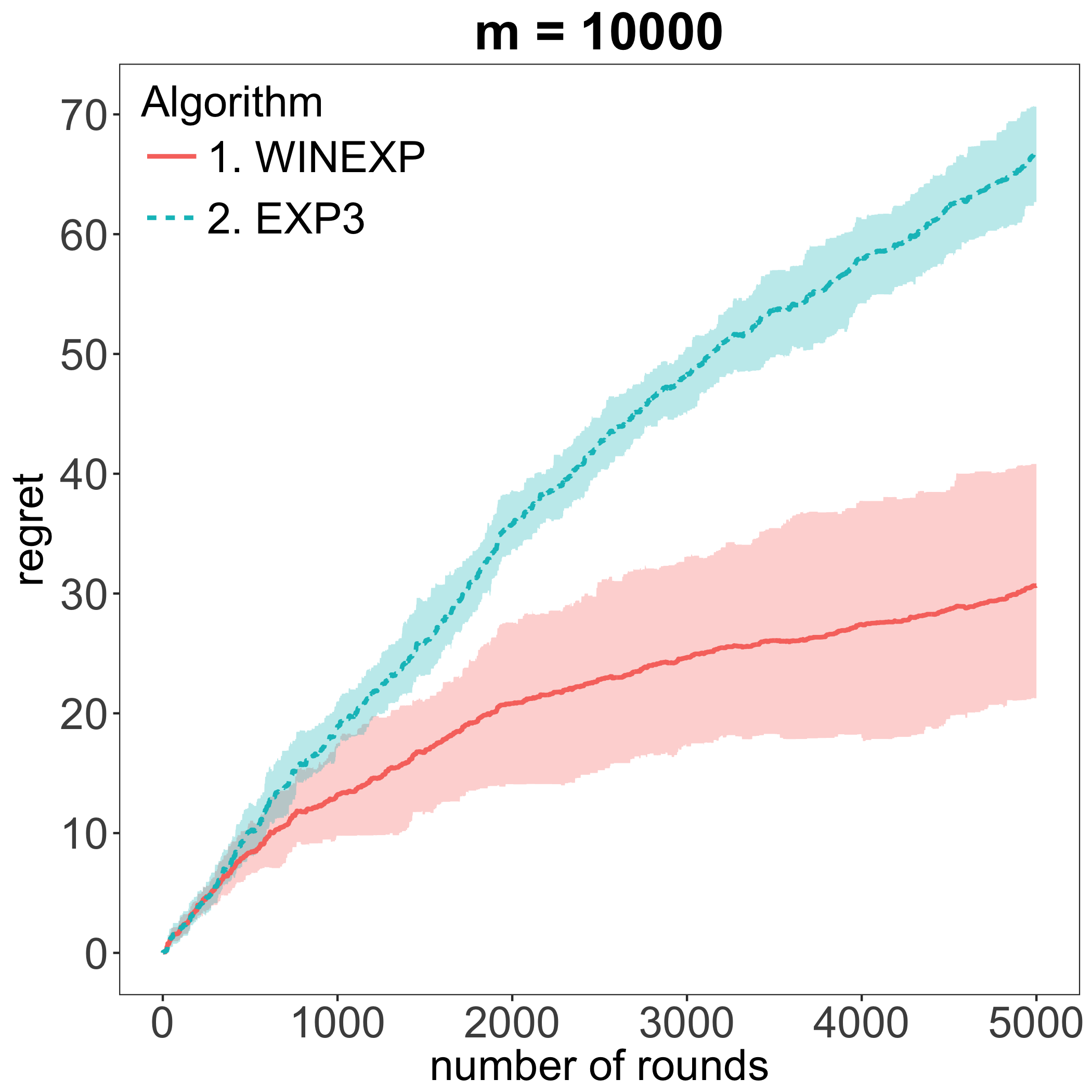}
    \label{fig:winexp_vs_winexp_no_10000}
\end{subfigure}
\caption{Regret of $\winexp$ vs EXP3 with noise $\sim \mathcal{N}\left(0,\frac{1}{m}\right)$ for adaptive WINEXP adversaries, $\eps=0.01$.}
\label{fig:winexp_adv_noise}
\end{figure}

\paragraph{Robustness to fully bandit feedback.} In Figures \ref{fig:regr_uniform}, \ref{fig:regr_normal} we present the results of the regret performance of $\winexp$ compared to EXP3, when the learner receives \emph{fully bandit feedback}, i.e., observes only the CTR and the payment for the currently submitted bid and uses \emph{logistic regression} in order to estimate the allocation curve and \emph{linear regression} in order to estimate the payment curve. For the logistic regression, at each timestep $t$ we used all $t$ pairs of bids and realized CTRs, $(b_t, x_t(b_t))$, giving exponentially more weight to more recent pairs. For the linear regression, at each timestep $t$ we used all $t$ pairs of bids and realized payments, $(b_t, p_t(b_t))$. Figure \ref{fig:regr_uniform} shows that when CTRs come from a uniform distribution, $\winexp$ with fully bandit feedback constructing the estimates with logistic and linear regression not only outperforms EXP3 in terms of regret, but manages to approximate the regret achieved in the case that one receives the true allocation and payment curves for both stochastic and adaptive adversaries. To portray the power of $\winexp$ with fully bandit feedback we also ran a set of experiments where the CTRs and the rank scores come from normal distributions $\calN(0.5, 0.16)$ and even in these cases, $\winexp$ outperforms EXP3 in terms of regret.

\begin{figure}[htpb]
\centering
\begin{subfigure}{0.33\textwidth}
\centering
    \includegraphics[width=0.9\textwidth]{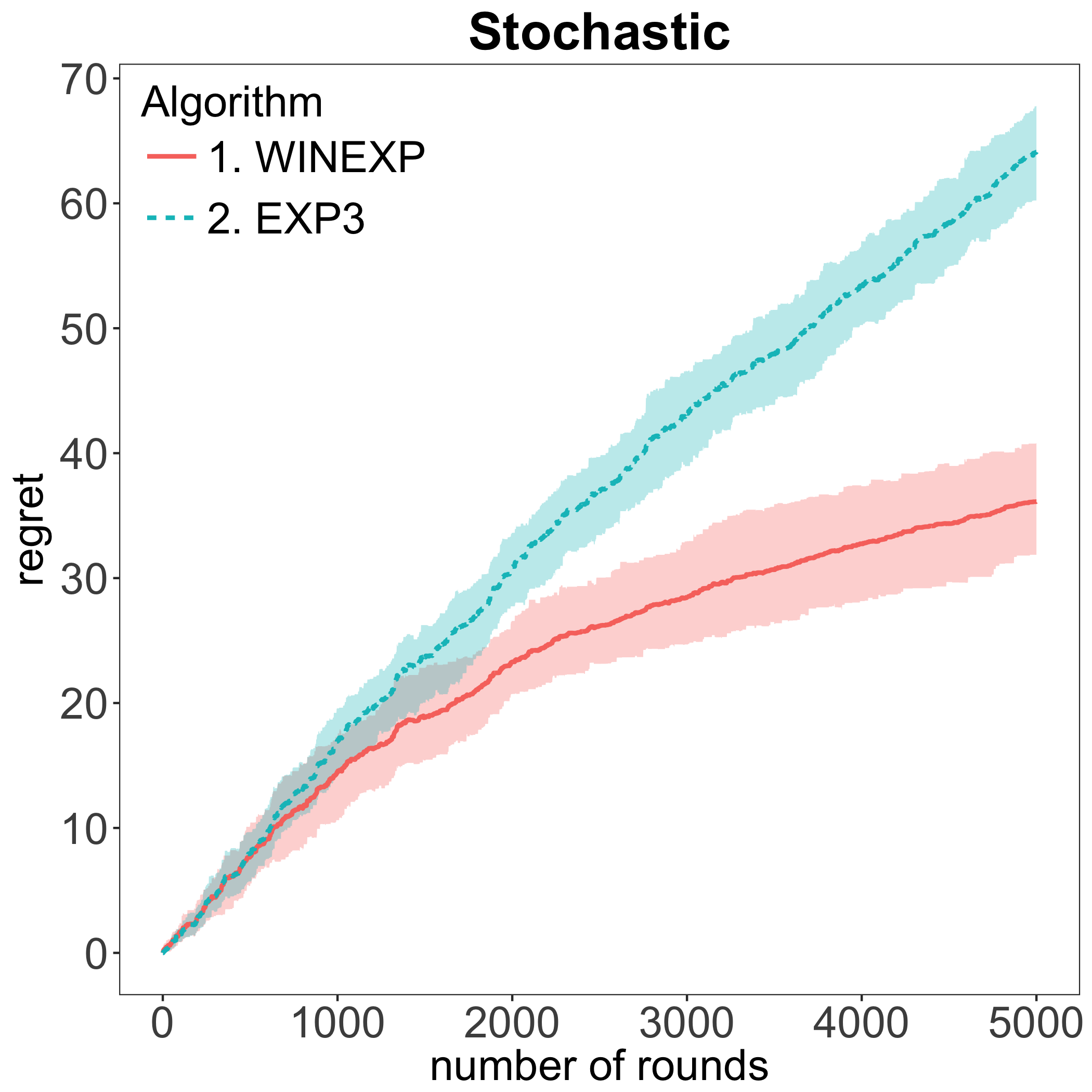}
    \label{fig:ob_regr_uniform}
\end{subfigure}%
\begin{subfigure}{0.33\textwidth}
\centering
    \includegraphics[width=0.9\textwidth]{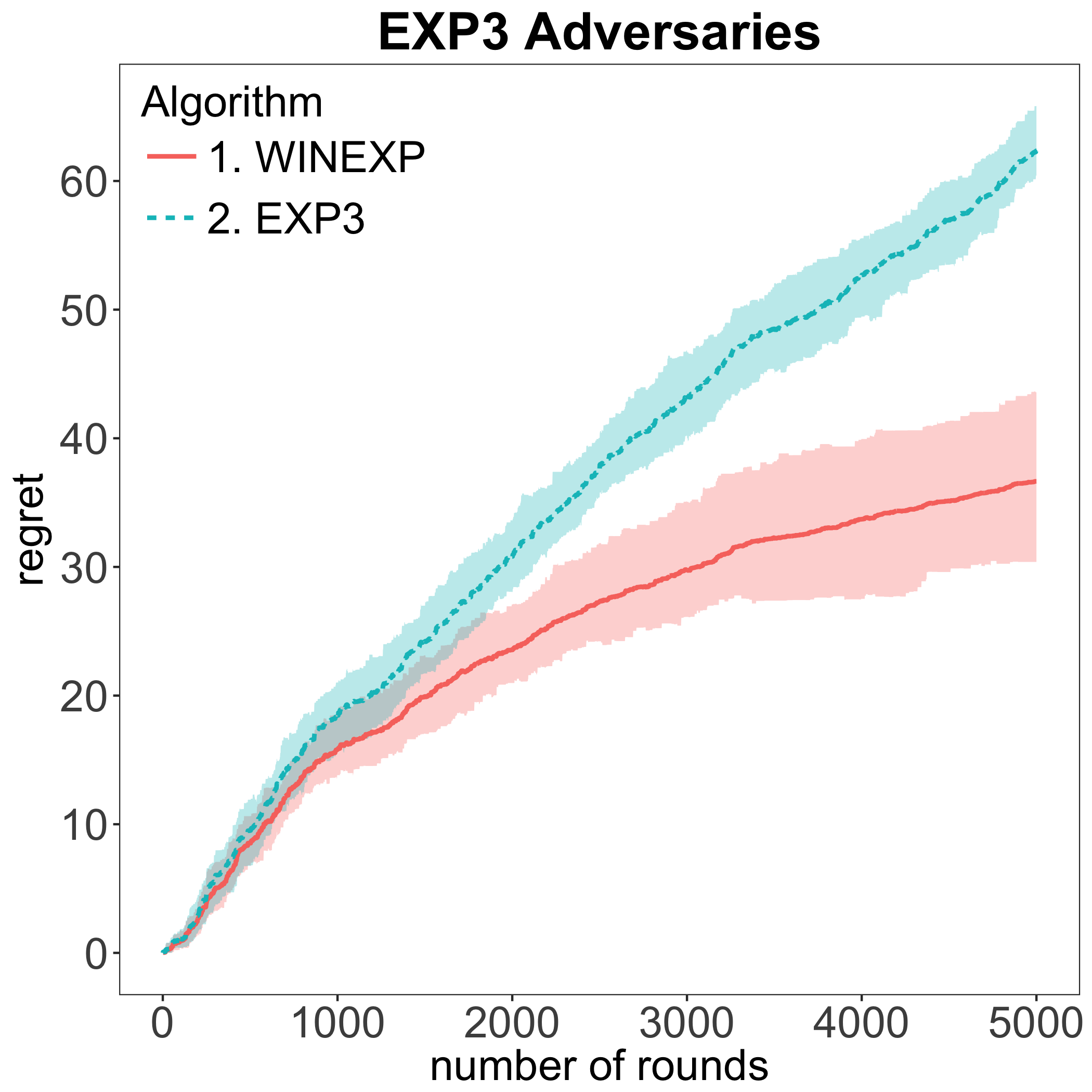}
    \label{fig:exp3_adv_regr_uniform}
\end{subfigure}%
\begin{subfigure}{0.33\textwidth}
\centering
    \includegraphics[width=0.9\textwidth]{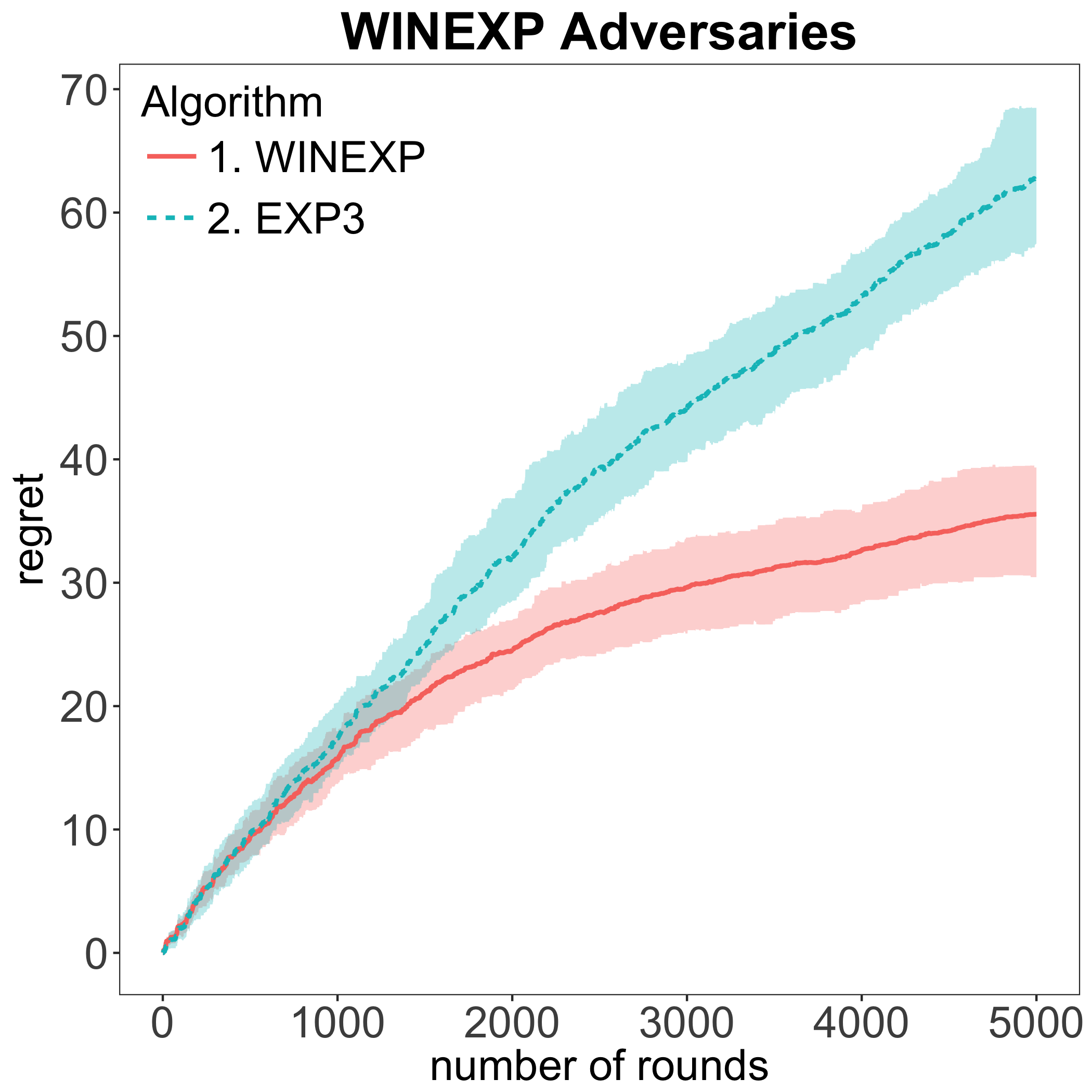}
    \label{fig:winexp_adv_regr_uniform}
\end{subfigure}
\caption{Regret of $\winexp$ vs EXP3 for the fully bandit feedback with CTR $\sim U[0.5,1]$ and $\epsilon = 0.01$.}
\label{fig:regr_uniform}
\end{figure}

\begin{figure}[htpb]
\centering
\begin{subfigure}{0.33\textwidth}
\centering
    \includegraphics[width=0.9\textwidth]{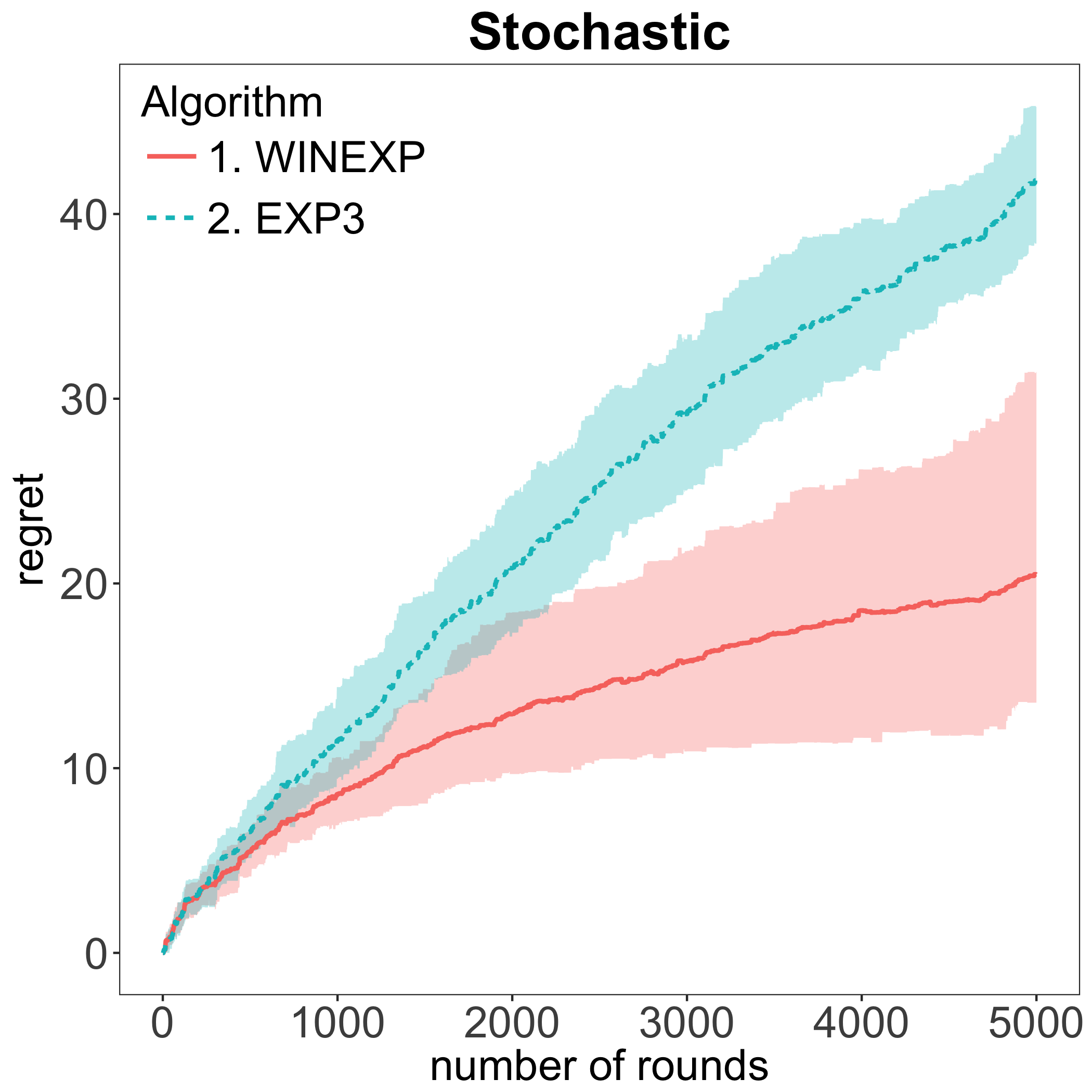}
\end{subfigure}%
\begin{subfigure}{0.33\textwidth}
\centering
    \includegraphics[width=0.9\textwidth]{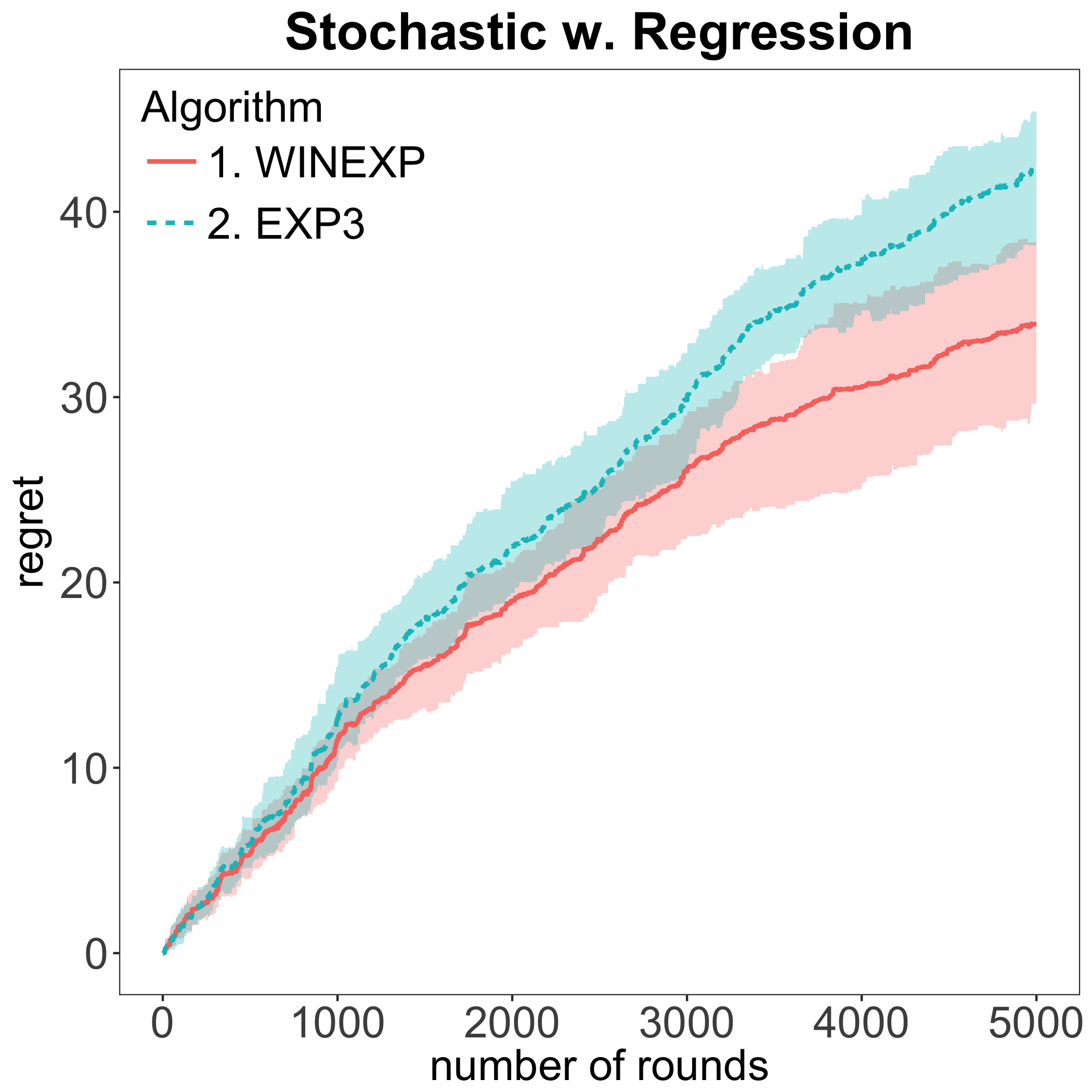}
\end{subfigure}
\hfill
\begin{subfigure}{0.33\textwidth}
\centering
    \includegraphics[width=0.9\textwidth]{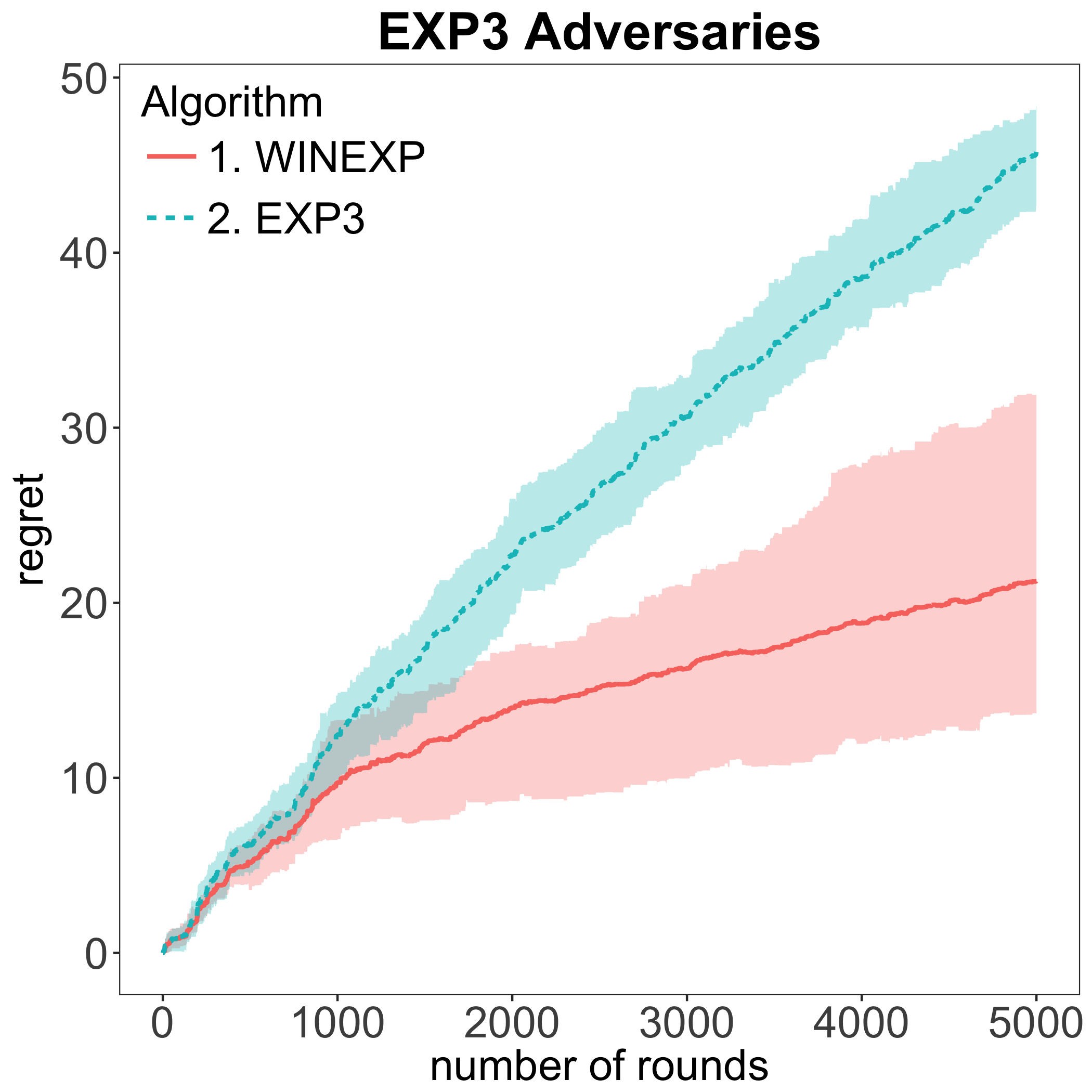}
\end{subfigure}%
\begin{subfigure}{0.33\textwidth}
\centering
    \includegraphics[width=0.9\textwidth]{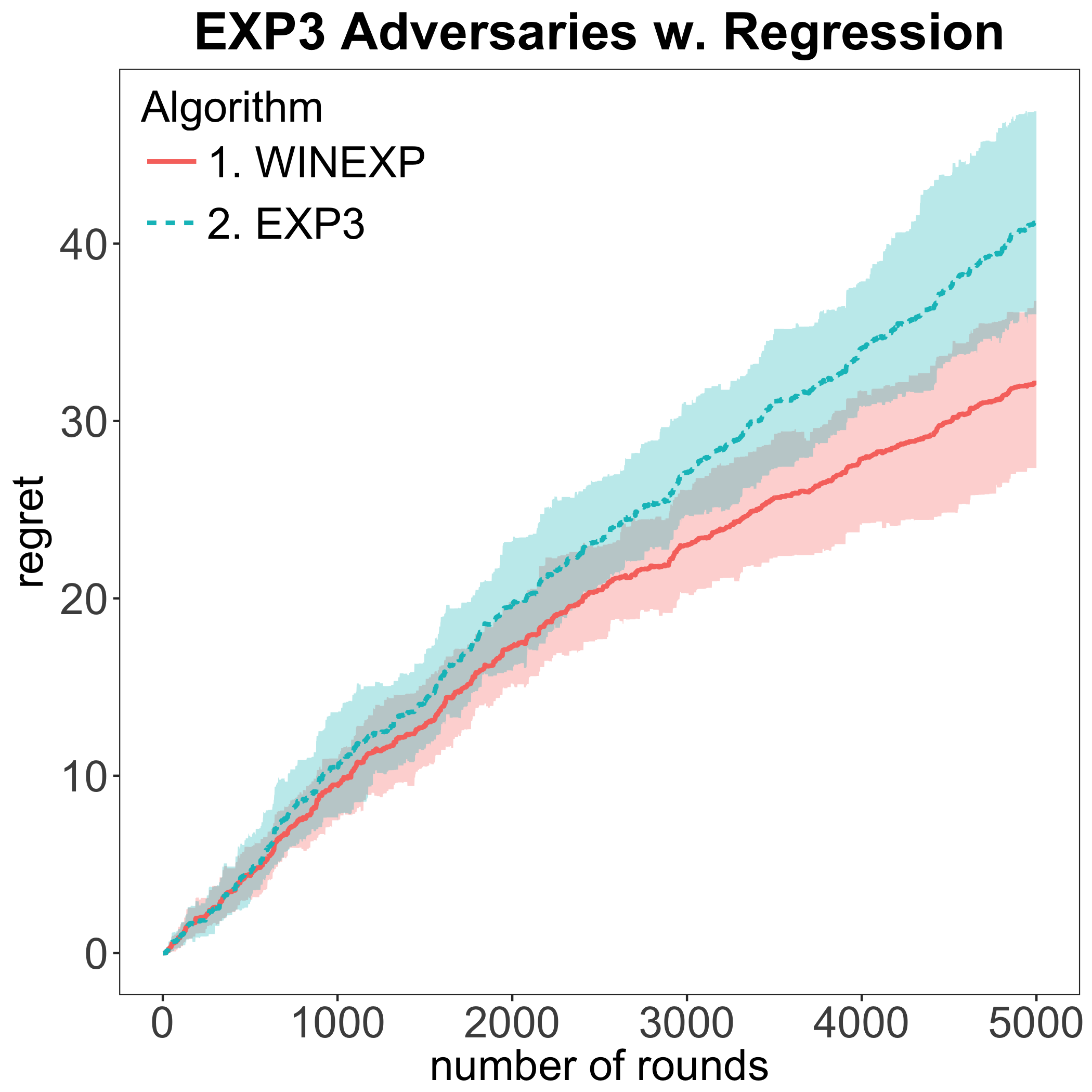}
\end{subfigure}
\hfill
\begin{subfigure}{0.33\textwidth}
\centering
    \includegraphics[width=0.9\textwidth]{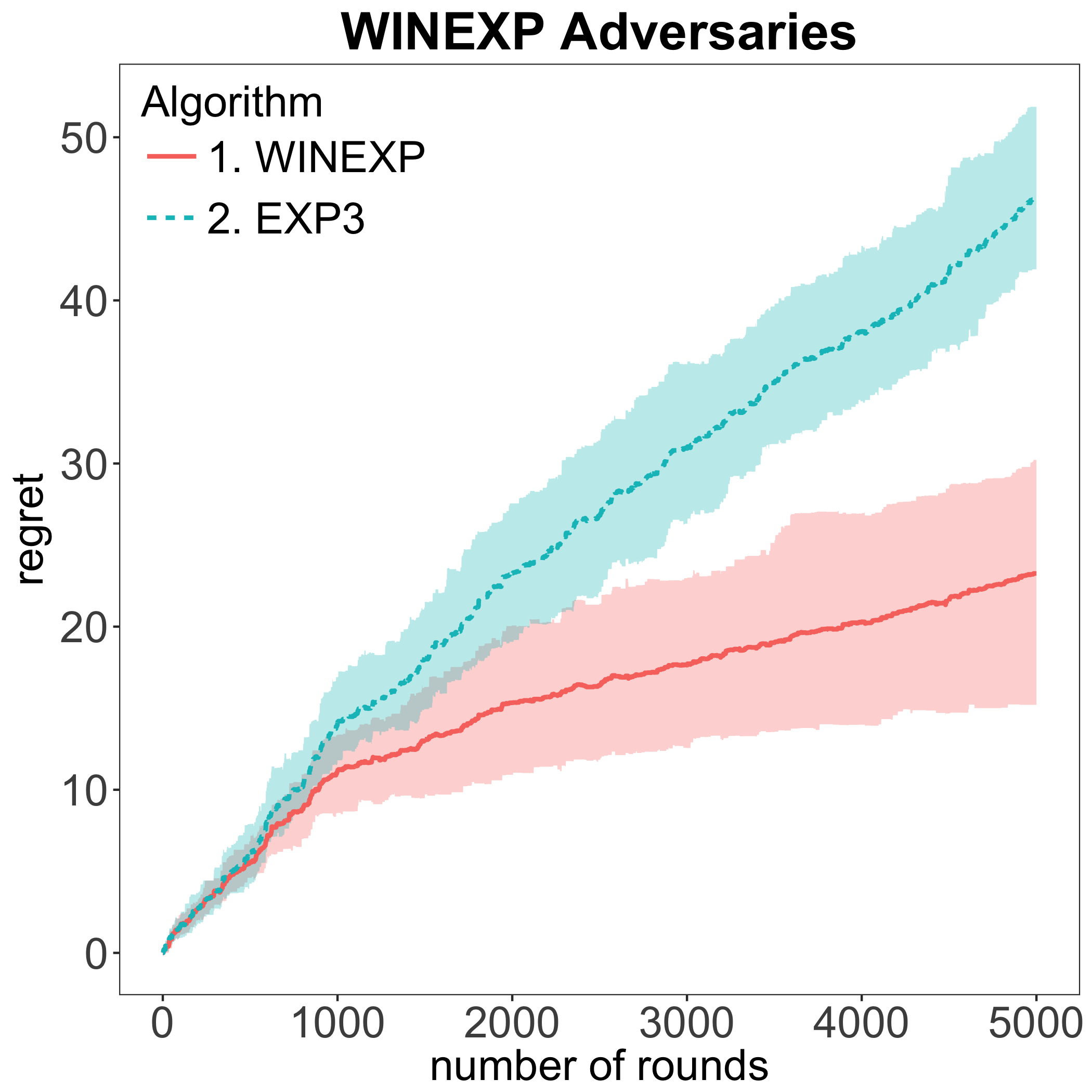}
\end{subfigure}%
\begin{subfigure}{0.33\textwidth}
\centering
    \includegraphics[width=0.9\textwidth]{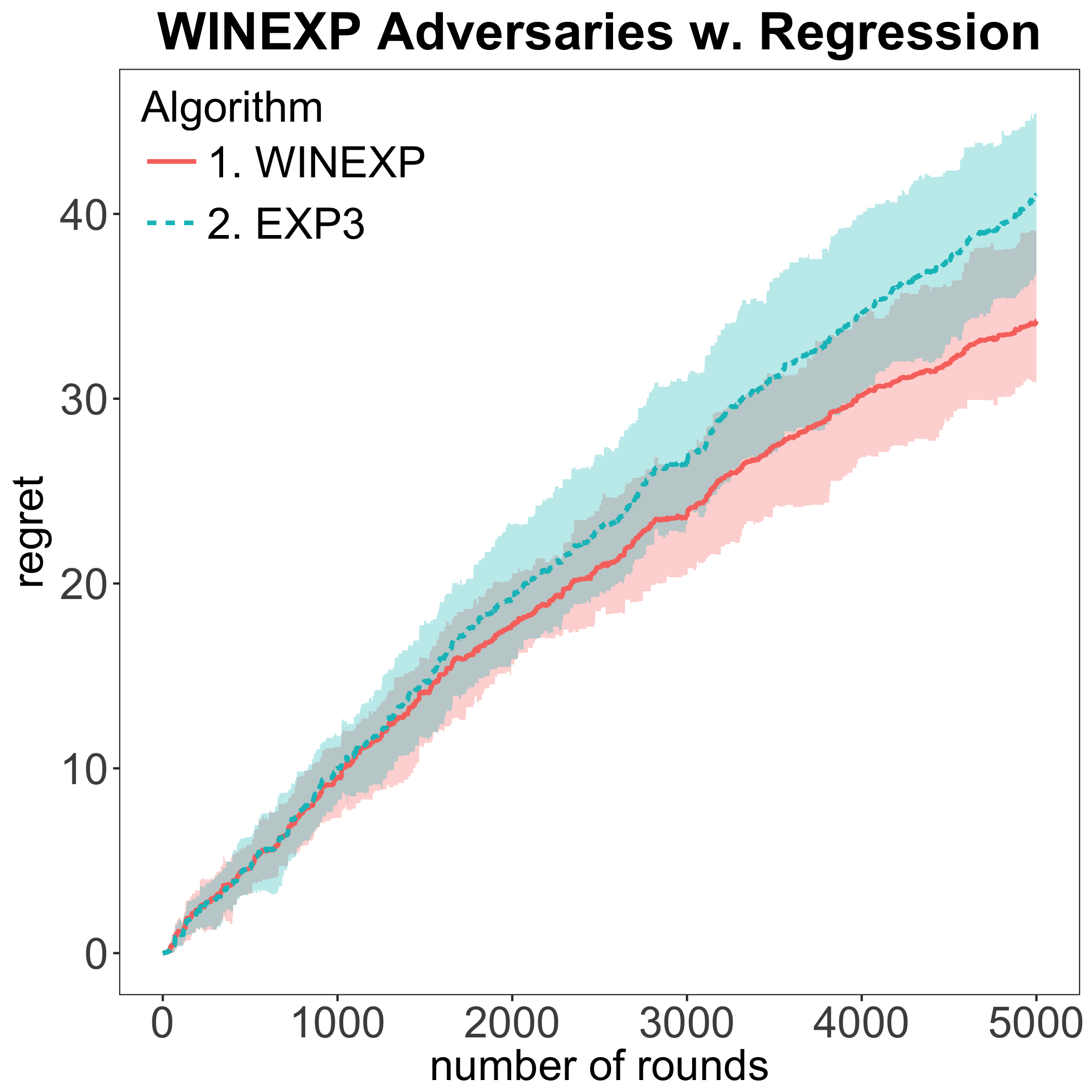}
\end{subfigure}
\caption{Regret of $\winexp$ vs EXP3 with CTR $\sim \calN(0.5, 0.16)$ and $\epsilon = 0.01$.}
\label{fig:regr_normal}
\end{figure}

\section{Conclusion}\label{sec:disc}
We addressed learning in repeated auction scenarios were bidders do not know their valuation for the items at sale. We formulated an online learning framework with partial feedback which captures the information available to bidders in typical auction settings like sponsored search and provided an algorithm which achieves almost full information regret rates. Hence, we portrayed that not knowing your valuation is a benign form of incomplete information for learning in auctions. Our experimental evaluation also showed that the improved learning rates are robust to violations of our assumptions and are valid even when the information assumed is corrupted. We believe that exploring further avenues of relaxing the informational assumptions (e.g., what if the value is only later revealed to a bidder or is contigent upon the competitiveness of the auction) or being more robust to erroneous information given by the auction system is an interesting future research direction. We believe that our outcome-based learning framework can facilitate such future work.

\clearpage
\bibliographystyle{ACM-Reference-Format}
\bibliography{agt}

\newpage
\section*{Appendix}
\begin{appendix}

\section{Omitted Algorithms}\label{appendix:algos}

Essentially, the family of our $\winexp$ algorithms is parametrized by the step-size $\eta$-parameter, the estimate of the utility that the learner gets at every timestep $\tilde{u}_t(b)$ and finally, the type of feedback that he receives after every timestep $t$. Clearly, both $\eta$ and the estimate of the utility depend crucially on the particular type of feedback.

In this section, we present the specifics of the algorithms that we omitted from the main body of the text, due to lack of space. 

\begin{comment}
\subsection{Outcome based batch-reward feedback}

\begin{algorithm}[H]
\begin{algorithmic}
\State Let $\pi_1(b) =\frac{1}{|B|}$ for all $b\in B$ (i.e. the uniform distribution over bids), $\eta = \sqrt{\frac{\log\left(|B| \right)}{2T|O|}}$
\For{each iteration t}
\State Draw an action $b_t$ from the multinomial distribution based on $\pi_t(\cdot)$
\State Observe $x_t(\cdot)$, chosen outcomes $o_\tau, \forall \tau \in I_t$, average reward function conditional on each realized outcome $Q_t(b,o)$ and the realized frequencies for each outcome $f_t(o) = \frac{|I_{to}|}{|I_t|}$.
\State Compute estimate of utility: 
\begin{equation}
\tilde{u}_t(b) = \sum_{o \in O} \frac{\Pr_t \left[o|b \right]}{\Pr_t[o]} f_t(o) \left(Q_t(b,o)-1 \right)
\end{equation}
\State Update $\pi_t(\cdot)$ based on the Exponential Weights Update: 
\begin{equation}
\forall b\in B: \pi_{t+1}(b) \propto \pi_{t}(b)\cdot \exp\left\{\eta \cdot \tilde{u}_t(b)\right\}
\end{equation}
%
\EndFor
\end{algorithmic}
\caption{$\winexp$ algorithm for learning with outcome-based batch-reward feedback}\label{alg:winexp4}
\end{algorithm}
\end{comment}

\subsection{Outcome-based feedback graph over outcomes}
\begin{algorithm}[H]
\begin{algorithmic}
\State Let $\pi_1(b) =\frac{1}{|B|}$ for all $b\in B$ (i.e. the uniform distribution over bids), $\eta = \sqrt{\frac{\log(|B|)}{8T\alpha \ln\left(\frac{16|O|^2 T}{\alpha}\right)}}$
\For{each iteration t}
\State Draw an action $b_t \sim \pi_t(\cdot)$, multinomial
\State Observe $x_t(\cdot)$, chosen outcome $o_t$ and associated reward function $r_t(\cdot, o_t)$
\State Observe and associated reward function $r_t(\cdot, \cdot)$ for all neighbor outcomes $N_\eps^{in}, N_\eps^{out}$ 
\State Compute estimate of utility:
\begin{equation}
\tilde{u}_t(b) = \1\{o_t\in O_{\epsilon}\} \sum_{o \in N_{\epsilon}^{out}(o_t)}
\frac{(r_t(b, o) - 1) \Pr_t[o|b]}{\sum_{o'\in N_{\epsilon}^{in}(o)}\Pr_t[o']}
\end{equation}
\State Update $\pi_t(\cdot)$ based on the Exponential Weights Update: 
\begin{equation}
\end{equation}
\EndFor
\end{algorithmic}
\caption{$\winexpG$ algorithm for learning with outcome-based feedback and a feedback graph over outcomes}\label{alg:winexpG}
\end{algorithm}

\section{Omitted proofs from Section \ref{SEC:OUTCOME-BASED}} \label{appendix:outcome}
We first give a lemma that bounds the moments of our utility estimate.

\begin{lemma}\label{lem:moments2}
At each iteration $t$, for any action $b\in B$, the random variable $\tilde{u}_t(b)$ is an unbiased estimate of the true expected utility $u_t(b)$, i.e.: $\forall b\in B: \E\left[\tilde{u}_t(b)\right] = u_t(b)-1$ and has expected second moment bounded by: $\forall b\in B: \E\left[\left(\tilde{u}_t(b)\right)^2\right]\leq 4\sum_{o\in O} \frac{\Pr_t[o|b]}{\Pr_t[o]}$.
\end{lemma}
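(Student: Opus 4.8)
The plan is to compute both moments directly, exploiting the fact that the only source of randomness in $\tilde u_t(b)$ (for a fixed $b$) is the realized outcome $o_t$, whose marginal law is precisely $\Pr_t[o]$. Indeed, the outcome is generated by first drawing $b_t\sim\pi_t(\cdot)$ and then $o_t\sim x_t(b_t)$, so the marginal probability that $o_t$ equals a given outcome $o$ is $\sum_{b'\in B}\pi_t(b')x_t(b',o)=\sum_{b'\in B}\pi_t(b')\Pr_t[o|b']=\Pr_t[o]$. This identity is the one device that drives the whole argument, so I would state it first.

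For unbiasedness, I would write the expectation as a sum over the possible realized outcomes weighted by $\Pr_t[o]$:
\begin{align*}
\E\left[\tilde u_t(b)\right] = \sum_{o\in O}\Pr_t[o]\cdot\frac{(r_t(b,o)-1)\Pr_t[o|b]}{\Pr_t[o]} = \sum_{o\in O}(r_t(b,o)-1)\,x_t(b,o).
\end{align*}
The factor $\Pr_t[o]$ cancels the denominator, and then splitting the sum and using $\sum_{o\in O}x_t(b,o)=1$ (since $x_t(b)\in\Delta(O)$) gives $\sum_{o}r_t(b,o)x_t(b,o)-1=u_t(b)-1$, exactly as claimed. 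This is the direct analogue of the binary computation in Lemma~\ref{lem:moments}, with the two-term win/lose decomposition replaced by a single sum over $O$.

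For the second moment I would run the same averaging, now with the square:
\begin{align*}
\E\left[\tilde u_t(b)^2\right] = \sum_{o\in O}\Pr_t[o]\cdot\frac{(r_t(b,o)-1)^2\Pr_t[o|b]^2}{\Pr_t[o]^2} = \sum_{o\in O}\frac{(r_t(b,o)-1)^2\Pr_t[o|b]^2}{\Pr_t[o]}.
\end{align*}
The weighting by $\Pr_t[o]$ kills one power of the denominator, leaving $\Pr_t[o]$ below. I would then bound $(r_t(b,o)-1)^2\le 4$ (since $r_t(\cdot,\cdot)\in[-1,1]$ forces $r_t(b,o)-1\in[-2,0]$) and $\Pr_t[o|b]^2\le\Pr_t[o|b]$ (since $\Pr_t[o|b]=x_t(b,o)\in[0,1]$), which yields $\E[\tilde u_t(b)^2]\le 4\sum_{o\in O}\Pr_t[o|b]/\Pr_t[o]$.

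I do not expect a genuine obstacle here; the computation is routine once the marginal-law identity $\Pr[o_t=o]=\Pr_t[o]$ is in place. The one point requiring care is being explicit that the expectation is taken jointly over $b_t\sim\pi_t$ and $o_t\sim x_t(b_t)$, so that summing over $o$ with weight $\Pr_t[o]$ correctly accounts for both layers of randomness; conflating $\Pr_t[o]$ with $\Pr_t[o|b_t]$ would break the cancellation. With this lemma established, the regret bound of Theorem~\ref{thm:outcome-based-main} follows by feeding the second-moment estimate into the standard exponential-weights regret inequality $R(T)\le\frac{\eta}{2}\sum_{t}\sum_{b}\pi_t(b)\E[\tilde u_t(b)^2]+\frac1\eta\log|B|$, bounding $\sum_b\pi_t(b)\sum_o\Pr_t[o|b]/\Pr_t[o]\le|O|$, and optimizing $\eta$, exactly as in the proof of Theorem~\ref{thm:win-only-regr}.
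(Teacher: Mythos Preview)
Your proposal is correct and follows essentially the same route as the paper: take the expectation over the realized outcome $o_t$ with marginal law $\Pr_t[o]$, cancel one factor of $\Pr_t[o]$, then use $r_t\in[-1,1]$ and $\Pr_t[o|b]^2\le\Pr_t[o|b]$ for the second-moment bound. If anything, you are slightly more explicit than the paper in justifying the marginal law and the $\Pr_t[o|b]^2\le\Pr_t[o|b]$ step.
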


\begin{proof}[Proof of Lemma \ref{lem:moments2}]
According to the notation we introduced before we have: 
\begin{align*}
\E\left[\tilde{u}_t(b)\right] &= \E_{o_t}\left[\frac{\left(r_t(b, o_t)-1\right)\cdot \Pr_t[o_t|b]}{\Pr_t[o_t]}\right] =\sum_{o\in O} \frac{\left(r_t(b, o)-1\right)\cdot \Pr_t[o|b]}{\Pr_t[o]} \textstyle{\Pr_t}[o]\\
& = \sum_{o\in O} r_t(b, o) \Pr_t[o|b]  - 1 = u_t(b) - 1
\end{align*}
Similarly for the second moment:
\begin{align*}
\E\left[\tilde{u}_t(b)^2\right] &\leq \E_{o_t}\left[\frac{(r_t(b, o_t) - 1)^2 \Pr_t[o_t|b]^2}{\Pr_t[o_t]^2} \right]
= \sum_{o\in O} \frac{(r_t(b, o) - 1)^2 \Pr_t[o|b]^2}{\Pr_t[o]^2} \Pr_t[o]\\
&  \leq \sum_{o\in O} \frac{4 \Pr_t[o|b]}{\Pr_t[o]}
\end{align*}
where the last inequality holds since $r_t(\cdot,\cdot)\in [-1,1]$.
\end{proof}

\begin{proof}[Proof of Theorem \ref{thm:outcome-based-main}]
Observe that regret with respect to utilities $u_t(\cdot)$ is equal to regret with respect to the translated utilities $u_t(\cdot) -1$. We use the fact that the exponential weight updates with an unbiased estimate $\tilde{u}_t(\cdot) \leq 0$ of the true utilities, achieves expected regret of the form:
\begin{align*}
R(T) \leq~& \frac{\eta}{2} \sum_{t=1}^T \sum_{b\in B} \pi_t(b) \cdot \E\left[\left(\tilde{u}_t(b)\right)^2\right] + \frac{1}{\eta} \log(|B|)
\end{align*}
For a detailed proof of the above, we refer the reader to Appendix \ref{appendix:a}. Invoking the bound on the second moment by Lemma \ref{lem:moments2}, we get:
\begin{align*}
R(T) \leq~& 2\eta \sum_{t=1}^T \sum_{b\in B} \pi_t(b) \cdot \sum_{o\in O} \frac{\Pr_t[o|b]}{\Pr_t[o]} + \frac{1}{\eta} \log(|B|)\\
=~& 2\eta \sum_{t=1}^T \sum_{o\in O}\sum_{b\in B} \pi_t(b) \cdot  \frac{\Pr_t[o|b]}{\Pr_t[o]} + \frac{1}{\eta} \log(|B|)\\
\leq~& 2\eta T|O| + \frac{1}{\eta} \log(|B|)
\end{align*}
Picking $\eta = \sqrt{\frac{\log(|B|)}{2T|O|}}$, we get the theorem.
\end{proof}

\subsection{Comparison with Results in Weed et al.}\label{sec:app-weed}

We note that our result in Example \ref{ex:weed} also \emph{recovers} the results of \citet{WRP16}, who work in the continuous bid setting (i.e. $b\in [0,1]$). In order to describe their results, consider the grid ${\cal L}_T$ formed by the maximum bids from other bidders $m_t = \max_{j \neq i} b_{jt}$ for all the rounds. Let $l^o=(m_t, m_{t'})$ be the widest interval in ${\cal L}_T$, that contains an optimal fixed bid in hindsight and let $\Delta^o$ denote its length. \citet{WRP16} provide an algorithm for learning the valuation, which yileds regret $4\sqrt{T\log(1/\Delta^o)}$. 

The same regret can be achieved, if we simply consider a partition of the bidding space $[0,1]$ into $\frac{1}{\epsilon}$ intervals of equal length $\epsilon$, for $\epsilon < \Delta^o$, and run our algorithm on this discretized bid space $B$.
If $l^o$ contains an optimal bid, then any bid $b\in l^o$ is also optimal in-hindsight, since all such bids achieve the same utility. Since $\Delta^o>\epsilon$, there must exist a discretized bid $b_{\epsilon}^*\in B\cap l^o$. Thus, $b_{\epsilon}^*$ is also optimal in hindsight. Hence, regret against the best fixed bid in $[0,1]$ is equal to regret against the best fixed discretized bid in $B$. By our Theorem \ref{thm:outcome-based-main}, the latter regret is $4\sqrt{T\log(\nicefrac{1}{\epsilon})}$, which can be made arbitrarily close to the regret bound achieved by \citet{WRP16}, who use a more intricate adaptive discretization. Similar to \citet{WRP16}, knowledge of $\Delta^o$ can be bypassed by instead defining $\Delta^o$ as the length of the smallest interval in ${\cal L}_T$ and then using the standard doubling trick, i.e.: keep an estimate of $\Delta^o$ and once this estimate is violated, divide $\Delta^o$ in half and re-start your algorithm. The latter only increases the regret by a constant factor.

\section{Notes on Subsection \ref{SEC:BATCH}}\label{appendix:notes}

If one is interested in optimizing the \emph{sum} of utilities at each iteration rather than the \emph{average}, then if all iterations have the same number of batches $|I|$, this simply amounts to rescaling everything by $|I|$, which would lead to an $|I|$ blow up in the regret. 

If different periods have different number of batches and $I_{\max}$ is the maximum number of batches per iteration, then we can always pad the extra batches with all zero rewards. This would amount to again multiplying the regret by $I_{\max}$ and would change the unbiased estimates at each period to be scaled by the number of iterations in that period:  
\begin{equation}\label{eqn:batch-unbiased-3}
\tilde{u}_t(b) = \frac{|I_t|}{I_{\max}}\sum_{o\in O}\frac{\Pr_t[o | b]\cdot \Pr_t[o | b_t]}{\Pr_t[o]}  \left(Q_t(b, o) - 1\right) 
\end{equation}
and then we would invoke the same algorithm. This essentially puts more weight on iterations with more auctions, so that the "step-size" of the algorithm depends on how many auctions were run during that period. It is easy to see that the latter modification would lead to regret $4I_{\max}\sqrt{T\log\left(|B|\right)}$ in the sponsored search auction application.

\section{Omitted Proofs from Section \ref{SEC:BATCH}}\label{appendix:batch}
We first prove an upper bound on the moments of our estimates used in the case of batch rewards. 
\begin{lemma}\label{lem:moments4}
At each iteration $t$, for any action $b\in B$, the random variable $\tilde{u}_t(b)$ is an unbiased estimate of $u_t(b) - 1$ and can actually be constructed based on the feedback that the learner receives: $\forall b\in B: \tilde{u}_t(b)= \sum_{o\in O}\frac{\Pr_t[o | b]}{\Pr_t[o]} f_t(o) \left(Q_t(b, o) - 1\right) \label{eqn:batch-unbiased}$ 
and has expected second moment bounded by: $\forall b\in B: \E\left[\left(\tilde{u}_t(b)\right)^2\right]\leq 4\sum_{o\in O}\frac{\Pr_t[o | b]}{\Pr_t[o]}$.
\end{lemma}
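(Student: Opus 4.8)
The plan is to recognize that the batch estimate is nothing but the \emph{average} over the reward contests $\tau\in I_t$ of per-contest single-outcome estimators of exactly the form analyzed in Lemma~\ref{lem:moments2}. Writing it this way reduces the problem to a per-contest moment computation (which is essentially a copy of Lemma~\ref{lem:moments2}) plus an aggregation step, and it lets me handle the two claims separately: unbiasedness by summing the per-contest expectations, and the second moment by a convexity (Jensen) step that converts the square of an average into the average of squares.

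First I would put the estimate into per-contest form. Using $f_t(o)\,Q_t(b,o)=\frac{1}{|I_t|}\sum_{\tau\in I_t} r_\tau(b,o)\1\{o_\tau=o\}$ and $f_t(o)=\frac{1}{|I_t|}\sum_{\tau\in I_t}\1\{o_\tau=o\}$, and noting that for each fixed $\tau$ exactly one indicator $\1\{o_\tau=o\}$ is nonzero, the estimate collapses to
\begin{equation*}
\tilde{u}_t(b) = \frac{1}{|I_t|}\sum_{\tau\in I_t} Y_\tau, \qquad Y_\tau := \frac{x_t(b,o_\tau)}{\Pr_t[o_\tau]}\bigl(r_\tau(b,o_\tau)-1\bigr).
\end{equation*}
This rewriting also settles the constructibility claim: $x_t(b,o)=\Pr_t[o|b]$ is observed, $\Pr_t[o]=\sum_{b'}\pi_t(b')x_t(b',o)$ is computable from $x_t(\cdot)$ and the known $\pi_t$, and $f_t(o),Q_t(b,o)$ are part of the feedback. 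For unbiasedness I would condition on $b_t$ and use that each $o_\tau$ is drawn from $x_t(b_t)$, so $\E[\1\{o_\tau=o\}\mid b_t]=x_t(b_t,o)=\Pr_t[o|b_t]$; taking the outer expectation over $b_t\sim\pi_t$ and using $\E_{b_t}[x_t(b_t,o)]=\Pr_t[o]$ cancels the importance weight and gives $\E[Y_\tau]=\sum_o x_t(b,o)(r_\tau(b,o)-1)$. Since $\sum_o x_t(b,o)=1$, averaging over $\tau$ yields $\E[\tilde{u}_t(b)]=\frac{1}{|I_t|}\sum_\tau\bigl(\sum_o r_\tau(b,o)x_t(b,o)-1\bigr)=u_t(b)-1$.

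For the second moment, the main obstacle is that the summands $Y_\tau$ are \emph{correlated} across contests: they all depend on the single draw $b_t$, so a direct expansion of $\bigl(\sum_\tau Y_\tau\bigr)^2$ produces $O(|I_t|^2)$ cross terms that do not obviously cancel or stay small. I would sidestep these entirely via Jensen's inequality applied pointwise to the average, $\bigl(\frac{1}{|I_t|}\sum_\tau Y_\tau\bigr)^2\le \frac{1}{|I_t|}\sum_\tau Y_\tau^2$, which holds for every realization and hence in expectation. It then suffices to bound each $\E[Y_\tau^2]$ by repeating the Lemma~\ref{lem:moments2} computation: conditioning on $b_t$ and using mutual exclusivity of the outcome indicators gives $\E[Y_\tau^2\mid b_t]=\sum_o \frac{x_t(b,o)^2}{\Pr_t[o]^2}(r_\tau(b,o)-1)^2\,x_t(b_t,o)$, and taking $\E_{b_t}$ replaces $x_t(b_t,o)$ by $\Pr_t[o]$, so $\E[Y_\tau^2]=\sum_o \frac{x_t(b,o)^2}{\Pr_t[o]}(r_\tau(b,o)-1)^2\le 4\sum_o \frac{\Pr_t[o|b]}{\Pr_t[o]}$, using $(r_\tau(b,o)-1)^2\le 4$ and $x_t(b,o)\le 1$. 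Averaging the identical bound over the $|I_t|$ contests leaves no factor of $|I_t|$ and gives $\E[(\tilde{u}_t(b))^2]\le 4\sum_o \frac{\Pr_t[o|b]}{\Pr_t[o]}$, as claimed. The only genuine work is the per-contest bookkeeping and checking that the Jensen step is exactly what is needed to remove the $|I_t|$ dependence; everything else mirrors the binary and single-outcome analyses.
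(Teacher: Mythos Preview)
Your argument is correct and structurally very close to the paper's. Both proofs begin by rewriting the batch estimate in the per-contest form $\tilde{u}_t(b)=\frac{1}{|I_t|}\sum_{\tau\in I_t}\frac{(r_\tau(b,o_\tau)-1)\Pr_t[o_\tau|b]}{\Pr_t[o_\tau]}$ and obtain unbiasedness term-by-term exactly as you do. The one genuine difference is where Jensen is applied for the second moment: you apply it to the uniform average over contests $\tau$ and then bound each $\E[Y_\tau^2]$ by a direct replay of Lemma~\ref{lem:moments2}; the paper instead stays in the aggregated $f_t,Q_t$ form and applies Jensen over outcomes $o$ with the (random) weights $f_t(o)$, then uses $\E[f_t(o)]=\Pr_t[o]$ and $(Q_t(b,o)-1)^2\le 4$. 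Your route has the advantage of reducing cleanly to the single-outcome lemma, while the paper's route works directly with the feedback quantities $f_t,Q_t$ without unpacking them into per-contest terms; both yield the identical bound $4\sum_{o}\Pr_t[o|b]/\Pr_t[o]$ and neither requires independence of the $o_\tau$'s across $\tau$ (only that each $o_\tau$ marginally follows $x_t(b_t)$), so the scope is the same.
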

\begin{proof}[Proof of Lemma \ref{lem:moments4}]
For the estimate of the utility it holds that:
\begin{align}
\tilde{u}_t(b) =~& 
\frac{1}{|I_t|}\sum_{\tau\in I_t}\frac{(r_\tau(b, o_\tau) - 1) \Pr_t[o_\tau | b]}{\Pr_t[o_\tau]}\nonumber\\
=~& 
\frac{1}{|I_t|}\sum_{o\in O}\sum_{\tau\in I_{to}}\frac{(r_\tau(b, o) - 1) \Pr_t[o | b]}{\Pr_t[o]}\nonumber\\
=~& 
\sum_{o\in O: |I_{to}|>0}\frac{\Pr_t[o | b]}{\Pr_t[o]} f_t(o) \frac{1}{|I_{to}|}\sum_{\tau\in I_{to}}(r_\tau(b, o) - 1) \nonumber\\
=~& 
\sum_{o\in O}\frac{\Pr_t[o | b]}{\Pr_t[o]} f_t(o) \left(Q_t(b, o) - 1\right) \label{eqn:batch-unbiased}
\end{align}
From the first equation it follows along identical lines, that this is an unbiased estimate, while from the last equation it is easy to see that this unbiased estimate can be constructed based on the feedback that the learner receives.

Moreover, we can also bound the second moment of these estimates by a similar quantity as in the previous section:
\begin{align*}
\E[\tilde{u}_t(b)^2]=~&\sum_{b_t\in B}\E\left[\left(\sum_{o\in O}\frac{\Pr_t[o | b]}{\Pr_t[o]} f_t(o) \left(Q_t(b, o) - 1\right)\right)^2 \bigg| b_t\right] \pi_t(b_t)\\
\leq~&\sum_{b_t\in B}\E\left[\sum_{o\in O}\left(\frac{\Pr_t[o | b]}{\Pr_t[o]} \left(Q_t(b, o) - 1\right)\right)^2  f_t(o) \bigg| b_t\right] \pi_t(b_t) \tag{By Jensen's inequality}\\
=~& \sum_{b_t\in B}\sum_{o\in O}\left(\frac{\Pr_t[o | b]}{\Pr_t[o]}  \left(Q_t(b, o) - 1\right)\right)^2 \E[f_t(o)|b_t]\cdot \pi_t(b_t) \\
=~& \sum_{o\in O}\left(\frac{\Pr_t[o | b]}{\Pr_t[o]}  \left(Q_t(b, o) - 1\right)\right)^2 \sum_{b_t\in B}\E[f_t(o)|b_t]\cdot \pi_t(b_t) \\
=~& \sum_{o\in O}\left(\frac{\Pr_t[o | b]}{\Pr_t[o]}  \left(Q_t(b, o) - 1\right)\right)^2 \textstyle{\Pr_t}[o]\\
\leq~& 4\sum_{o\in O}\frac{\Pr_t[o | b]}{\Pr_t[o]}
\end{align*}
\end{proof}

Then following the same techniques in Theorem~\ref{thm:outcome-based-main}, it is straightforward to conclude the proof of the corollary.

\section{Omitted Proofs from Section \ref{SEC:CONTINUOUS}} 
\label{appendix:cont}\label{sec:app-doubling-trick}

\begin{proof}[Proof of Lemma \ref{lem:de-piece}]
Let $\opt = \argsup_{b \in \B} \sum_{t=1}^T u_t(b)$ be the best fixed action in the continuous action space $\B$ in hindsight. Since $\eps < \Delta^o$, then $b^*$ must belong to some $d$-dimensional $\eps$-cube, either as an interior point or as a limit of interior points, as expressed by Definition \ref{defn:piecewise}. %
The utility is $L$-Lipschitz within this $\eps$-cube and since $\epsilon< \Delta^o$, each cube contains at least one point in the discretized space $B$. For the case where $\opt$ is achieved as the limit of interior points then for every $\delta>0$ there exist an interior point of some cube $\tilde{b}$, such that $\sum_{t=1}^T u_t(\tilde{b})\geq \opt-\delta$. The same obviously holds when $\opt$ is achieved by an interior point. Let $\hat{b}$ be the closest discretized point to $\tilde{b}$ that lies in the same cube as $\tilde{b}$. 
Since $\|\hat{b}-\tilde{b}\|_{\infty}\leq \epsilon$, by the Lipschitzness of the average reward function within each cube, we get:
\begin{align*}
\opt \leq \sum_{t=1}^T u_t(\tilde{b}) + \delta \leq  \sum_{t=1}^T u_t(\hat{b}) + \delta + \epsilon L T \leq \sup_{b\in B} \sum_{t=1}^T u_t(\hat{b}) + \delta + \epsilon L T
\end{align*}
Since we can take $\delta$ as close to zero as we want, we get the lemma.
\end{proof}

\begin{proof}[Proof of Theorem \ref{thm:continuous-lipschitz-known}]
From Lemma \ref{lem:de-piece} we know that for $\eps < \Delta^o$, the discretization error is $DE(B, \B) \leq \epsilon L T$. Combining Lemma \ref{lem:regret-de} and Corollary \ref{corol:batch-rewards}, we have
\begin{align*}
R(T, \B) &\leq R(T, B) + DE(B, \B) = 2\sqrt{2T|O|\log(|B|)} + \eps L T\\
& = 2\sqrt{2T|O|\log \left(\frac{1}{\eps^d}\right)} + \eps L T\\
& = 2\sqrt{2dT|O|\log\left(\frac{1}{\eps}\right)} + \eps L T\\
& = 2\sqrt{2dT|O|\log\left(\max \left\{LT, \frac{1}{\Delta^o} \right\}\right)} + \min \left\{\frac{1}{LT},\Delta^o\right\}\\ 
& \leq 2\sqrt{2dT|O|\log\left(\max \left\{LT, \frac{1}{\Delta^o} \right\}\right)} + 1%
\end{align*}
\end{proof}

\paragraph{Unknown Lipschitzness constant.} In Theorem \ref{thm:continuous-lipschitz-known} the discretization parameter $\eps$ depends on the prior knowledge of the Lipschitzness constant, $L$, the number of rounds, $T$ and the minimum edge length of each $d$-dimensional cube, $\Delta^o$. In order to address the problem that in general we do not know any of those constants a priori, we will apply a standard doubling trick (\cite{ACBFS02}) to remove this dependence. We assume that $T$ is upper bounded by a constant $T_M$ and similarly we also assume that $\log\left(\max\left\{LT, \frac{1}{\Delta^o}\right\}\right)$ is upper bounded by a constant. 

We will then initialize two bounds: $B_T=1$ and $B_{\Delta^o,LT}=1$ and run the $\winexp$ algorithm with step size $\sqrt{\frac{\log(\nicefrac{1}{\eps})}{2B_T|O|}}$ and $\eps=\min \left\{\frac{1}{L T},\Delta^o \right\}$ until $t\leq B_T$ or $\log\left(\max\left\{tL,\frac{1}{\Delta^o}\right\}\right) \leq B_{\Delta^o,LT}$ fails to hold. If one of these discriminants fails, then we double the bound and restart the algorithm. This modified strategy only increases the regret by a constant factor.

\begin{corollary}\label{cor:doubling}
The $\winexp$ algorithm run with the above doubling trick achieves an expected regret bound
$\mathcal{R}(T) \leq 25\sqrt{2dT|O|\log\left(\max \left\{LT, \frac{1}{\Delta^o} \right\}\right)} +1$
\end{corollary}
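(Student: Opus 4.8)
The plan is to run the analysis epoch by epoch, where a new epoch begins each time one of the two guessed bounds $B_T$ or $B_{\Delta^o,LT}$ is doubled and the algorithm restarts. First I would fix notation: write $G^\star = \log\left(\max\left\{LT,\frac{1}{\Delta^o}\right\}\right)$ for the true complexity term, index the epochs by $j$, and let $B_T^{(j)}$ and $B_G^{(j)}$ be the guessed bounds active during epoch $j$, with $E_j$ the set of rounds in that epoch. By construction each epoch has length at most $B_T^{(j)}$, since the first discriminant triggers a restart once $t$ exceeds $B_T^{(j)}$, and the running complexity $\log(\max\{tL,1/\Delta^o\})$ stays at most $B_G^{(j)}$ throughout it, since otherwise the second discriminant triggers a restart. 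Because at least one of the two bounds doubles at every restart, the final values satisfy $B_T^{\mathrm{final}} < 2T$ and $B_G^{\mathrm{final}} < 2G^\star$.

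Second, within a single epoch the algorithm is exactly the known-parameter instance analyzed in Theorem~\ref{thm:continuous-lipschitz-known} (equivalently Corollary~\ref{corol:batch-rewards} combined with Lemma~\ref{lem:de-piece}), run with horizon bound $B_T^{(j)}$ and complexity bound $B_G^{(j)}$. Hence the regret accumulated inside epoch $j$, against any fixed benchmark bid, is at most $2\sqrt{2d|O|\,B_T^{(j)}B_G^{(j)}}$ plus a discretization-error term bounded by $\epsilon_j L\,|E_j|\le 1$. Decomposing against a single global benchmark is legitimate: for the globally optimal fixed bid $b^\star$, $\sup_{b^\star\in\B}\sum_{t}(u_t(b^\star)-u_t(b_t)) = \sum_j \sum_{t\in E_j}(u_t(b^\star)-u_t(b_t)) \le \sum_j R(E_j,\B)$, since $b^\star$ is in particular a fixed bid within each epoch.

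Third, I would bound the resulting geometric sum. Writing $B_T^{(j)}=2^{a_j}$ and $B_G^{(j)}=2^{c_j}$, the pair $(a_j,c_j)$ traces a monotone lattice path from $(0,0)$ to $(m,k)$, where $2^m=B_T^{\mathrm{final}}$ and $2^k=B_G^{\mathrm{final}}$, and $a_j+c_j$ increases by exactly one per epoch because precisely one bound doubles at each restart. Therefore the exponents $a_j+c_j$ take each value $0,1,\dots,m+k$ once, and
\[
\sum_j \sqrt{B_T^{(j)}B_G^{(j)}} \;=\; \sum_{s=0}^{m+k}(\sqrt{2})^{s} \;\le\; \frac{\sqrt{2}}{\sqrt{2}-1}\,\sqrt{2^{m+k}} \;=\; (2+\sqrt{2})\sqrt{B_T^{\mathrm{final}}B_G^{\mathrm{final}}} \;\le\; 2(2+\sqrt{2})\sqrt{T\,G^\star}.
\]
Multiplying by the leading factor $2\sqrt{2d|O|}$ gives a main term of $4(2+\sqrt{2})\sqrt{2d|O|\,T\,G^\star}\approx 13.7\sqrt{2d|O|\,T\,G^\star}$, comfortably below the claimed $25\sqrt{2d|O|\,T\,G^\star}$, and the remaining slack plus the final $+1$ absorbs the lower-order terms.

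The main obstacle I anticipate is the bookkeeping for two independently doubled parameters rather than the usual single one: I must verify the lattice-path structure so that $a_j+c_j$ is strictly increasing and each $(\sqrt{2})^{s}$ appears exactly once, and confirm that restarting the algorithm does not break the per-epoch decomposition of global regret. A secondary subtlety is controlling the accumulated discretization error, since each epoch uses its own grid spacing $\epsilon_j$; here I would tie $\epsilon_j$ to $B_T^{(j)}$ so that each per-epoch contribution $\epsilon_j L|E_j|\le 1$, and argue that these (and the crudeness of the constant $25$) are subsumed into the stated bound.
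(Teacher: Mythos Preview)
Your approach is correct and in fact sharper than the paper's, though the underlying idea (sum geometric series across doubling epochs and use the per-epoch bound from Theorem~\ref{thm:continuous-lipschitz-known}) is the same. The difference lies in how the epoch sum is handled. The paper does \emph{not} use the lattice-path observation; instead it crudely upper-bounds the epoch sum by the full double sum
\[
\sum_{i=0}^{\lceil\log B_T^*\rceil}\sum_{j=0}^{\lceil\log B_{\Delta^o,LT}^*\rceil} 2\sqrt{2d|O|\,2^{i}2^{j}},
\]
which factors as a product of two geometric series and yields the constant $\bigl(\tfrac{\sqrt{2}}{\sqrt{2}-1}\bigr)^2\cdot 2\approx 23.3$, rounded up to $25$ after absorbing the $O(\log(B_T^*B_{\Delta^o,LT}^*))$ additive terms. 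Your observation that the actual epochs visit each value of $a_j+c_j$ at most once reduces this to a single geometric series and gives roughly $4(2+\sqrt{2})\approx 13.7$, comfortably inside the stated bound. So your route buys a better constant at the price of the extra bookkeeping about the lattice path; the paper's route is cruder but avoids that bookkeeping entirely.

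Two small points to tidy up. First, the paper's description of the doubling trick says ``double the bound'' when a discriminant fails, without specifying what happens if both fail in the same round; your claim that $a_j+c_j$ increases by \emph{exactly} one per epoch should be weakened to ``increases by at least one,'' after which your geometric sum becomes an upper bound (each exponent $s$ appears at most once) rather than an equality, and everything still goes through. Second, the discretization error $\epsilon_j L|E_j|\le 1$ is per epoch, so summing over epochs contributes $O(m+k)=O(\log T+\log G^\star)$, not $1$; this is indeed lower order and absorbed by the slack in the constant, but you should say so explicitly rather than leave it under ``the final $+1$.''
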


\begin{proof}[Proof of Corollary \ref{cor:doubling}]
Based on the doubling trick that we described above, we divide the algorithm into stages in which $B_T$ and $B_{\Delta^o, LT}$ are constants.  Let $B^*_L$, and $B^*_{\Delta^o,LT}$ be the values of $B_L$ and $B_{\Delta,LT}$ respectively when the algorithm terminates. There is at most a total of $\log\left(B_T^* \right) +\log\left(B_{\Delta^o,LT}^* \right) + 1$ stages in this doubling process. Since the actual expected regret is bounded by the sum of the regret of each stage, following the result of Theorem \ref{thm:continuous-lipschitz-known}, we have
\begin{align*}
R(T) &\leq \sum_{i =0}^{\left\lceil\log\left(B_T^*\right)\right\rceil} \sum_{j=0}^{\left\lceil\log\left(B_{\Delta^o,LT}^*\right)\right\rceil}\left(2 \sqrt{2d2^i|O|2^j}\right) + \log\left(B_T^* \right)+\log\left(B_{\Delta^o,LT}^* \right) + 1 \\
& = \sum_{i =0}^{\left\lceil\log\left(B_T^*\right)\right\rceil} \sum_{j=0}^{\left\lceil\log\left(B_{\Delta^o,LT}^*\right)\right\rceil} \left(2 \sqrt{2d|O|2^i \cdot 2^j}\right) + \log\left(B_T^* B_{\Delta^o,LT}^*\right) + 1\\
&= \left[\sum_{i =0}^{\left\lceil\log\left(B_T^*\right)\right\rceil} \left(\sqrt{2} \right)^i\right] \cdot \left[\sum_{j =0}^{\left\lceil\log\left(B_{\Delta^o,LT}^*\right)\right\rceil} \left(\sqrt{2} \right)^j\right] 2\sqrt{2d|O|} + \log\left(B_T^* B_{LT,\Delta^o}^*\right) + 1\\
&= \frac{1-\sqrt{2}^{\lceil\log(B^*_T)\rceil+1}}{1-\sqrt{2}}\cdot \frac{1-\sqrt{2}^{\lceil\log(B^*_{\Delta^o,LT})\rceil+1}}{1-\sqrt{2}} \cdot 2\sqrt{2d|O|} +\log\left(B_T^* B_{\Delta^o,LT}^*\right) + 1\\
&\leq \left(\frac{\sqrt{2}}{\sqrt{2}-1}\right)^2\sqrt{B^*_T B^*_{\Delta^o, LT}} \cdot 2 \sqrt{2d|O|} + \log\left(B_T^* B_{\Delta^o,LT}^*\right) + 1\\
&= \left(\frac{\sqrt{2}}{\sqrt{2}-1}\right)^2 \cdot 2\sqrt{2d|O|B^*_T B^*_{\Delta^o,LT}} + \log\left(B_T^* B_{\Delta^o,LT}^*\right) + 1\\
& \leq 25\sqrt{2d|O|B^*_T B^*_{\Delta^o,LT}} + 1
\end{align*}

Combining the fact that $B^*_T\leq T$ and $B^*_{\Delta^o,LT}\leq \log\left(\max\left\{LT, \frac{1}{\Delta^o}\right\}\right)$ as well as the above inequalities, we complete the proof.
\end{proof}

\subsection{Omitted Proofs from Section \ref{sec:sponsored-lipschitz}}
\begin{proof}[Proof of Theorem \ref{thm:lipschitz-weighted-gsp}]
Consider a bidder $i$. Observe that conditional on the bidder's score $s_i$, his utility remains constant if he is allocated the same slot. Moreover, when the slots are different, then the difference in utilities is at most $2$, since utilities lie in $[-1,1]$. Moreover, because the slots are allocated in decreasing order of rank scores, the slot allocation of a bidder is different under $b_i$ and $b_i'$ only if there exists a bidder $j$, who passes the rank-score reserve (i.e. $s_j\cdot b_j\geq r$) and whose rank-score $s_j\cdot b_j$ lies in the interval $[s_i\cdot b_i, s_i\cdot b_i']$. Hence, conditional on $s_i$, the absolute difference between the bidder's expected utility when he bids $b_i$ and when he bids $b_i+\eps$, with $\eps > 0$, is upper bounded by:
\begin{align*}
2\cdot\Pr\left[\exists j\neq i \text{ s.t }s_j\cdot b_j \in [s_i\cdot b_i, s_i\cdot (b_i + \eps)] \text{ and } s_j\cdot b_j \geq r ~|~ s_i\right] 
\end{align*}
By a union bound the latter is at most:
\begin{align*}
2\cdot \sum_{j \neq i} \Pr \left[s_j \in \left[\frac{s_ib_i}{b_j}, \frac{s_i(b_i+\eps)}{b_j}\right] \text{ and } s_j\cdot b_j \geq r ~|~ s_i\right]
\end{align*}
Since $s_j\in [0,1]$, the previous quantity is upper bounded by replacing the event $s_j\cdot b_j\geq r$ by $b_j\geq r$. This event is independent of the scores and we can then write the above bound as:
\begin{align*}
2 \cdot \sum_{j \neq i \text{ s.t. } b_j \geq r} \Pr \left[s_j \in \left[\frac{s_ib_i}{b_j}, \frac{s_i(b_i+\eps)}{b_j}\right] \Big| s_i\right]
\end{align*}
Since each quality score $s_j$ is drawn independently from an $L$-Lipschitz CDF $F_j$, we can further simplify the bound by:
\begin{align*}
2 \cdot \sum_{j \neq i \text{ s.t. } b_j \geq r} \left[F_j \left(\frac{s_i(b_i+\eps)}{b_j} \right) - F_j \left( \frac{s_i b_i}{b_j}\right)\right] &\leq 2 \cdot \sum_{j \neq i \text{ s.t. } b_j \geq r} L \frac{s_i\eps}{b_j} \leq 2 \cdot \sum_{j \neq i \text{ s.t. } b_j \geq r} L \frac{s_i\eps}{r} \leq \frac{2nL}{r} \epsilon
\end{align*}
Since the absolute difference of utilities between these two bids is upper bounded conditional on $s_i$, by the triangle inequality it is also upper bounded even unconditional on $s_i$, which leads to the Lipschitz property we want:
\begin{equation}
\big| u_i(b_i, \mbf{b}_{-i}, r) - u_i(b_i+\epsilon, \mbf{b}_{-i}, r)\big| \leq \frac{2nL}{r} \epsilon
\end{equation}
\end{proof}

\section{Omitted proofs from section \ref{SEC:SWITCH-POA}}\label{appendix:switch}

\subsection{Switching Regret and PoA}

\begin{proof}[Proof of Corollary \ref{cor:switch}]
We first observe that the results proven in \cite{GLL12} for a prediction algorithm $\mathcal{A}$ with \emph{regret} upper bounded by $\rho(T)$ hold also for algorithms $\mathcal{A}$ for which we know upper bound of their expected regrets. Specifically, if algorithm $\mathcal{A}$ has an upper bound of $\rho(T)$ for its expected regret, where $\rho(T)$ is a concave, non-decreasing, $[0,+\infty) \to [0,+\infty)$ function, then Lemma $1$ from \cite{GLL12} holds for \emph{expected} regret. With that in mind, we can directly apply the \emph{Randomized Tracking Algorithm} and get expected switching regret upper bounded by: 
\begin{equation}\label{eq:exp-switch}
\left(C(TP) +1 \right)L_{C(TP),T}\rho \left(\frac{T}{\left(C(TP) +1 \right)L_{C(TP),T}} \right) + \sum_{t=1}^T \frac{\eta_t}{8} + \frac{r_T\left( \left(C(TP) +1 \right)L_{C(TP),T - 1}-1\right)}{\eta_T}
\end{equation}
where $TP$ is the switching path of the optimal bids and $C(TP)$ is the number of switches in the optimal bid according to this path.

We proceed by making sure that the conditions for the upper bound of the expected regret of $\winexp$ satisfy the conditions required by algorithm $\mathcal{A}$ in \cite{GLL12}. Indeed, the upper bound of the expected regret of our algorithm, $\sqrt{2dT|O|\log \left(\max \left\{LT, \frac{1}{\Delta^o} \right\} \right)} +1$, is non decreasing in $T$. Also, at timestep $t=0$, we incur no regret. We also apply the following slight modifications in Algorithm $2$ in \cite{GLL12} so as to match the nature of our problem. First, instead of computing the expected loss at each timestep $t$, we will now compute the expected outcome-based utility, i.e. $\bar{u}_t\left(\pi_t \right) = \sum_{b \in B} \pi_t(b)\E_{o_t} \left[\tilde{u}_t(b)\right]$. Second, instead of the cumulative loss of their algorithm $\mathcal{A}$ we will now use the cumulative outcome-based expected utility of $\winexp$, i.e. $\bar{U}_t\left(\winexp, T \right) = \sum_{c=0}^C \bar{U}_{\winexp}(t_c, t_{c+1})$, where 
\begin{equation*}
\bar{U}_{\winexp}(t_c, t_{c+1}) = \sum_{s=t_c}^{t_{c+1}-1} \bar{u}_s\left(\pi_{\winexp,s}(t_c) \right)
\end{equation*}
is the cumulative outcome-based expected utility gained from our $\winexp$ algorithm in the time interval $[t_c, t_{c+1})$\footnote{We clarify here that these time intervals are with respect to the switching bids.} with respect to $\bar{u}_s$ for $s \in [t_c, t_{c+1})$. Now, we are computing the regret components of \cite{GLL12} so as to achieve the desired result. 

Before we show the specifics of the computation, we note here that $g > 0$ is a \emph{parameter} of the Tracking Regret algorithm presented by \cite{GLL12} and can be set a priori from the designer of the algorithm. The complexity of $g$ affects the computational complexity of the algorithm and there is a tradeoff between the computational complexity and the regret of the algorithm. For our computations here, we will set
\begin{equation}\label{eq:g}
g+1 = \left(\frac{T}{C(TP)+1}\right)^\alpha
\end{equation}

where $0 < \alpha < 1$ is a constant. Now, we are ready to compute the components of the regret: 
\begin{align*}
A   &= L_{C(TP),T} \left(C(TP) +1 \right) R_{\winexp} \left(\frac{T}{L_{C(TP),T}\left(C(TP) +1 \right)} \right) \\
    &\leq 25 \left(\frac{\log \left(\frac{T}{C(TP)+1}\right)}{\log (g+1)}+2 \right) \left(C(TP)+1\right) \left(\sqrt{2d|O|\frac{T\log (g+1) \log (m)}{\log \left(\frac{T}{C(TP)+1} \right) + 2\log (g+1)}} + 1 \right) \\
    &= 50 \cdot \left(2 + \frac{1}{\alpha}\right)\cdot \left(C(TP)+1\right) \sqrt{2d|O|\cdot \frac{\alpha}{1+2\alpha} \cdot T\log (m)}\\
    &\leq 50 \sqrt{\frac{1+2\alpha}{\alpha}\cdot\left(C(TP)+1\right)^2 2d|O|T\log (m)}\\
    &\leq 50 \sqrt{\left(2+\frac{1}{\alpha}\right)\cdot\left(C+1\right)^2 2d|O|T\log (m)}
\end{align*}
where in the second equality we have denoted $\log (m) = \log \left(\max \left\{LT, \frac{1}{\Delta^o}\right\} \right)$ and the last inequality comes from the fact that $C$ is the upper bound on the number of switches that the transition path $TP$ can have. %
Moving on to the computation of the rest of the components of the regret: 
\begin{align*}
B &=\sum_{t=1}^T \frac{\eta_t}{8} \leq \frac{1}{8}\sqrt{\frac{T\log\left(\nicefrac{1}{\eps}\right)}{2|O|}} = O\left(\sqrt{\frac{T}{|O|}} \right)\\
D   &= r_T \left( L_{C(TP),T} \left( C(TP) + 1 \right) - 1 \right) \\
    &= \left(\frac{\alpha+1}{\alpha} + \eps_2 \right) \log T + \log \left(1+\eps_2\right) - \left(\frac{\alpha+1}{\alpha} \right)\log\eps_2
\end{align*}
where $\eps_2 \in (0,1)$ is a constant. Before we conclude, we observe that even though Corollary $1$ of \cite{GLL12} is stated as a high-probability ex post result, the proof uses a result from \cite{BL06} (Lemma $4.1$) which also holds for the expected regret. According to \cite{GLL12} the switching regret is the sum of the aforementioned $A, B, D$. Thus, we get the result.
\end{proof}

\subsection{Feedback Graphs over Outcomes}\label{appendix:outcome-feedback-graph}

We first prove bounds on the moments of our unbiased estimates used in the case of a feedback graph over outcomes.

\begin{lemma}\label{lem:moments3}
At each iteration $t$, for any action $b\in B$, the random variable $\tilde{u}_t(b)$ has bias with respect to $u_t(b)-1$ bounded by:
$\big|\E\left[\tilde{u}_t(b)\right] - (u_t(b) - 1)\big| \leq 2 \epsilon |O|$ and has expected second moment bounded by: $\forall b\in B: \E\left[\tilde{u}_t(b)^2\right]\leq4\sum_{o\in O_{\epsilon}} \frac{\Pr_t[o|b]}{\sum_{o'\in N_{\epsilon}^{in}(o)}\Pr_t[o']}$.
\end{lemma}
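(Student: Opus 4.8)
The plan is to mirror the two-part computation used for Lemma~\ref{lem:moments2}, with the only new ingredient being the bookkeeping forced by the feedback graph and the $\epsilon$-thresholding. Throughout I would write $D_o=\sum_{o'\in N_\epsilon^{in}(o)}\Pr_t[o']$ for the probability that outcome $o$ is \emph{observed} (i.e.\ that the realized outcome $o_t$ lands in $N_\epsilon^{in}(o)$), and note that since every neighborhood contains a self-loop we always have $D_o\ge \Pr_t[o]$.

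First I would compute $\E[\tilde u_t(b)]$. Rewriting the estimator as $\tilde u_t(b)=\sum_{o\in O_\epsilon}\frac{(r_t(b,o)-1)\Pr_t[o|b]}{D_o}\,\1\{o_t\in N_\epsilon^{in}(o)\}$ (using $o\in N_\epsilon^{out}(o_t)\iff o_t\in N_\epsilon^{in}(o)$), taking expectation over $o_t\sim\Pr_t[\cdot]$ replaces each indicator by $\sum_{k\in N_\epsilon^{in}(o)}\Pr_t[k]=D_o$, which cancels the denominator. Hence $\E[\tilde u_t(b)]=\sum_{o\in O_\epsilon}(r_t(b,o)-1)\Pr_t[o|b]$, i.e.\ the estimator is exactly unbiased for the \emph{restricted} utility supported on $O_\epsilon$. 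Subtracting the true $u_t(b)-1=\sum_{o\in O}(r_t(b,o)-1)\Pr_t[o|b]$ leaves only the excluded outcomes, so the bias equals $-\sum_{o\notin O_\epsilon}(r_t(b,o)-1)\Pr_t[o|b]$. Bounding $|r_t(b,o)-1|\le 2$ and using that every discarded outcome carries probability below the threshold $\epsilon$, this is at most $2\epsilon|O|$ in magnitude, which is the first claim.

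For the second moment I would expand the square, $\tilde u_t(b)^2=\sum_{o,o''\in O_\epsilon}\frac{g_o g_{o''}}{D_oD_{o''}}\1\{o_t\in N_\epsilon^{in}(o)\}\,\1\{o_t\in N_\epsilon^{in}(o'')\}$ with $g_o=(r_t(b,o)-1)\Pr_t[o|b]$, and take expectation over $o_t$. The diagonal terms ($o=o''$) telescope exactly as above: $\E[\1\{o_t\in N_\epsilon^{in}(o)\}]=D_o$ kills one factor of $D_o$, and using $(r_t(b,o)-1)^2\le 4$ together with $\Pr_t[o|b]^2\le\Pr_t[o|b]$ they sum to the claimed $4\sum_{o\in O_\epsilon}\Pr_t[o|b]/D_o$. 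The remaining work is to show the off-diagonal contributions do not spoil this bound; here I would proceed as in the proof of Lemma~\ref{lem:moments4}, using convexity (Jensen's inequality) on the neighborhood aggregation together with the self-loop bound $D_o\ge\Pr_t[o]$ and $\sum_{o}\Pr_t[o|b]\le 1$ to fold the cross terms into the same diagonal expression up to the constant $4$.

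The main obstacle is precisely this last step: unlike the singleton-outcome estimators of Lemmas~\ref{lem:moments2}--\ref{lem:moments4}, the graph estimator \emph{aggregates} one importance-weighted term per observed out-neighbor of $o_t$, so squaring it produces genuine cross terms between distinct outcomes that are simultaneously revealed by the same realized outcome. Controlling these relies on the $\epsilon$-thresholding keeping each observation probability $D_o$ bounded away from zero (so the estimator is well defined and its variance finite) and on the self-loop structure relating $D_o$ back to $\Pr_t[o]$. These two moment bounds then feed the regret analysis of Theorem~\ref{thm:feedback-graph} in the usual way: the $\pi_t$-weighted second moment $\sum_b\pi_t(b)\,4\sum_{o\in O_\epsilon}\Pr_t[o|b]/D_o=4\sum_{o\in O_\epsilon}\Pr_t[o]/D_o$ is bounded by the independence number $\alpha$ (up to the logarithmic factor produced by the threshold) via the standard feedback-graph lemma, while the per-round bias $2\epsilon|O|$ summed over $T$ rounds contributes only the additive constant once $\epsilon$ is taken of order $1/(|O|T)$.
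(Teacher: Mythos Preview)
Your bias computation is exactly the paper's argument: rewrite via the duality $o\in N_\epsilon^{out}(o_t)\Leftrightarrow o_t\in N_\epsilon^{in}(o)$, take expectation so that $\E[\1\{o_t\in N_\epsilon^{in}(o)\}]=D_o$ cancels the denominator, and bound the residual sum over $O\setminus O_\epsilon$ by $2\epsilon|O|$.

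For the second moment your plan diverges from the paper and leaves a gap. You propose to expand the square into diagonal plus cross terms, observe that the diagonal already matches the target $4\sum_{o\in O_\epsilon}\Pr_t[o|b]/D_o$, and then ``fold in'' the cross terms via Jensen and the self-loop bound $D_o\ge\Pr_t[o]$. But since $r_t(b,o)-1\le 0$ for every $o$, each $g_o\le 0$ and hence every cross term $g_og_{o''}/(D_oD_{o''})\cdot\Pr_t[o_t\in N_\epsilon^{in}(o)\cap N_\epsilon^{in}(o'')]$ is \emph{nonnegative}; the diagonal sum is therefore a lower bound on $\E[\tilde u_t(b)^2]$, not an upper bound, and you genuinely have to control the cross contribution. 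Your description of how to do this is not an argument, and the self-loop inequality plays no role in the paper's moment bound (it only enters later, implicitly, in the graph lemma used for Theorem~\ref{thm:feedback-graph}).

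The paper's route is cleaner and never expands the square. For each fixed realized $o_t\in O_\epsilon$, regard the inner sum
\[
\sum_{o\in N_\epsilon^{out}(o_t)}\frac{r_t(b,o)-1}{D_o}\,\Pr_t[o|b]
\]
as a sub-probability average with weights $\Pr_t[o|b]$ (which sum to at most $1$ over $o$), and apply Jensen to $x\mapsto x^2$ \emph{before} any expansion:
\[
\Bigl(\sum_{o\in N_\epsilon^{out}(o_t)}\tfrac{r_t(b,o)-1}{D_o}\,\Pr_t[o|b]\Bigr)^{2}
\;\le\;
\sum_{o\in N_\epsilon^{out}(o_t)}\tfrac{(r_t(b,o)-1)^2}{D_o^{2}}\,\Pr_t[o|b].
\]
Taking expectation over $o_t$, swapping the two sums (the same in/out duality you used for the bias), and bounding $(r_t-1)^2\le 4$ then gives the claim directly. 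This is exactly the Jensen step from Lemma~\ref{lem:moments4}, but with the role of the weights $f_t(\cdot)$ played by $\Pr_t[\cdot\,|b]$; your instinct to mimic that lemma was right, but the analogy should be applied to the whole squared sum, not after a diagonal/off-diagonal split.
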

\begin{proof}[Proof of Lemma \ref{lem:moments3}]
For the expected utility we have: 
\begin{align*}
\E\left[\tilde{u}_t(b)\right] &= \E_{o_t}\left[\1\{o_t\in O_{\epsilon}\} \sum_{o \in N_{\epsilon}^{out}(o_t)}
\frac{(r_t(b, o) - 1) \Pr_t[o|b]}{\sum_{o'\in N_{\epsilon}^{in}(o)}\Pr_t[o']}\right]\\
&=\sum_{o_t\in O_{\epsilon}} \sum_{o \in N_{\epsilon}^{out}(o_t)}
\frac{(r_t(b, o) - 1) \Pr_t[o|b]}{\sum_{o'\in N_{\epsilon}^{in}(o)}\Pr_t[o']} \Pr_t[o_t]\\
&= \sum_{o \in O_{\epsilon}} \sum_{o_t\in N_{\epsilon}^{in}(o)} 
\frac{(r_t(b, o) - 1) \Pr_t[o|b]}{\sum_{o'\in N_{\epsilon}^{in}(o)}\Pr_t[o']} \Pr_t[o_t]\\
&= \sum_{o \in O_{\epsilon}} \frac{(r_t(b, o) - 1) \Pr_t[o|b]}{\sum_{o'\in N_{\epsilon}^{in}(o)}\Pr_t[o']}\sum_{o_t\in N_{\epsilon}^{in}(o)} 
 \Pr_t[o_t]\\
& = \sum_{o\in O_{\epsilon}} (r_t(b, o) - 1) \Pr_t[o|b]\\
& = \sum_{o\in O} (r_t(b, o) - 1) \Pr_t[o|b] - \sum_{o\notin O_{\epsilon}} (r_t(b, o) - 1) \Pr_t[o|b]\\
& = u_t(b) - 1   - \sum_{o\notin O_{\epsilon}} (r_t(b, o) - 1) \Pr_t[o|b]
\end{align*}
Thus, we get that the bias of $\tilde{u}$ with respect to $u_t - 1$ is bounded by:
\begin{equation}
\big|\E\left[\tilde{u}_t(b)\right] - (u_t(b) - 1)\big| \leq 2 \epsilon |O| 
\end{equation}
Similarly for the second moment:
\begin{align}
\E\left[\tilde{u}_t(b)^2\right] &\leq \E_{o_t}\left[\left(\1\{o_t\in O_{\epsilon}\} \sum_{o \in N_{\epsilon}^{out}(o_t)}
\frac{(r_t(b, o) - 1) \Pr_t[o|b]}{\sum_{o'\in N_{\epsilon}^{in}(o)}\Pr_t[o']}\right)^2 \right] \nonumber\\
&= \sum_{o_t\in O_{\epsilon}} \left(\sum_{o \in N_{\epsilon}^{out}(o_t)}
\frac{(r_t(b, o) - 1) \Pr_t[o|b]}{\sum_{o'\in N_{\epsilon}^{in}(o)}\Pr_t[o']}\right)^2 \Pr_t[o_t] \label{eqn:variance-bnd-feedback}
\end{align}
Observe that the quantity inside the square:
\begin{align*}
\sum_{o \in N_{\epsilon}^{out}(o_t)}
\frac{(r_t(b, o) - 1)}{\sum_{o'\in N_{\epsilon}^{in}(o)}\Pr_t[o']}  \Pr_t[o|b]
\end{align*}
can be thought of as an expected value of the quantity $\frac{(r_t(b, o) - 1)}{\sum_{o'\in N_{\epsilon}^{in}(o)}\Pr_t[o']}$, were $o$ is the random variable and is drawn from the distribution of outcomes conditional on a bid $b$. 
Thus, by Jensen's inequality, the square of the latter expectation is at most the expectation of the square, i.e.:
\begin{align*}
\left(\sum_{o \in N_{\epsilon}^{out}(o_t)}
\frac{(r_t(b, o) - 1)}{\sum_{o'\in N_{\epsilon}^{in}(o)}\Pr_t[o']}  \Pr_t[o|b]\right)^2 \leq \sum_{o\in N_{\epsilon}^{out}(o_t)}  \frac{(r_t(b, o) - 1)^2}{\left(\sum_{o'\in N_{\epsilon}^{in}(o)}\Pr_t[o']\right)^2} \Pr_t[o|b]
\end{align*}
Combining with Equation \eqref{eqn:variance-bnd-feedback}, we get:
\begin{align*}
\E\left[\tilde{u}_t(b)^2\right] &\leq \sum_{o_t\in O_{\epsilon}} \sum_{o\in N_{\epsilon}^{out}(o_t)}  \frac{(r_t(b, o) - 1)^2}{\left(\sum_{o'\in N_{\epsilon}^{in}(o)}\Pr_t[o']\right)^2} \Pr_t[o|b] \Pr_t[o_t]\\
&= \sum_{o\in O_{\epsilon}} \sum_{o_t\in N_{\epsilon}^{in}(o)}   \frac{(r_t(b, o) - 1)^2}{\left(\sum_{o'\in N_{\epsilon}^{in}(o)}\Pr_t[o']\right)^2} \Pr_t[o|b] \Pr_t[o_t]\\
&= \sum_{o\in O_{\epsilon}} \frac{(r_t(b, o) - 1)^2}{\left(\sum_{o'\in N_{\epsilon}^{in}(o)}\Pr_t[o']\right)^2}\Pr_t[o|b] \sum_{o_t\in N_{\epsilon}^{in}(o)} \Pr_t[o_t]\\
&= \sum_{o\in O_{\epsilon}} \frac{(r_t(b, o) - 1)^2}{\sum_{o'\in N_{\epsilon}^{in}(o)}\Pr_t[o']}\Pr_t[o|b]\\
&\leq 4\sum_{o\in O_{\epsilon}} \frac{\Pr_t[o|b]}{\sum_{o'\in N_{\epsilon}^{in}(o)}\Pr_t[o']}
\end{align*}
where the last inequality holds since $r_t(\cdot,\cdot)\in [-1,1]$.
\end{proof}

\begin{proof}[Proof of Theorem \ref{thm:feedback-graph}] 
Observe that regret with respect to utilities $u_t(\cdot)$ is equal to regret with respect to the translated utilities $u_t(\cdot) -1$. We use the fact that the exponential weight updates with an estimate $\tilde{u}_t(\cdot) \leq 0$ which has bias with respect to the true utilities, bounded by $\kappa$, achieves expected regret of the form: %
\begin{align*}
R(T) \leq~& \frac{\eta}{2} \sum_{t=1}^T \sum_{b\in B} \pi_t(b) \cdot \E\left[\tilde{u}_t(b)^2\right] + \frac{1}{\eta} \log(|B|) + 2\kappa T
\end{align*}
For the detailed proof of the above claim, please see Appendix \ref{appendix:a}. Invoking the bound on the bias and the second moment by Lemma \ref{lem:moments3}, we get:
\begin{align*}
R(T) \leq~& 2\eta \sum_{t=1}^T \sum_{b\in B} \pi_t(b) \cdot \sum_{o\in O_{\epsilon}} \frac{\Pr_t[o|b]}{\sum_{o'\in N_{\epsilon}^{in}(o)}\Pr_t[o']} + \frac{1}{\eta} \log(|B|) + 4\epsilon |O| T\\
=~& 2\eta \sum_{t=1}^T \sum_{o\in O_{\epsilon}} \sum_{b\in B} \pi_t(b) \cdot  \frac{\Pr_t[o|b]}{\sum_{o'\in N_{\epsilon}^{in}(o)}\Pr_t[o']} + \frac{1}{\eta} \log(|B|) + 4\epsilon |O| T\\
=~& 2\eta \sum_{t=1}^T \sum_{o\in O_{\epsilon}} \frac{\Pr[o]}{\sum_{o'\in N_{\epsilon}^{in}(o)}\Pr_t[o']} + \frac{1}{\eta} \log(|B|) + 4\epsilon |O| T
\end{align*}
We can now invoke Lemma 5 of \cite{ACDK15}, which states that:
\begin{lemma}[\cite{ACDK15}]  Let $G = (V,E)$ be a directed graph with $|V| = K$, in which each node $i\in V$ is assigned a positive weight $w_i$. Assume that $\sum_{i\in V}w_i\leq 1$, and that $w_i\geq \epsilon$ for all $i \in V$ for some constant $0 < \epsilon <1/2$. Then 
\begin{equation}
\sum_{i\in V}\frac{w_i}{\sum_{j\in N^{in}(i)} w_j} \leq 4\alpha \ln\frac{4K}{\alpha \epsilon}
\end{equation}
where neighborhoods include self-loops and $\alpha$ is the independence number of the graph.
\end{lemma}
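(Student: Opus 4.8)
The plan is to prove this as a purely combinatorial statement about the fixed weights $w_i$ and an arbitrary directed graph $G$ with self-loops, independent of the learning algorithm. Write $W_i = \sum_{j\in N^{in}(i)} w_j$ for the total incoming weight at $i$. The self-loop forces $W_i\ge w_i\ge\epsilon$, while $\sum_i w_i\le 1$ forces $W_i\le 1$; thus every $W_i$ lives in the bounded range $[\epsilon,1]$, and it is this range that will ultimately produce the $\ln(1/\epsilon)$-type factor. The goal is then to bound $\sum_i w_i/W_i$ by $4\alpha\ln(4K/(\alpha\epsilon))$.

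The core of the argument I would use is a greedy peeling of the graph driven by incoming weights. I repeatedly select the still-present vertex $i^\star$ of \emph{minimum} current $W_{i^\star}$, record it as a representative, and delete $i^\star$ together with all its still-present in-neighbors $N^{in}(i^\star)$ (note $i^\star$ is itself deleted via its self-loop). This partitions $V$ into batches $S_1,S_2,\dots$ with thresholds $\omega_\ell := W_{i^\star_\ell}$, which are non-decreasing in $\ell$ and lie in $[\epsilon,1]$. Two facts drive the estimate: for every $j$ in batch $\ell$ we have $W_j\ge\omega_\ell$ (since $i^\star_\ell$ was the current minimum), and since $S_\ell\subseteq N^{in}(i^\star_\ell)$ the removed weight $m_\ell:=\sum_{j\in S_\ell}w_j\le W_{i^\star_\ell}=\omega_\ell$. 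Hence each batch contributes at most one unit:
\[
\sum_{j\in S_\ell}\frac{w_j}{W_j}\;\le\;\frac{1}{\omega_\ell}\sum_{j\in S_\ell}w_j\;=\;\frac{m_\ell}{\omega_\ell}\;\le\;1 .
\]
Summing over batches reduces the lemma to bounding the effective number of batches by $4\alpha\ln(4K/(\alpha\epsilon))$.

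The remaining and genuinely delicate step is to inject the independence number, and this is where I expect the main obstacle to lie. I would group the batches by the geometric scale of their threshold, $\omega_\ell\in(2^{-k-1},2^{-k}]$; since $\omega_\ell\in[\epsilon,1]$ there are only $O(\ln(1/\epsilon))$ active scales, and within each scale I would argue that the representatives, together with the charging above, can be compared against a maximal independent set so that each scale contributes only $O(\alpha)$. The difficulty is exactly this per-scale bound for \emph{directed} graphs: a maximal independent set need not be a kernel, so one cannot assume every vertex points into it, and naive charging of total weight to the independent set fails --- the directed triangle with self-loops already exhibits a single scale whose total weight exceeds $\alpha$ times its threshold. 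Resolving it requires exploiting the selection order, namely that a representative chosen later is never an in-neighbor of one chosen earlier (it survived that earlier deletion), and combining this one-directional acyclicity with the scale grouping and the range $[\epsilon,1]$ to extract the independence number rather than the possibly much larger domination number. The factor $K$ and the explicit constants then fall out of counting the scales and the worst-case number of representatives per scale. As an alternative to the per-batch estimate, one may instead use the pointwise inequality $w_i/W_i\le\ln\!\bigl(W_i/(W_i-w_i)\bigr)$ and telescope along the same ordering, which trades the batch bookkeeping for a single integral-style sum but leaves the directed-graph independence argument as the essential hurdle.
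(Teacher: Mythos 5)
The paper you are working from contains no proof of this lemma at all: it is quoted verbatim, with citation, from \cite{ACDK15}, so your proposal must be judged against the proof in that source. Your first half is fine (self-loops give $W_i\ge\epsilon$, and each peeling batch satisfies $\sum_{j\in S_\ell}w_j/W_j\le m_\ell/\omega_\ell\le 1$), but the reduction you then commit to --- bounding the \emph{number} of batches, grouped by geometric scale, by $O(\alpha)$ per scale --- is not merely the unresolved ``delicate step'' you flag; it is false, and the example that kills it is exactly the edge pattern you hoped would save you. Take $G$ to be the transitive tournament on $K$ vertices with self-loops (edges $i\to j$ for $i<j$), uniform weights $w_i=1/K$, and $\epsilon=1/K$, so $\alpha=1$. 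Your greedy rule selects vertex $1$ (current in-weight $1/K$; its only surviving in-neighbor is itself), giving $S_1=\{1\}$, then vertex $2$ with current in-weight again $1/K$, and so on: the peeling produces $K$ singleton batches, all with threshold $\omega_\ell=1/K=\epsilon$, hence all in one geometric scale, while the target bound is $4\alpha\ln(4K/(\alpha\epsilon))=O(\ln K)$. Note moreover that $m_\ell/\omega_\ell=1$ \emph{exactly} in every batch, so even the unrounded fractional batch sum equals $K$; the looseness is not in rounding $m_\ell/\omega_\ell$ up to $1$ but in replacing the original in-weight $W_j=j/K$ by the post-deletion minimum $1/K$ (the true left-hand side is $\sum_{j}(1/K)/(j/K)=H_K\approx\ln K$). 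The ``forward-only acyclicity'' among representatives cannot bound their number by $\alpha$ either, since a transitive tournament realizes precisely that pattern with $K$ representatives and $\alpha=1$. Consequently no accounting that sees only the batch data $(m_\ell,\omega_\ell)$ and counts batches can reach the stated bound.

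The cited proof (in \cite{ACDK15}, building on the earlier Alon--Cesa-Bianchi--Gentile--Mansour feedback-graph work) avoids peeling and instead keeps the \emph{original} in-weights, which is exactly the information your deletions destroy. The heart is a Tur\'an-type counting estimate: if $U=\{i: W_i<\delta\}$, the induced subgraph on $U$ has independence number at most $\alpha$, so it contains at least $|U|(|U|-\alpha)/(2\alpha)$ directed edges, hence some vertex of $U$ has at least $(|U|-\alpha)/(2\alpha)$ in-neighbors inside $U$, each of weight at least $\epsilon$; since that vertex lies in $U$, this forces $|U|<2\alpha\delta/\epsilon+\alpha$. Sorting the vertices by $W$ then yields the pointwise lower bound $W_{(i)}\gtrsim \epsilon(i-\alpha)/(2\alpha)$, and after a discretization step that replaces each vertex by roughly $w_i/\epsilon$ copies (reducing to near-uniform weights, and the source of the constants and the $K$ inside the logarithm), the sum $\sum_i w_i/W_i$ is dominated by a harmonic series $\sum_i 2\alpha/(i-\alpha)\le 2\alpha\ln(\cdot)$, giving $4\alpha\ln\bigl(4K/(\alpha\epsilon)\bigr)$. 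If you want to repair your write-up, discard the batch count entirely and prove the displayed cardinality bound on $\{i: W_i<\delta\}$; that single counting lemma, summed along the sorted order, is the whole argument.
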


Invoking the above lemma for the feedback graph $G_{\epsilon}$ (and noting that the independence number cannot increase by restricting to a sub-graph), we get:
\begin{equation}
\sum_{o\in O_{\epsilon}} \frac{\Pr[o]}{\sum_{o'\in N_{\epsilon}^{in}(o)}\Pr_t[o']}\leq 4\alpha \ln \frac{4|O|}{\alpha \epsilon}
\end{equation}
Thus, we get a bound on the regret of:
\begin{align*}
R(T) \leq~&8\eta \alpha \ln\left(\frac{4|O|}{\alpha \epsilon}\right) T+ \frac{1}{\eta} \log(|B|) + 4\epsilon |O| T
\end{align*}
Picking $\epsilon = \frac{1}{4|O| T}$, we get: 
\begin{align*}
R(T) \leq~&8\eta \alpha \ln\left(\frac{16|O|^2 T}{\alpha}\right) T+ \frac{1}{\eta} \log(|B|) + 1
\end{align*}
Picking $\eta = \sqrt{\frac{\log(|B|)}{8T\alpha \ln\left(\frac{16|O|^2 T}{\alpha}\right)}}$, we get the theorem. 
\end{proof}

\section{Omitted proof for the regret of the exponential weights update}\label{appendix:a}
\begin{lemma}
The exponential weights update with an estimate $\tilde{u}_t(\cdot)\leq 0$ such that for any $b\in B$ and $t$, $\left\vert \E\left[\tilde{u}_t(b)\right] - (u_t(b)-1)\right\vert\leq \kappa$, achieves expected regret on the form:
\begin{align*}
R(T) \leq~& \frac{\eta}{2} \sum_{t=1}^T \sum_{b\in B} \pi_t(b) \cdot \E\left[\tilde{u}_t(b)^2\right] + \frac{1}{\eta} \log(|B|) + 2\kappa T
\end{align*}
\end{lemma}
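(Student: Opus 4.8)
The plan is to run the classical exponential-weights potential argument and postpone all bookkeeping about the constant shift $-1$ and the bias $\kappa$ to the very last step. First I would pass to unnormalized weights $w_1(b)=1$ and $w_{t+1}(b)=w_t(b)\exp\{\eta\tilde u_t(b)\}$, so that $\pi_t(b)=w_t(b)/W_t$ with $W_t=\sum_{b\in B}w_t(b)$ and $W_1=|B|$. The entire proof is a two-sided estimate of $\log(W_{T+1}/W_1)$. For the lower bound I keep only the comparator term: for any fixed $b^*$, $W_{T+1}\ge w_{T+1}(b^*)=\exp\{\eta\sum_{t=1}^T\tilde u_t(b^*)\}$, so $\log(W_{T+1}/W_1)\ge \eta\sum_{t=1}^T\tilde u_t(b^*)-\log|B|$.

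For the upper bound I would expand the increment one round at a time, writing $\log(W_{t+1}/W_t)=\log\big(\sum_{b\in B}\pi_t(b)\exp\{\eta\tilde u_t(b)\}\big)$. This is exactly where the hypothesis $\tilde u_t(\cdot)\le 0$ is essential: since $\eta>0$, every exponent $\eta\tilde u_t(b)$ is nonpositive, so I may apply the elementary bound $e^x\le 1+x+\tfrac{x^2}{2}$, which is valid for all $x\le 0$ because $1+x+\tfrac{x^2}{2}-e^x$ is nondecreasing and vanishes at $0$. Following this with $\log(1+z)\le z$ gives $\log(W_{t+1}/W_t)\le \eta\sum_b\pi_t(b)\tilde u_t(b)+\tfrac{\eta^2}{2}\sum_b\pi_t(b)\tilde u_t(b)^2$. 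Summing over $t$ and combining with the lower bound yields the deterministic, per-sample-path inequality
\[ \sum_{t=1}^T\tilde u_t(b^*)-\sum_{t=1}^T\sum_{b\in B}\pi_t(b)\tilde u_t(b)\le \frac{1}{\eta}\log|B|+\frac{\eta}{2}\sum_{t=1}^T\sum_{b\in B}\pi_t(b)\tilde u_t(b)^2. \]

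The delicate step, which I expect to be the only real obstacle, is taking expectations and converting the $\tilde u_t$ terms into true utilities. Because $\pi_t$ is $\mathcal F_{t-1}$-measurable while the round-$t$ randomness is independent of the past, conditioning on $\mathcal F_{t-1}$ lets me replace each $\tilde u_t(b)$ by $\E[\tilde u_t(b)\mid\mathcal F_{t-1}]$, which by hypothesis lies within $\kappa$ of $u_t(b)-1$. Applied to the comparator term this costs at most $\kappa T$; applied to the algorithm's term $\sum_b\pi_t(b)\E[\tilde u_t(b)\mid\mathcal F_{t-1}]$ it again costs at most $\kappa T$. Here one must track signs carefully: using $\sum_b\pi_t(b)=1$, the common $-1$ shift contributes $-T$ to the comparator and $+T$ to the algorithm term and hence cancels, while the two $\pm\kappa$ deviations accumulate to the additive $2\kappa T$.

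Finally I would note that $\E[u_t(b_t)\mid\mathcal F_{t-1}]=\sum_b\pi_t(b)u_t(b)$, so after these substitutions the expectation of the left-hand side becomes precisely $\E\big[\sum_{t=1}^T(u_t(b^*)-u_t(b_t))\big]-2\kappa T$, whereas the right-hand side becomes $\tfrac{1}{\eta}\log|B|+\tfrac{\eta}{2}\sum_{t}\sum_{b}\pi_t(b)\E[\tilde u_t(b)^2]$ (the interchange being justified again by conditioning). Rearranging and taking the supremum over $b^*$ gives the stated regret bound.
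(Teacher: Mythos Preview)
Your proposal is correct and follows essentially the same route as the paper's proof: the standard exponential-weights potential argument using $e^x\le 1+x+\tfrac{x^2}{2}$ for $x\le 0$, followed by taking expectations and absorbing the $-1$ shift (which cancels between comparator and algorithm terms) and the bias $\kappa$ (which contributes $\kappa T$ on each side, for a total of $2\kappa T$). Your write-up is in fact more careful than the paper's about the filtration/conditioning step and the sign-tracking of the shift and bias, but the underlying argument is identical.
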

\begin{proof}
Following the standard analysis of the exponential weight updates algorithm \cite{AHK12} and the fact that $\forall x \leq 0$, $e^x \leq 1 + x + \frac{x^2}{2}$ as well as let $b^* = \argmax_{b\in B} \E\left[\sum_{t=1}^T u_t(b)\right]$, we have 
\begin{align*}
 \E\left[\sum_{t=1}^{T} \tilde{u}_t(b^*)\right] &\leq \sum_{t=1}^T \sum_{b\in B}\pi_t(b)\E\left[\tilde{u}_t(b)\right] + \frac{\eta}{2} \sum_{t=1}^T \sum_{b\in B} \pi_t(b) \cdot \E\left[\tilde{u}_t(b)^2\right] + \frac{1}{\eta} \log(|B|)\\
& \leq \sum_{t=1}^T \sum_{b\in B}\pi_t(b)(u_t(b) - 1 + \kappa) + \frac{\eta}{2} \sum_{t=1}^T \sum_{b\in B} \pi_t(b) \cdot \E\left[\tilde{u}_t(b)^2\right] + \frac{1}{\eta} \log(|B|)\\
& =\E\left[ \sum_{t=1}^T u_t(b_t)\right] + \frac{\eta}{2} \sum_{t=1}^T \sum_{b\in B} \pi_t(b) \cdot \E\left[\tilde{u}_t(b)^2\right] + \frac{1}{\eta} \log(|B|) + \kappa T - T
\end{align*}

which implies that 
\begin{align*}
R(T) &~=~ \E\left[\sum_{t=1}^T u_t(b^*)\right] - \E\left[ \sum_{t=1}^T u_t(b_t)\right]\leq  \E\left[\sum_{t=1}^{T} \tilde{u}_t(b^*)\right]- \E\left[ \sum_{t=1}^T u_t(b_t)\right] + \kappa T + T\\
&~\leq~ \frac{\eta}{2} \sum_{t=1}^T \sum_{b\in B} \pi_t(b) \cdot \E\left[\tilde{u}_t(b)^2\right] + \frac{1}{\eta} \log(|B|) + 2\kappa T
\end{align*}
\end{proof}

\paragraph{Remark.} Let the estimator $\tilde{u}_t(b)$ be unbiased for any $t$ and any $b\in B$, then the expected regret is
\begin{align*}
R(T) \leq~& \frac{\eta}{2} \sum_{t=1}^T \sum_{b\in B} \pi_t(b) \cdot \E\left[\tilde{u}_t(b)^2\right] + \frac{1}{\eta} \log(|B|)
\end{align*}

\end{appendix}

\end{document}